\documentclass[sn-mathphys]{sn-jnl}

\jyear{2022}%

\theoremstyle{thmstyleone}%
\newtheorem{theorem}{Theorem}
\newtheorem{proposition}[theorem]{Proposition}%

\usepackage{amsmath}
\usepackage{amssymb}
\usepackage{amsfonts}
\usepackage{amsthm}
\usepackage{graphicx}
\usepackage{physics}
\usepackage{subcaption}

\newcommand{\ds}{\displaystyle}

\newcommand{\mbf}{\mathbf}
\newcommand{\jac}{J_{\mbf u} \,}
\newcommand{\bs}{\boldsymbol}

\newcommand{\evs}[1]{\textcolor{black}{#1}}

\newtheorem{lemma}{Lemma}
\newtheorem{corollary}{Corollary}

\DeclareMathOperator{\diag}{diag}

\theoremstyle{thmstyletwo}%
\newtheorem{remark}{Remark}%

\theoremstyle{thmstylethree}%
\newtheorem{definition}{Definition}%

\begin{document}

\title[General conditions for Turing and wave instabilities in reaction-diffusion systems]{General conditions for Turing and wave instabilities in reaction-diffusion systems}


\author[1]{\fnm{Edgardo} \sur{Villar-Sep\'ulveda}}\email{edgardo.villar-sepulveda@bristol.ac.uk}

\author[1]{\fnm{Alan R.} \sur{Champneys}}\email{a.r.champneys@bristol.ac.uk}


\affil[1]{\orgdiv{Engineering Mathematics}, \orgname{University of Bristol}, \orgaddress{\street{Ada Lovelace Building, Tankard's Cl, University Walk}, \city{Bristol}, \postcode{BS8 1TW}, \state{Somerset}, \country{United Kingdom}}}




\abstract{Necessary and sufficient conditions are provided for a diffusion-driven instability of a stable equilibrium of a reaction-diffusion system with $n$ components and a diagonal diffusion matrix. These can be either Turing or wave instabilities. Known necessary and sufficient conditions are reproduced for there to exist diffusion rates that cause a Turing bifurcation of a stable homogeneous state in the absence of diffusion. The method of proof here though, which is based on a study of dispersion relations in the contrasting limits in which the wavenumber tends to zero and to $\infty$, gives a constructive method for choosing diffusion constants. The results are illustrated on a model for the dispersion of malaria, a 3-component FitzHugh-Nagumo-like model proposed to study excitable wavetrains, and for two different coupled Brusselator systems with 4 components.}

\keywords{Reaction-diffusion; Diffusion-driven instability; Spatio-temporal oscillations; Turing instability; wave instability}


\pacs[MSC Classification]{35K57,92E20,15A18}

\maketitle

\section{Introduction}\label{sec1}
    The formation of patterns and structures in a chemically reacting medium due to differential rates of diffusion of an otherwise stable equilibrium was first proposed in the last published work of Alan Turing \cite{Turing1952}. It seems Turing was on the verge of explaining how this instability underlies the formation of structures in plants such as the daisy; see \cite{dawturing}. The recent issue of {\em Philosophical Transactions} \cite{PhilTrans} highlights the state of the art in how the patterns that emerge from a Turing instability have been used to explain many phenomena in biology. There is also a parallel literature in the wider theory of pattern formation across many different length scales of physics from micromechanics to spatial ecology; see also the other recent volumes \cite{Meron,Woodsissue} and references therein.
    
    Nevertheless, although the conditions for having a Turing bifurcation have been generalized to systems with $n$ components, sometimes it remains unclear whether a system has the opportunity to develop these spatial patterns. The most definitive results so far are due to Satnoianu, Menzinger, and Maini \cite{Sat}. They used the definition of a so-called s-stable Jacobian matrix to determine necessary and sufficient conditions for a Turing bifurcation in a wide class of reaction-diffusion systems with an arbitrary number of components. Their approach, which relies on the Routh-Hurwitz criterion, does not provide precise details on the form of the dispersion relations upon adding diffusion, nor definitive conditions on which parameters might lead to a bifurcation.
    
    We note also the work \cite{Kilka} that extended some of the results in \cite{Sat} to systems that have components with zero diffusion coefficients, and \cite{Hoang2013} that shows that linear instability implies a nonlinear bifurcation under certain technical assumptions.  Finally, we mention the work of Kuznetsov and Polezhaev \cite{kuznetsov2020inspiration} who find new ways of obtaining Turing bifurcations by considering what happens to dispersion relations in the limit that some diffusion coefficients tend to zero. Their results, which are mostly restricted to 3-component systems, provide a key motivation for the method introduced in this paper.
    
    In all of these works, though, it is implied that the critical eigenvalue at the Turing bifurcation is real. The case of a complex critical eigenvalue, leading to a so-called wave bifurcation (or finite wavenumber Hopf bifurcation) has received less attention. \evs{A number of authors, e.g.~\cite{Korvasova} have shown that such an instability requires at least a three-species model.} A notable work is that by Anma, Sakamoto, and Yoneda \cite{Anma} that gives sufficient conditions for such a bifurcation for 3-component systems. See section \ref{sec:Hopf} below for a recapitulation of their result.
    
    Given the recent focus on synthetic biology, several authors have sought to understand the design principles of interacting chemical components necessary to achieve Turing instability, see e.g.~\cite{Sharpe,Shi}. Scholes {\em et al} \cite{Scholes} have performed a comprehensive search of the linear properties of all possible network topologies of two or three interacting chemical species to see how common and robust Turing instability is.  They sample both the functional forms of interaction and the values of diffusion constants and other parameter values. Their conclusion is Turing patterns are common but not robust in some sense. Note that they do not consider conditions for bifurcations (e.g.~upon increasing diffusion rates) nor the possibility of wave bifurcations.
    
    A similar approach is taken by Haas and Goldstein \cite{Haas}, except they use a sample of values of the Jacobian matrices of linearised kinetics in reaction-diffusion systems with $n$ components up to $n=6$. They argue that the minimum diffusion threshold required for instability decreases with $n$.
    
    In the present paper, we develop a general approach by studying dispersion relations of $n$-component reaction-diffusion systems linearised \evs{around} a homogeneous steady state.  \evs{We suppose that a researcher is studying a reaction-diffusion system and asks the question whether, for the given kinetic terms, there are values of the diffusion coefficients that lead to either a Turing or a wave bifurcation. In order to answer this question, we take various limits, which strictly speaking are likely to go outside the bounds for which the model is strictly valid, for example, that certain diffusion coefficients vanish or are sufficiently large. This leads to conditions on principal sub-matrices of the homogeneous system, for which we can make precise statements about the existence of either kind of bifurcation for a finite range of values of the diffusion rates.}
    
    In what follows, we consider a system of $n$ reaction-diffusion equations in time, $t$, and space $\mbf x \in \mathbb R^m$, where each state variable $u_i(\mbf x, t)$ $i=1,\ldots,n$ satisfies the system of partial differential equations (PDEs)
    \begin{align}
    	\mbf u_t&=\mbf f(\mbf u) + \mathbb D \, \nabla^2 \mbf u. \label{generalsystem}
    \end{align}
    Here, $\mbf u=\left(u_1,u_2,\ldots ,u_n\right)^\intercal$ is a vector of variables depending on $t$ and $\mbf x$,
    \begin{align*}
    	\mathbb D&=\diag\left(D_1,D_2,\ldots ,D_n\right)
    \end{align*}
    where $D_i\geq 0$ is a diffusion constant, and $\mbf f=\left(f_1,f_2,\ldots ,f_n\right)^\intercal$ is a vector field assumed to be sufficiently smooth (at least of class $\mathcal C^1$).
    \begin{remark} 
    	\begin{enumerate}
            \item For a well-posed partial differential equation, \evs{it would be necessary to specify a domain $\Omega \subset \mathbb{R}^n$, subject to suitable boundary conditions. However, in order to be general, we shall use the approach commonly taken in reaction-diffusion systems, of considering the limit $\Omega \to \mathbb{R}^n$ such that solutions remain bounded at infinity. Then, it is well known that the discrete spectrum of $\nabla^2$ subject to, say, Neumann boundary conditions, tends to a continuous spectrum. In what follows, for definiteness, we shall assume that all domains constructed in this way are the limits of hypercubes so that the spectrum we are interested in is spanned by eigenfunctions of the form $u_i= \cos(k_ix_i)$, $k_i \geq 0$ of the operator $\nabla^2$ corresponding to the negative eigenvalue $- k^2 = - \abs{\mbf k}^2$ where $\mbf{k} = \left(k_1, k_2, \ldots, k_m\right)^\intercal$ is a vector of {\em wavenumbers}. This enables the derivation of dispersion relations of temporal eigenvalues $\lambda_i(k)$; see for example \cite{Murray2001}.} On a finite domain with typical boundary conditions, the dispersion relation becomes discretised, and we expect to see a sequence of symmetry-breaking bifurcations that accumulate as the length of the domain increases at the parameter values at which an infinite-domain Turing bifurcation occurs (see e.g. \cite{Victor3}).
    		\item The form of \eqref{generalsystem} is chosen for convenience. Under suitable additional hypotheses, the results in this paper can be easily extended to a broader class of PDEs: systems with non-diagonal diffusion matrices, different temporal timescales, and more general spatial operators. See section \ref{sec:discuss} for a discussion.
    	\end{enumerate}
    \end{remark}
    Let $\mbf P$ be a homogeneous steady state of \eqref{generalsystem}. That is, $\mbf{f}(\mbf{P}) = 0$. We will denote the Jacobian matrix of this system at $\mbf P$ by	
    \begin{align*}
    	\jac \mbf f(\mbf P) = \left(\frac{\partial f_i}{\partial u_j}(\mbf P)\right)_{1 \leq i, j \leq n}.
    \end{align*}
    This implies that the linear part of the full system at $\mbf P$ is given by $\jac \mbf f(\mbf P) + \mathbb D \, \nabla^2$. 
    
    Next, as we want to analyze Turing and wave instabilities, we will be interested in the space of eigenfunctions of the operator $\nabla^2$ with negative eigenvalues, i.e., vector functions $\mbf u$ that satisfy the following equation:
    \begin{align*}
    	\nabla^2 \mbf u + \abs{\mbf k}^2 \, \mbf u = \mbf 0,
    \end{align*}
    where $\mbf k = \left(k_1, k_2, \ldots, k_m\right)^\intercal$.  This implies that the Jacobian matrix of our system at $\mbf P$, restricted to this space, is given by $\jac \mbf f(\mbf P) - \abs{\mbf k}^2 \, \mathbb D$.
    
    In particular, considering $\abs{\mbf k} = k$ as a real number, a {\em diffusion-driven instability} arises if $\mbf P$ is stable in the absence of diffusion, yet becomes unstable in the presence of a specific real wavenumber $k$ and a diffusion matrix $\mathbb{D}$. In particular, we will assume that all eigenvalues of $\jac \mbf f(\mbf P)$ have negative real part, then we shall seek a diffusion matrix $\mathbb D$ and wavenumber $k\in \mathbb R$ such that $\jac \mbf f(\mbf P)-k^2 \, \mathbb D$ has eigenvalues in the right-half of the complex plane. In particular, let $\lambda(k)$ be an eigenvalue of $\jac \mbf f(\mbf P)-k^2 \, \mathbb D$  for which  $\mbox{Re}(\lambda(k)) >0$ for some $k>0$. If we allow the diffusion rates (entries of $\mathbb{D}$) to act as continuous parameters, then we are interested in the transition depicted in Fig.~\ref{fig:dispersionrelations}. In panel (a), we have no diffusion-driven instability since the system is stable for every value of  $k$. Panel (b) depicts \evs{a {\em bifurcation point}, which we define to occur at a critical value of a parameter (either diffusion coefficient or otherwise) at which there is a quadratic tangency of the dispersion curve $\mbox{Re}(\lambda(k))$ at $\mbox{Re}(\lambda(k))=0$.} \evs{This} could be a Turing bifurcation or a wave one depending on whether $\lambda\left(k^*\right)$ is real or purely imaginary, respectively. Finally, panel (c) \evs{depicts} that the instability patterns that exist correspond to perturbations to $\mbf P$ that have wavenumbers $k_1<k<k_2$.
    \begin{figure}
    	\centering
    	\begin{subfigure}[b]{0.32\textwidth}
    		\centering
    		\includegraphics[width=\textwidth]{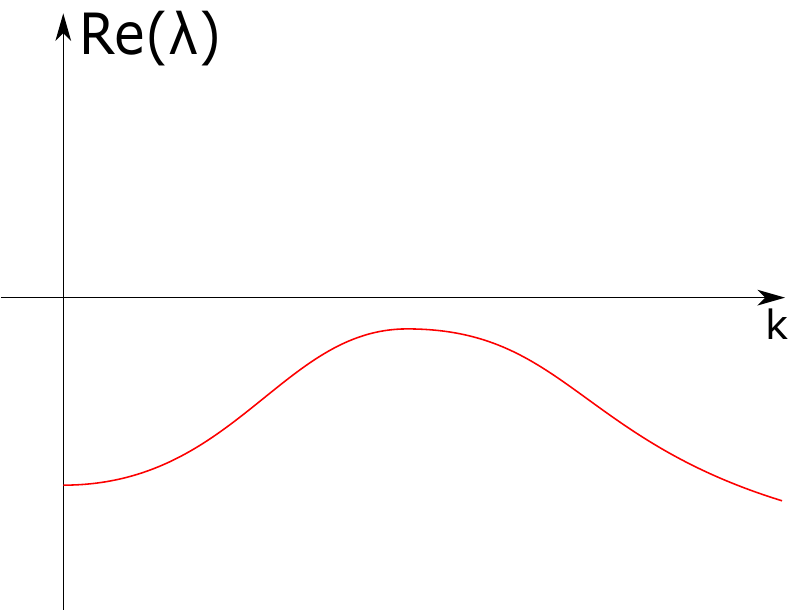}
    		\caption{ }
    	\end{subfigure}
    	\hfill
    	\begin{subfigure}[b]{0.32\textwidth}
    		\centering
    		\includegraphics[width=\textwidth]{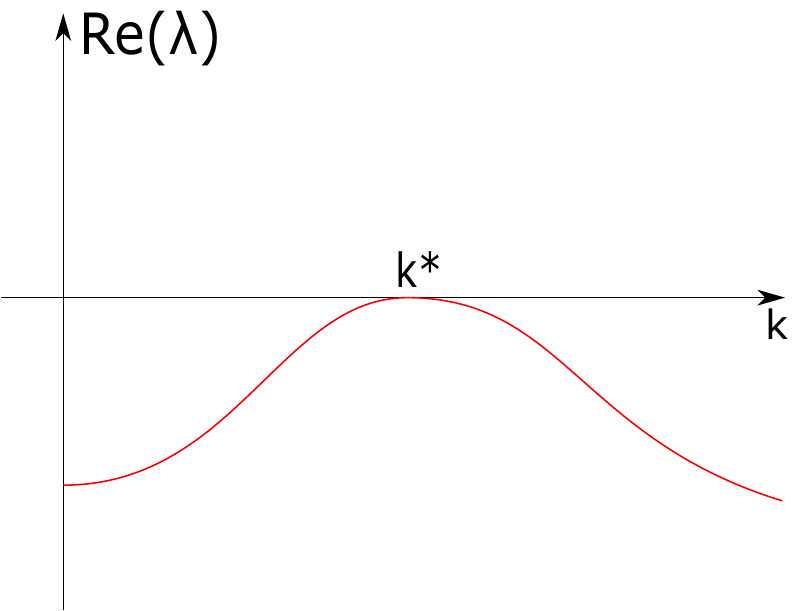} \caption{ }
    	\end{subfigure}
    	\hfill
    	\begin{subfigure}[b]{0.32\textwidth}
    		\centering
    		\includegraphics[width=\textwidth]{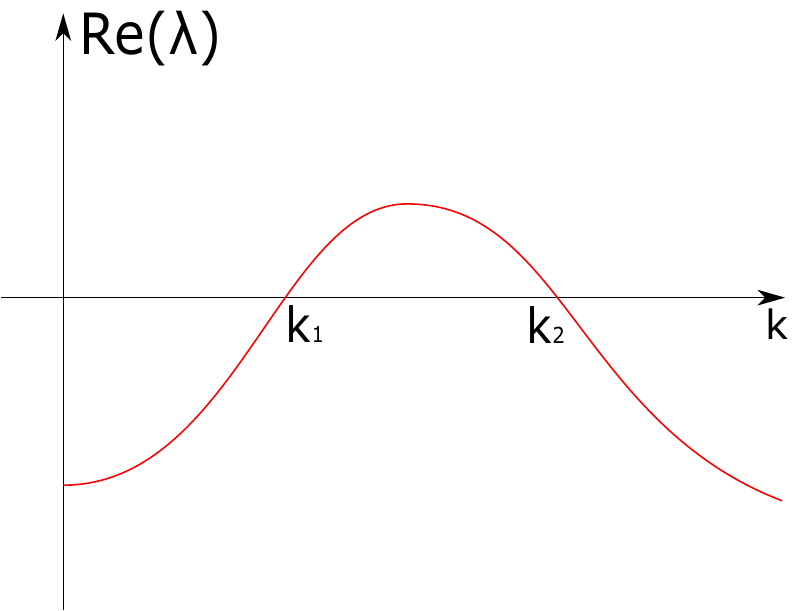}
    		\caption{ }
    	\end{subfigure}
    	\caption{Sketches of {\em dispersion relations}, that is the locus of the real part of an eigenvalue $\lambda(k)$, for different parameter values. Upon parameter variation we have: (a) absence of instability; (b) a bifurcation point giving onset to instability; (c) unstable patterns exist for wavenumbers $k_1<k<k_2$. For simplicity we depict only the dispersion curve of a mode for which $\max_k(\mbox{Re}(\lambda_i(k)))$ is largest.}
    	\label{fig:dispersionrelations}
    \end{figure}
    
    The rest of this paper is outlined as follows. In \evs{Section} \ref{sec:prelim}, we give some preliminary results that are used to prove the main theorems of the paper, ending with a simple proof of a simple part of the main result in \cite{Sat}, which will be used later. Section \ref{sec:Hopf} considers conditions for wave instabilities, confirming that the minimal number of components for such instabilities to occur is $n = 3$, and gives sufficient conditions for this case. Section \ref{sec:general} then provides conditions for both Turing and wave instabilities to occur in general $n$-component models. Section \ref{sec:examples} contains examples that illustrate the theory developed throughout the paper and \evs{Section} \ref{sec:discuss} contains concluding remarks.

\section{Preliminary results}\label{sec:prelim}
    In this section, we introduce some general results that will form the core methodology of this work, see specifically \evs{lemmas} \ref{positiveeig} and \ref{nonpositiveeig} below. We build up these results via some elementary \evs{lemmas} that are largely already known from literature on matrix stability, see e.g.~ \cite{Cross,Sat,Grantmacher} but will prove useful in establishing the setting and notation of our theory.
    
    We begin with the following elementary result which will apply to characteristic polynomials of matrices. 
    \begin{lemma} \label{positivitylemma}
    	Let $\lambda_1, \lambda_2, \ldots, \lambda_m\in \mathbb C\setminus \{0\}$ be $m$ non-zero complex numbers, which are roots of a polynomial equation with real coefficients. Then,
    	\begin{align}
    		(-1)^m \prod_{j = 1}^m \lambda_j<0 \label{negativeproduct}
    	\end{align}
    	if and only if there exists an odd number of indices $q\in \{1,2,\ldots ,m\}$ such that $\lambda_q \in \mathbb R$ and $\lambda_q>0$.
    \end{lemma}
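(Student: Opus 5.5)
The proof is a direct sign count. I read the hypothesis as saying that $\lambda_1,\ldots,\lambda_m$ form the complete list of roots, repeated according to multiplicity, of a degree-$m$ polynomial with real coefficients. This is how the lemma is applied to characteristic polynomials. The essential consequence is that the non-real members of the list are closed under complex conjugation, with matching multiplicities.

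First I would rewrite the left-hand side of \eqref{negativeproduct} as
\begin{align*}
(-1)^m \prod_{j=1}^m \lambda_j = \prod_{j=1}^m (-\lambda_j),
\end{align*}
so that the question becomes one about the sign of a product of the nonzero numbers $-\lambda_j$. Since the polynomial has real coefficients, its non-real roots come in conjugate pairs $\lambda, \bar\lambda$ of equal multiplicity. I would therefore partition the index set $\{1,\ldots,m\}$ into three classes:
\begin{itemize}
\item pairs of indices carrying conjugate non-real roots;
\item indices with $\lambda_q \in \mathbb R$, $\lambda_q<0$;
\item indices with $\lambda_q\in\mathbb R$, $\lambda_q>0$.
\end{itemize}
No index has $\lambda_q=0$, by assumption.

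Next I would evaluate the sign of each factor class by class. A conjugate pair contributes $(-\lambda)(-\bar\lambda)=\abs{\lambda}^2>0$. A negative real root contributes $-\lambda_q>0$. A positive real root contributes $-\lambda_q<0$. Hence the product is a positive real number multiplied by $(-1)^{p}$, where $p$ is the number of indices $q$ with $\lambda_q$ real and positive. The product is therefore negative if and only if $p$ is odd, which proves both directions of the equivalence at once.

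The only point needing care is the bookkeeping in the pairing step. It requires the list $\lambda_1,\ldots,\lambda_m$ to be conjugation-closed with multiplicities, which is why I read ``roots of a polynomial'' as ``all roots, counted with multiplicity''. Correspondingly, the odd count of positive real roots must be understood as a count of indices, that is, with multiplicity. I would state this interpretation explicitly at the start of the proof. With it in place there is no real obstacle; without it the statement can fail, for example for the single non-real number $\lambda_1=i$.
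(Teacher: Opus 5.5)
Your proof is correct and takes essentially the paper's approach: split the list into conjugate pairs, negative reals and positive reals, then read off the sign. Writing $(-1)^m\prod_j\lambda_j=\prod_j(-\lambda_j)$ actually handles the factor $(-1)^m$ more cleanly than the paper, which leaves the parity step $m\equiv \#\{\text{positive}\}+\#\{\text{negative}\} \pmod 2$ implicit.

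One correction to your closing aside: $\lambda_1=i$ is not a counterexample. Since $-i$ is not a negative real and there is no positive real root, both sides are false and the equivalence holds. A list such as $\lambda_1=\lambda_2=i$ (product $-1<0$ with no positive real root) or $\lambda_1=i,\ \lambda_2=1$ (product $i$ with one positive real root) is what shows that the conjugation-closed reading is needed.
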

    
    \begin{proof}
        \evs{First, as $\lambda_1, \lambda_2, \ldots, \lambda_m$ are roots of a polynomial equation with real coefficients, then all complex non-real numbers come in complex-conjugate pairs. With this, their product can be decomposed as}
        \begin{align}
    		\prod_{j=1}^m \lambda_j = \prod_{i=1}^{2p} \lambda_i \prod_{\ell=2p+1}^r \lambda_\ell \prod_{s=r+1}^m \lambda_s, \label{eigvalprod}
    	\end{align}
    	\evs{where the first product on the right-hand side of \eqref{eigvalprod} has all the complex non-real eigenvalues, whilst the second and third ones are the real positive and negative eigenvalues, respectively.}
    	
    	\evs{The first two products are positive, so the sign of the whole expression is given by the sign of the third term, which depends on the parity of positive roots, as stated.}
    \end{proof}
    
    Next, we shall introduce a notation for the {\em principal submatrices} and {\em principal minors} of a square matrix $A$. Note that the notation used in \cite{Cross} uses the complement of the set of indices used here.
    \begin{definition}\label{def:minors}
    	For each integer $1\leq k\leq n$, we define a {\bf word} to be a $k$-tuple of non-repeating integers $\left(i_1,i_2,\ldots ,i_k\right)$ with $1\leq i_j\leq n$ for each $j=1,2\ldots ,k$, and we shall call it an {\bf increasing word} if $i_1<i_2<\ldots <i_k$. For each increasing word, we refer to the corresponding {\bf principal submatrix} $S_{i_1 i_2 \ldots i_k}$ of $\jac \mbf f(\mbf P)$ as the $k\times k$ matrix that takes into account only its rows and columns with indices $i_1, i_2, \ldots, i_k$. We also denote by $M_{i_1 i_2 \ldots i_k}$ the {\bf principal minor} of $\jac \mbf f(\mbf P)$ which is the determinant of $S_{i_1 i_2 \ldots i_k}$. 
    \end{definition}
    With this, note that Lemma \ref{positivitylemma} can be used directly to prove the following:
    \begin{lemma} \label{wonderfulemma}
    	There exist an integer $1\leq k\leq n$ and an increasing word $\left(i_1,i_2\ldots i_k\right)$ such that $(-1)^k M_{i_1i_2\ldots i_k}<0$ if and only if $S_{i_1i_2\ldots i_k}$ has an odd number of positive real eigenvalues.
    \end{lemma}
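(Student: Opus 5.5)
The statement compares two existence claims: some increasing word has $(-1)^k M_{i_1\ldots i_k}<0$, and some principal submatrix $S_{i_1\ldots i_k}$ has an odd number of positive real eigenvalues. The word is not assumed to be the same on both sides. I will prove each implication, using Lemma \ref{positivitylemma} together with the relation between the coefficients of a characteristic polynomial and the principal minors.

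\emph{Forward direction.} Suppose $(-1)^k M_{i_1\ldots i_k}<0$ for some increasing word. Then $M_{i_1\ldots i_k}=\det S_{i_1\ldots i_k}\neq 0$, so all $k$ eigenvalues $\lambda_1,\ldots,\lambda_k$ of $S_{i_1\ldots i_k}$ are non-zero. They are the roots of its characteristic polynomial, which has real coefficients. Since $M_{i_1\ldots i_k}=\prod_j\lambda_j$, Lemma \ref{positivitylemma} applies directly. It shows that $S_{i_1\ldots i_k}$ has an odd number of positive real eigenvalues, and this word serves as the witness.

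\emph{Reverse direction.} Suppose $S=S_{i_1\ldots i_k}$ has an odd number of positive real eigenvalues. If $S$ is nonsingular, Lemma \ref{positivitylemma} gives $(-1)^k M_{i_1\ldots i_k}<0$ and we are done. The real work is the singular case, where the witnessing minor must be found among smaller principal minors.

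Suppose $0$ is an eigenvalue of $S$ of algebraic multiplicity $r\geq 1$, and set $s=k-r$. Then $s\geq 1$, because $S$ has at least one positive eigenvalue. Write the characteristic polynomial as
\[
\det(\lambda I-S)=\sum_{j=0}^{k}(-1)^{j}E_j\,\lambda^{k-j},
\]
where $E_j$ is the sum of all principal minors of $S$ of order $j$. By Vieta's formulas, $E_j$ is also the $j$-th elementary symmetric function of the eigenvalues of $S$. Since exactly $r$ eigenvalues vanish, $E_j=0$ for $j>s$, and $E_s=\prod_{\lambda_j\neq 0}\lambda_j$.

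The $s$ non-zero eigenvalues are closed under complex conjugation. They also contain the same odd number of positive real eigenvalues as $S$. Applying Lemma \ref{positivitylemma} to them gives
\[
(-1)^s E_s=(-1)^s\prod_{\lambda_j\neq0}\lambda_j<0 .
\]
Each principal minor of $S$ of order $s$ is a principal minor $M_{w}$ of $\jac\mbf f(\mbf P)$ for some increasing sub-word $w$ of $(i_1,\ldots,i_k)$ of length $s$. Hence $\sum_{w}(-1)^s M_w<0$, and at least one term satisfies $(-1)^s M_w<0$, which proves the claim.

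The only delicate point is this singular case. The key step is to identify the lowest-order non-vanishing coefficient of the characteristic polynomial both as a product of the non-zero eigenvalues and as a sum of principal minors.
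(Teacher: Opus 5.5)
Your argument is correct. For nonsingular $S_{i_1\ldots i_k}$ it coincides with the paper's proof. That proof is the one-line observation that $(-1)^k M_{i_1\ldots i_k}=(-1)^k\prod_j\lambda_j$ over the eigenvalues of $S_{i_1\ldots i_k}$, followed by Lemma~\ref{positivitylemma}, with the same word on both sides.

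The difference is that the paper never treats the case where $S_{i_1\ldots i_k}$ is singular. Lemma~\ref{positivitylemma} requires non-zero numbers, and if the statement is read word by word, the reverse implication is actually false there. For example, a principal submatrix equal to $\diag(1,0)$ has exactly one positive eigenvalue but zero determinant. You avoid this by reading the statement as an equivalence between two existence claims. You then find the witness in the lowest-order non-vanishing coefficient $E_s$ of the characteristic polynomial of $S$, which is both $\prod_{\lambda_j\neq 0}\lambda_j$ and a sum of principal minors of order $s$. This supplies exactly the step the paper omits.

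Each version has a cost. The paper's version gives a same-word characterisation, but only under the extra hypothesis $M_{i_1\ldots i_k}\neq 0$. Yours holds with no nonsingularity assumption. In exchange, you only know that the witnessing minor is some principal minor of order $k-r$, indexed by a sub-word of the original, where $r$ is the algebraic multiplicity of the eigenvalue $0$.
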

    \begin{proof}
    	Note that the condition $(-1)^k M_{i_1i_2\ldots i_k}<0$ is equivalent to saying that $(-1)^k$ times the $k$ eigenvalues of $S_{i_1i_2\ldots i_k}$ is negative. Hence, owing to Lemma \ref{positivitylemma}, this is true if and only if there are an odd number of positive real eigenvalues of $S_{i_1 i_2 \ldots i_k}$.
    \end{proof}
    Next, we state a result that will be implicitly assumed in what follows under the requirement that $\mbf P$ is a stable equilibrium point in the absence of diffusion.
    \begin{lemma} \label{necessarystability}
    	Let $\mbf P$ be a homogeneous steady state of \eqref{generalsystem} in the absence of diffusion. If $\mbf P$ is stable, then
    	\begin{align*}
    		(-1)^k\sum_{1\leq i_1<i_2<\ldots <i_k\leq n} M_{i_1i_2\ldots i_k}> 0,
    	\end{align*}
    	for all $k=1,2,\ldots ,n$.
    \end{lemma}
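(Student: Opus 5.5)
The plan is to go through the characteristic polynomial of $J := \jac \mbf f(\mbf P)$ and combine Vieta's formulas with the classical expansion of its coefficients in terms of principal minors. Write
\begin{align*}
	p(\lambda) = \det\left(\lambda I - J\right) = \lambda^n + c_1 \lambda^{n-1} + \cdots + c_{n-1}\lambda + c_n .
\end{align*}
The first step is the standard identity
\begin{align*}
	c_k = (-1)^k \sum_{1\leq i_1<i_2<\ldots <i_k\leq n} M_{i_1i_2\ldots i_k}, \qquad k=1,\ldots ,n,
\end{align*}
so that $c_k$ is $(-1)^k$ times the sum of all $k\times k$ principal minors of $J$. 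To justify it, expand $\det(\lambda I - J)$ multilinearly in the columns. Each column is written as $\lambda e_j - J e_j$. Choosing the $-Je_j$ part in exactly the columns of an index set $\{i_1,\ldots ,i_k\}$ and $\lambda e_j$ in the remaining ones contributes $\lambda^{n-k}\det(-S_{i_1\ldots i_k}) = \lambda^{n-k}(-1)^k M_{i_1\ldots i_k}$. With this identity, the claim reduces to showing that $c_k>0$ for all $k$.

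The second step uses stability. Stability of $\mbf P$ without diffusion means every eigenvalue of $J$ has negative real part. (I am using the interpretation the paper states explicitly just before, namely that all eigenvalues of $\jac \mbf f(\mbf P)$ have negative real part.) Since $p$ has real coefficients, it factors over $\mathbb R$ as
\begin{align*}
	p(\lambda) = \prod_{\ell}(\lambda + a_\ell)\prod_{s}\left(\lambda^2 + 2\alpha_s\lambda + \alpha_s^2+\beta_s^2\right),
\end{align*}
where $a_\ell>0$ comes from the real eigenvalues $-a_\ell$, and $\alpha_s>0$ comes from the conjugate pairs $-\alpha_s\pm i\beta_s$. Every factor is a monic polynomial with strictly positive coefficients. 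A product of such polynomials again has all coefficients strictly positive, because each coefficient of the product is a sum of products of positive numbers and at least one such term appears in every degree. Hence $c_k>0$ for $k=1,\ldots ,n$. This is the well-known necessary part of the Routh--Hurwitz criterion, and combined with the first step it gives exactly the inequality in the statement.

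I expect the only point needing care to be the combinatorial identity for $c_k$, specifically getting the sign $(-1)^k$ right. An equivalent route, which also fits the paper's earlier style, is to note that $c_k = (-1)^k e_k(\lambda_1,\ldots ,\lambda_n)$, where $e_k$ is the elementary symmetric polynomial in the eigenvalues. One then substitutes $\mu_j = -\lambda_j$: the $\mu_j$ have positive real parts and occur in conjugate pairs, which gives $e_k(\mu)>0$ by the same factorization argument. For $k=n$ this is precisely the computation already made in Lemma \ref{positivitylemma}, which here has no positive real roots.
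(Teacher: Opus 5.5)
Your proof is correct and follows the same route as the paper: write the coefficients of the characteristic polynomial as $(-1)^k$ times the sums of the $k\times k$ principal minors, then use the fact that a real polynomial whose roots all lie in the open left half-plane has strictly positive coefficients. The only difference is that you prove both ingredients yourself rather than citing them. You use the multilinear column expansion in place of the determinant-of-a-sum formula \eqref{determinantofsum}, and the factorization into real linear and quadratic factors in place of the Routh--Hurwitz criterion, which makes your argument self-contained.
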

    \begin{proof}
    	Based on a standard result on the determinant of sums of matrices \cite{Marcus1990}, we know that for any pair of complex-valued $n\times n$ matrices, the determinant of $A+B$ can be written as
    	\begin{align}
    		\det(A + B) = \sum_{r = 0}^n \sum_{\bs \alpha, \bs \beta} (-1)^{s(\bs \alpha) + s(\bs \beta)} \det(A[\bs \alpha \vert \bs \beta]) \, \det(B(\bs \alpha \vert \bs \beta)), \label{determinantofsum}
    	\end{align}
    	where the inner sum is over all strictly increasing words $\bs \alpha, \bs \beta$ of length $r$; $A[\bs \alpha \vert \bs \beta]$ is the $r\times r$-square submatrix of $A$ formed of the rows in $\bs \alpha$ and columns in $\bs \beta$; $B(\bs \alpha \vert \bs \beta)$ is the $(n - r)$-square submatrix of $B$ formed of the rows complementary to $\bs \alpha$ and columns complementary to $\bs \beta$; and $s(\bs \alpha)$ (respectively, $s(\bs \beta)$) is the sum of the integers in $\bs \alpha$ (respectively, $\bs \beta$). In particular, when $r = 0$ this summand is equal to $\det(B)$ and, when $r = n$, it is $\det(A)$.
    	
    	Next, by definition, the characteristic polynomial of $\jac \mbf f(\mbf P)$ is given by
    	\begin{align*}
    		\det(\jac \mbf f(\mbf P)-\lambda I) = \sum_{m = 0}^n (-1)^m \lambda^m \sum_{1\leq i_1 < \ldots < i_{n - m}\leq n} M_{i_1i_2\ldots i_{n - m}}
    	\end{align*}
    	The eigenvalues of $\jac \mbf f(\mbf P)$ can be obtained by equating this polynomial to zero, which can be multiplied by $(-1)^{n}$ to get
    	\begin{align*}
    		\sum_{m = 0}^n (-1)^{n - m} \, \lambda^m \sum_{1 \leq i_1 < \ldots < i_{n - m}\leq n} M_{i_1i_2\ldots i_{n-m}}=0.
    	\end{align*}
    	Now, by the Routh-Hurwitz criterion \cite{Routh1877,Hurwitz1895}, a necessary condition for all the roots of this equation to lie in the left-half plane \evs{is}
    	\begin{align*}
    		(-1)^{k} \sum_{1\leq i_1<i_2<\ldots <i_k \leq n} M_{i_1i_2\ldots i_k} >0
    	\end{align*}
    	for each integer $1\leq k\leq n$, which concludes the proof.
    \end{proof}
    
    The following result \evs{provides} a useful characterisation of the characteristic polynomial in the presence of diffusion. 
    \begin{lemma} \label{detlemma}
    	Let $\mu\in \mathbb R$. Then
    	\begin{align}
    		\det\left(\jac \mbf f(\mbf P) - \mu \, \mathbb D\right) &= b_n \, \mu^n + b_{n - 1} \, \mu^{n - 1} + b_{n - 2} \, \mu^{n - 2} + \ldots + b_1 \, \mu + b_0, \label{eq:mupolynomial}
    	\end{align}
    	where
    	\begin{align*}
    		b_{m} &= (-1)^m \left(\sum_{1 \leq i_1 < \ldots < i_{n - m}\leq n} M_{i_1\ldots i_{n - m}} \prod_{\substack{j = 1 \\ j\not \in \{i_1, \ldots, i_{n - m}\}  }}^n D_j \right), \quad \text{for } 1\leq m< n,
    	\end{align*}
    	$b_0= \det(\jac \mbf f(\mbf P))$, and $b_n=(-1)^n \prod_{j=1}^n D_j$.
    \end{lemma}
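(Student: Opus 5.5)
The identity follows by applying formula \eqref{determinantofsum} from the proof of Lemma~\ref{necessarystability} with $A = \jac \mbf f(\mbf P)$ and $B = -\mu \, \mathbb D$. The key feature is that $B$ is diagonal. For a complementary pair of index sets $\bs\alpha^c, \bs\beta^c$, the minor $\det(B(\bs \alpha \vert \bs \beta))$ vanishes unless $\bs\alpha^c = \bs\beta^c$, that is, unless $\bs\alpha = \bs\beta$. Indeed, if the row set and column set differ, some row of the submatrix corresponds to an index not among the columns, so that row is entirely zero. Hence only the terms with $\bs\alpha = \bs\beta$ survive. In those terms $A[\bs\alpha\vert\bs\alpha]$ is a principal submatrix, so $\det(A[\bs\alpha\vert\bs\alpha]) = M_{\bs\alpha}$, and the sign $(-1)^{s(\bs\alpha)+s(\bs\alpha)} = (-1)^{2s(\bs\alpha)} = 1$.

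Next we evaluate the diagonal minor. For $\bs\alpha$ of length $r$, the matrix $B(\bs\alpha\vert\bs\alpha)$ is the diagonal matrix with entries $-\mu D_j$ for $j\notin\bs\alpha$. Its determinant is therefore
\[
(-\mu)^{n-r}\prod_{j\notin\bs\alpha} D_j .
\]
Setting $m = n - r$ and writing $\bs\alpha = (i_1,\ldots,i_{n-m})$, we collect powers of $\mu$. The coefficient of $\mu^m$ is
\[
(-1)^m\sum_{1\le i_1<\cdots<i_{n-m}\le n} M_{i_1\ldots i_{n-m}}\prod_{j\notin\{i_1,\ldots,i_{n-m}\}} D_j ,
\]
which is exactly $b_m$ for $1\le m<n$.

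The two extreme terms are handled using the conventions stated after \eqref{determinantofsum}. For $m=0$ ($r=n$) the summand is $\det(A)=\det(\jac \mbf f(\mbf P))$, which gives $b_0$. For $m=n$ ($r=0$) the summand is $\det(B) = (-\mu)^n\prod_j D_j$, which gives $b_n=(-1)^n\prod_j D_j$.

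The only step needing care is the vanishing of the off-diagonal terms with $\bs\alpha\ne\bs\beta$, which rests on the zero-row argument above. Alternatively, the whole identity follows from multilinearity of the determinant in its columns. Write column $j$ of $A-\mu\mathbb D$ as $a_j - \mu D_j e_j$ and expand. Choosing $-\mu D_j e_j$ for the columns $j$ in a set $T$ and then applying Laplace expansion along those columns leaves $\prod_{j\in T}(-\mu D_j)$ times the principal minor of $A$ on the complement of $T$. This gives the same result directly and could serve as a self-contained proof if one prefers not to rely on \eqref{determinantofsum}.
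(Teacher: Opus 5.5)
Your proposal is correct and follows the paper's own proof. You substitute $A = \jac \mbf f(\mbf P)$ and $B = -\mu\,\mathbb D$ into \eqref{determinantofsum}, discard the terms with $\bs\alpha\neq\bs\beta$ because $B(\bs\alpha\vert\bs\beta)$ then has a zero row, and collect the surviving principal-minor terms by powers of $\mu$, exactly as the paper does. Your zero-row justification and your treatment of the extreme cases $m=0$ and $m=n$ are more explicit than the paper's, and the multilinearity argument you sketch is a valid self-contained alternative.
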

    
    \begin{proof}
    	Let us take $A = \jac \mbf f(\mbf P)$, and $B = -\mu \, \mathbb D$ in equation \eqref{determinantofsum}. In particular, in the notation of that equation, note that if $\bs \alpha\neq \bs \beta$, then a row or column of $B(\bs \alpha\vert \bs \beta)$ would be equal to zero as $B$ is a diagonal matrix. Thus the only non-zero contributions to the sum are gotten when $\bs \alpha = \bs \beta$. Thus, notice that for $r = n - m$, and $\bs \beta = \left(i_1, i_2, \ldots, i_{n - m}\right)$, we have that
    	\begin{multline*}
    		\sum_{\bs \alpha,\bs \beta} (-1)^{s(\bs \alpha) + s(\bs \beta)} \det(A[\bs \alpha \vert \bs \beta])\det(B(\bs \alpha \vert \bs \beta))
    		\\ 
    		= (-1)^{2s(\bs \alpha)}\left(\sum_{1 \leq i_1 < \ldots < i_{n - m}\leq n} M_{i_1 \ldots i_{n - m}}\prod_{\substack{j = 1 \\ j\not \in \{i_1, \ldots, i_{n - m}\} }}^n \left[(-1) \, D_j \, \mu\right]\right)
    		\\ 
    		= (-1)^m \left(\sum_{1\leq i_1 < \ldots < i_{n - m}\leq n} M_{i_1 \ldots i_{n - m}}\prod_{\substack{j = 1 \\ j\not \in \{i_1, \ldots, i_{n - m}\} }}^n D_j\right) \mu^m,
    	\end{multline*}
    	which is exactly what is required to conclude the proof.
    \end{proof}
    
    The above \evs{lemma}, enables us to establish the following result, which is a generalisation of Theorem 1 in \cite{Sat}. Indeed, we do not require the Jacobian matrix $\jac \mbf f(\mbf P)$ to be s-stable for this result to be true. Nevertheless, the following Theorem states something only about Turing bifurcations, not considering wave instabilities.
    \begin{theorem} \label{th:noturing}
    	Let $\mbf P$ be a stable homogeneous steady state of \eqref{generalsystem} in the absence of diffusion. If $(-1)^k M_{i_1 i_2\ldots i_k}\geq 0$ for all $k = 1, 2, \ldots, n - 1$, and each increasing word $\left(i_1, i_2, \ldots, i_k\right)$, then $\mbf P$ will not be able to undergo Turing bifurcation when we take into account the full system \eqref{generalsystem}.
    \end{theorem}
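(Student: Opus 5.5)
A Turing bifurcation at some wavenumber $k^*>0$ requires a real eigenvalue of $\jac \mbf f(\mbf P)-k^2\,\mathbb D$ to cross zero. It therefore suffices to show that $\det\left(\jac \mbf f(\mbf P)-\mu\,\mathbb D\right)$ cannot vanish for any $\mu=k^2>0$. We will work with the sign-normalised quantity $(-1)^n\det\left(\jac \mbf f(\mbf P)-\mu\,\mathbb D\right)$. By Lemma \ref{positivitylemma}, this is the product of the eigenvalues of $-(\jac \mbf f(\mbf P)-\mu\mathbb D)$, and it is positive at $\mu=0$ since $\mbf P$ is stable.

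The first step is to expand this quantity using Lemma \ref{detlemma}:
\begin{align*}
(-1)^n\det\left(\jac \mbf f(\mbf P)-\mu\,\mathbb D\right)=\sum_{m=0}^{n}(-1)^n b_m\,\mu^m .
\end{align*}
We then check the sign of each coefficient. For $1\le m<n$ we have $(-1)^n b_m=\sum (-1)^{n-m}M_{i_1\ldots i_{n-m}}\prod_{j\notin\{i_1,\ldots,i_{n-m}\}}D_j$. Here the minors have order $k=n-m\in\{1,\ldots,n-1\}$, so the hypothesis $(-1)^kM_{i_1\ldots i_k}\ge 0$ together with $D_j\ge 0$ makes every term nonnegative. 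The leading coefficient is $(-1)^nb_n=\prod_j D_j\ge 0$. The constant term is $(-1)^nb_0=(-1)^n\det\jac \mbf f(\mbf P)>0$ by Lemma \ref{necessarystability} with $k=n$, or directly from stability. Hence, for every $\mu\ge 0$, the polynomial is bounded below by its constant term and is strictly positive. In particular $0$ is never an eigenvalue of $\jac \mbf f(\mbf P)-k^2\mathbb D$, for any $k$ and any admissible diagonal $\mathbb D$.

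The second step converts this into the absence of a Turing bifurcation. As we vary either $k$ or the diffusion rates, the eigenvalues move continuously. Starting from the stable configuration (no real eigenvalue with nonnegative real part), a real eigenvalue can reach the right half-plane only by passing through $0$, and a Turing bifurcation in the sense of Fig.~\ref{fig:dispersionrelations}(b) requires exactly $\lambda(k^*)=0$. Both are excluded by the positivity just shown. One could add a remark via Lemma \ref{positivitylemma}: positivity of the normalised determinant means the number of positive real eigenvalues is even. However, the cleaner argument is the direct one that $\lambda=0$ is never attained.

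The main obstacle is this interpretive step. Positivity of the determinant excludes a zero eigenvalue but not a pair of complex eigenvalues crossing the imaginary axis, so the conclusion must be phrased as concerning Turing bifurcations (real critical eigenvalue) only, consistent with the remark preceding the theorem. The coefficient sign check itself is routine bookkeeping once Lemma \ref{detlemma} is in hand.
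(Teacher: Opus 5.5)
Your proof is correct and follows the paper's own argument: expand $\det(\jac \mbf f(\mbf P)-\mu\,\mathbb D)$ with Lemma~\ref{detlemma}, note that under the hypothesis every coefficient of $(-1)^n\det(\jac \mbf f(\mbf P)-\mu\,\mathbb D)$ is nonnegative and the constant term is strictly positive, and conclude there is no root $\mu>0$, hence no zero eigenvalue. You should drop one aside: it is false in general that a real eigenvalue can enter the right half-plane only through $0$ (a complex pair can cross the imaginary axis and then split into two positive real eigenvalues, as in Remark~\ref{differentbehaviors}), but your conclusion does not depend on it, since a Turing bifurcation by definition requires $\lambda(k^*)=0$.
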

    \begin{proof}
    	Note that if $\mbf P$ is a stable homogeneous steady state of \eqref{generalsystem} in the absence of diffusion, then we \evs{have $(-1)^n \det(\jac \mbf f(\mbf P)) > 0$}.
    	
    	Then, observe that the characteristic polynomial of the Jacobian matrix of our system after the addition of diffusion is given by
    	\begin{align*}
    		p_{\mathbb D}(\lambda, \mu) = \det(\jac \mbf f(\mbf P) - \mu \, \mathbb D - \lambda I),
    	\end{align*}
    	where $\mu = |k|^2 > 0$.
    	
    	Since we want to determine conditions in which we can see a Turing instability pattern around $\mbf P$, we want to look for conditions in which there exists a real number $\mu > 0$ such that the characteristic equation has a root $\lambda = 0$. That is, we want to find $\mu > 0$ such that:
    	\begin{align}
    		p_{\mathbb D}(0, \mu) = \det(\jac \mbf f(\mbf P) - \mu \, \mathbb D) = 0, \label{poleqnD}
    	\end{align}
    	which is a polynomial equation in $\mu$, whose coefficients are described in Lemma \ref{detlemma}. Next, notice that if $(-1)^k M_{i_1i_2\ldots i_k}\geq 0$ for all $1\leq k\leq n - 1$ and every increasing word $\left(i_1, i_2, \ldots, i_k\right)$, then all the coefficients in the polynomial \eqref{poleqnD} have the same sign. Therefore, there cannot exist a positive real root of \eqref{poleqnD} for any choice of diffusion constants. Thus, there is no choice of diffusion matrix $\mathbb D$ such that $\mbf P$ admits a Turing bifurcation.
    \end{proof}
    The next two \evs{lemmas} will be crucial to the method we use to find conditions under which either Turing or wave instability patterns can occur in \eqref{generalsystem}.
    \begin{lemma} \label{positiveeig}
    	Let $\mbf P$ be a homogeneous steady state of \eqref{generalsystem}. Assume that there exists a natural number $1 \leq k\leq n$ such that $D_\ell = 0$ for all $\ell \in \{i_1, i_2, \ldots, i_k\}$, while all the other diffusion rates are positive. Then, there exist $k$ eigenvalues of $\jac \mbf f(\mbf P) - \mu \, \mathbb D$ that tend to the eigenvalues of $S_{i_1, i_2, \ldots, i_k}$ as $\mu \to \infty$, while the real part of the others tend to $-\infty$ as $\mu \to \infty$.
    \end{lemma}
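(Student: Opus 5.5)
After relabelling the variables, which is a simultaneous permutation of rows and columns and so does not change the spectrum, I would assume $\{i_1,\ldots,i_k\}=\{1,\ldots,k\}$. Writing $J=\jac \mbf f(\mbf P)$, the matrices then split into blocks:
\begin{align*}
J=\begin{pmatrix} S & B \\ C & E\end{pmatrix},\qquad \mathbb D=\begin{pmatrix} 0 & 0\\ 0 & \hat D\end{pmatrix},
\end{align*}
where $S=S_{1\ldots k}$ and $\hat D=\diag(D_{k+1},\ldots,D_n)$ has positive entries. For $\mu>0$ set $\varepsilon=1/\mu$ and consider $\varepsilon\,(J-\mu\mathbb D)=\varepsilon J-\diag(0,\hat D)$. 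This matrix depends continuously (indeed polynomially) on $\varepsilon$. At $\varepsilon=0$ it equals $\diag(0_k,-\hat D)$, whose eigenvalues are $0$ with algebraic multiplicity $k$ and $-D_{k+1},\ldots,-D_n$. By continuity of eigenvalues as roots of the characteristic polynomial, for small $\varepsilon$ exactly $n-k$ eigenvalues of $\varepsilon(J-\mu\mathbb D)$ lie near the $-D_j$. The corresponding eigenvalues of $J-\mu\mathbb D$ are therefore $\mu(-D_j+o(1))$, and their real parts tend to $-\infty$. This gives the second assertion.

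For the $k$ remaining eigenvalues I would work directly with the characteristic polynomial rather than with the scaled matrix. I would apply a Schur-complement factorisation: for $\lambda$ in a fixed bounded set and $\mu$ large, $E-\mu\hat D-\lambda I$ is invertible, and
\begin{align*}
\det(J-\mu\mathbb D-\lambda I)=\det(E-\mu\hat D-\lambda I)\,\det\!\left(S-\lambda I - B(E-\mu\hat D-\lambda I)^{-1}C\right).
\end{align*}
The inverse is $O(1/\mu)$ uniformly for $\lambda$ in compact sets, because $E-\mu\hat D-\lambda I=-\mu\hat D\,(I+O(1/\mu))$. Hence the second factor converges uniformly on compacts to $\det(S-\lambda I)$. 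Next I would choose $R$ larger than the spectral radius of $S$, and take small disjoint discs around the distinct eigenvalues of $S$. Rouché's theorem (or Hurwitz's theorem) applied to the analytic functions $g_\mu(\lambda)=\det(S-\lambda I-B(\cdots)^{-1}C)$ then shows that, for $\mu$ large, $g_\mu$ has exactly as many zeros in each disc as $S$ has eigenvalues there, counted with multiplicity, and no zeros elsewhere in $|\lambda|\le R$. In the bounded region the first factor has no zeros for large $\mu$, so these are precisely zeros of the full characteristic polynomial. This yields $k$ eigenvalues of $J-\mu\mathbb D$ converging to the spectrum of $S$.

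Finally, I would reconcile the two counts. The first step produces $n-k$ eigenvalues with modulus of order $\mu$, and the second produces $k$ eigenvalues that stay bounded, for a total of $n$. So these account for all the eigenvalues, and none is counted twice. The main obstacle is this bounded part: the uniform convergence of the Schur complement and the careful application of Rouché in the case where $S$ has repeated eigenvalues. A cheaper alternative would be to expand $\det(J-\mu\mathbb D-\lambda I)$ in powers of $\mu$ via the same idea as Lemma \ref{detlemma}. The leading coefficient $\mu^{n-k}\prod_{j>k}(-D_j)\det(S-\lambda I)$ shows directly that, after division by $\mu^{n-k}$, the characteristic polynomial converges to $\prod_{j>k}(-D_j)\cdot\det(S-\lambda I)$ uniformly on compacts, and Hurwitz's theorem finishes the argument. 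I would try this expansion first, since it reuses \eqref{determinantofsum}.
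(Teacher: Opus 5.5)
Your proof is correct, and it takes a genuinely different route from the paper.

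\textbf{The paper's route.} It argues through eigenvectors. It normalises an eigenvector $\mbf v_p$ of $\jac \mbf f(\mbf P) - \mu\,\mathbb D$ and splits into two cases. In the first, some component indexed by $\{i_1,\ldots,i_k\}$ survives as $\mu\to\infty$; then \eqref{eiglarge} forces the other components to vanish, and $\lambda_p$ approaches an eigenvalue of $S_{i_1\ldots i_k}$. In the second, all those components vanish; then $\lambda_p$ is asymptotically an eigenvalue of the complementary block, whose diagonal tends to $-\infty$, and Gershgorin finishes. It then asserts that the two cases account for $k$ and $n-k$ eigenvalues respectively.

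\textbf{Your route.} You work entirely with the characteristic polynomial. The rescaling $\varepsilon=1/\mu$ locates $n-k$ eigenvalues at $-\mu D_j+o(\mu)$. The Schur complement, or the $\mu$-expansion from \eqref{determinantofsum}, then gives a leading coefficient proportional to $\det(S_{i_1\ldots i_k}-\lambda I)$, since only words containing $\{i_1,\ldots,i_k\}$ contribute and the top coefficient is $\det(S_{i_1\ldots i_k}-\lambda I)\prod_{j\notin\{i_1,\ldots,i_k\}}(-D_j)$. Hurwitz's theorem then yields exactly $k$ eigenvalues in any disc containing the spectrum of $S_{i_1\ldots i_k}$, converging to it with algebraic multiplicity.

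\textbf{What each buys.} Your approach is rigorous exactly where the paper is loose. The paper tacitly assumes that normalised eigenvectors have limits (one would need to pass to subsequences). Its final count of $k$ versus $n-k$ eigenvalues is also not justified when there are repeated eigenvalues or generalised eigenvectors, a difficulty the paper itself flags. Root counting tracks multiplicities automatically, and you also obtain the sharper asymptotics $\lambda\sim-\mu D_j$ rather than merely $\Re(\lambda)\to-\infty$. The paper's approach is more elementary, needing no complex analysis, and it yields eigenvector information: the bounded modes concentrate on the zero-diffusion components and restrict to eigenvectors of $S_{i_1\ldots i_k}$.

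One caution if you adopt the cheaper $\mu$-expansion: keep your rescaling step. Hurwitz on bounded discs alone only shows that the remaining $n-k$ roots escape to infinity in modulus, not that their real parts tend to $-\infty$.
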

    \begin{proof}
    	First, notice that if $k = n$, then all diffusion rates would be equal to 0 and the eigenvalues of $S_{i_1, i_2, \ldots, i_k}$ would be those of $\jac \mbf f(\mbf P)$ which are independent of $\mu$, thus making the result trivially true in this case.
    	
    	Next, if $k < n$, although we might not be able to find all the eigenvectors of $\jac \mbf f(\mbf P) - \mu \, \mathbb D$, note that we can get all its eigenvalues by solving the following equation:
    	\begin{align*}
    		\jac \mbf f(\mbf P) \, \mbf v_m - \mu \, \mathbb D \, \mbf v_m - \lambda_m \, \mbf v_m=0,
    	\end{align*}
    	where $\lambda_m$ is an eigenvalue of $\jac \mbf f(\mbf P) - \mu \, \mathbb D$, and $\mbf v_m$ is a corresponding eigenvector, for each $m = 1, 2, \ldots, n$. Let us denote the elements of $\jac \mbf f(\mbf P)$ and $\mbf v_m$ as follows:
    	\begin{align*}
    		\jac \mbf f(\mbf P) = \left(a_{ij}\right)_{1\leq i, j\leq n}, \qquad 
    		\mbf v_m = \left(v_{m, i}\right)_{1 \leq i \leq n}.
    	\end{align*}
    	Therefore, the eigenvalue problem is equivalent to the following system of equations
    	\begin{align}
    		\sum_{j=1}^n \left(a_{ij} \, v_{m, j}\right) - \mu \, D_i \, v_{m, i} &= \lambda_m \, v_{m,i}, \label{eigensystem}
    	\end{align}
    	for each $i = 1, 2, \ldots, n$. In particular, notice that for each $i = \ell\in \{i_1, i_2, \ldots, i_k\}$, \eqref{eigensystem} becomes
    	\begin{align}
    		\sum_{j = \ell}^n a_{\ell j} \, v_{m, j} = \lambda_m \, v_{m, \ell}. \label{magic}
    	\end{align}
    	Then, at least one of the following two possibilities occurs:
    	\begin{enumerate}
    		\item There exists an eigenvector $\mbf v_p$, with $p \in \{1, 2, \ldots, n\}$, and $q\in \{i_1, i_2, \ldots, i_k\}$ such that $\ds{\lim_{\mu \to \infty} v_{p, q}\neq 0}$. In this case, by continuity, $v_{p, q} \neq 0$ for $\mu$ sufficiently large, in which case \eqref{magic} implies we can write
    		\begin{align}
    			\lambda_p = \frac{1}{v_{p, q}} \sum_{j = 1}^n a_{qj} \, v_{p, j}. \label{usefulcond}
    		\end{align}
    		Also, for $i\not \in \{i_1, i_2, \ldots, i_k\}$, \eqref{eigensystem} implies that
    		\begin{align}
    			\sum_{j = 1}^n \left(a_{ij} \, v_{p, j}\right) - \mu \, D_i \, v_{p, i} &= \frac{v_{p, i}}{v_{p, q}} \sum_{j=1}^n \left(a_{qj} \, v_{p, j}\right). \label{eiglarge}
    		\end{align}
    		Therefore, without loss of generality, if we assume that $\lVert \mbf v_p\rVert = 1$, then everything on the left-hand side of \eqref{eiglarge}, except possibly $\mu$, is finite. Therefore, if we take the limit $\mu\to \infty$, we must have that $\ds{\lim_{\mu\to \infty} v_{p, i} = 0}$ for all $i\not \in \{i_1, i_2, \ldots, i_k\}$. With this condition, for $\mu$ sufficiently large, we know that \eqref{magic} can be simplified to
    		\begin{align*}
    			\sum_{\substack{j\in \{i_1, i_2, \ldots, i_k\}}}^n a_{\ell j} \, v_{p,j} &= \lambda_p \, v_{p, \ell},
    		\end{align*}
    		which can be rewritten as 
    		\begin{align*}
    			S_{i_1i_2\ldots i_k} \, \mbf{\hat v}_p &= \lambda_p \, \mbf{\hat v}_p, \qquad \mbox{where} \quad \mbf{ \hat{v}}_p = \left(v_{p, j}\right)_{j\in \left\{i_1, i_2, \ldots, i_k\right\}}.
    		\end{align*}
    		This means that $\lambda_p$ tends to an eigenvalue of $S_{i_1, i_2, \ldots, i_k}$, as $\mu\to \infty$, and $\mbf{\hat v}_p$ to an associated corresponding eigenvector of that matrix. Note \evs{that} this gives us information only about $k$ eigenvalues, although due to the possibility of generalised eigenvectors, we may not be able to find $k$ independent eigenvectors in this way.
    		\item There exists an eigenvector $\mbf v_p$, with $p \in \{1, 2, \ldots, n\}$, such that $\ds{\lim_{\mu \to \infty} v_{p, q} = 0}$ for all $q\in \{i_1, i_2, \ldots, i_k\}$. \evs{Thus, for such $p$, $i \notin \{i_1, i_2, \ldots, i_k \}$ and $\mu$ sufficiently large, we obtain the set of equations}
    		\begin{align*}
    			\sum_{\substack{j = 1 \\ j\not \in \{i_1, i_2, \ldots, i_k\}}}^n \left(a_{ij} \, v_{p, j}\right) - \mu \, D_i \, v_{p, i} = \lambda_p \, v_{p,i}.
    		\end{align*}
    		This means that $\lambda_p$ is an eigenvalue of the submatrix of $\jac \mbf f(\mbf P) - \mu \, \mathbb D$ formed by the rows and columns that are complementary to $i_1, i_2, \ldots, i_k$. Therefore, because the diagonal elements of this matrix tend to $-\infty$ as $\mu\to \infty$ and the other elements are fixed, then by the Gershgorin Circle Theorem, the eigenvalue $\lambda_p$ must tend to $-\infty$ as $\mu\to \infty$. Again, the number of independent eigenvectors $\mbf v_p$ that fulfill the condition of this case depend on the number of generalized eigenvectors of the submatrix in question. 
    	\end{enumerate}
    	\evs{Finally, the first} (respectively, second) possibility gives us information about only $k$ (respectively, $n - k$) eigenvalues. Therefore, both possibilities must occur simultaneously to describe the complete set of eigenvalues of $\jac \mbf f(\mbf P) - \mu \, \mathbb D$.
    \end{proof}
    
    \begin{lemma} \label{nonpositiveeig}
    	Let $\mbf P$ be a homogeneous steady state of \eqref{generalsystem}. If we assume that there exists a natural number $1\leq k \leq n$ such that $D_\ell \to \infty$ for all $\ell\in \left\{i_1, i_2, \ldots, i_k\right\}$, while all the other diffusion rates are positive but finite, then there exist $k$ eigenvalues of $\jac \mbf f(\mbf P) - \mu \, \mathbb D$ having real parts that tend to $-\infty$ for every $\mu > 0$, while $n - k$ eigenvalues of that matrix tend to the eigenvalues of the submatrix of $\jac \mbf f(\mbf P)$ formed out of rows and columns that are complementary to $i_1, i_2, \ldots, i_k$, as $\mu \to 0^+$.
    \end{lemma}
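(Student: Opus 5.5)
The plan is to mirror the proof of Lemma \ref{positiveeig}: work directly with the eigenvalue equations \eqref{eigensystem} and split into two cases according to where the normalised eigenvectors concentrate. For fixed $\mu>0$ I will write $D_\ell = \mu^{-1}\varepsilon^{-1}$ for $\ell\in\{i_1,\ldots,i_k\}$ (or, more generally, let $\min_\ell D_\ell\to\infty$). Then $\jac \mbf f(\mbf P) - \mu\,\mathbb D$ has $k$ diagonal entries of size $-\mu D_\ell \to -\infty$ and $n-k$ diagonal entries $a_{ii}-\mu D_i$ that stay bounded for fixed $\mu$. I normalise each eigenvector $\mbf v_p$ to $\lVert \mbf v_p\rVert = 1$, so that limits exist along subsequences by compactness, and treat the two groups of eigenvalues separately.

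\emph{First group.} Suppose some eigenvector $\mbf v_p$ has a component $v_{p,q}$, with $q\in\{i_1,\ldots,i_k\}$, bounded away from zero. Row $q$ of \eqref{eigensystem} gives
\begin{align*}
\lambda_p = \frac{1}{v_{p,q}}\sum_{j=1}^n a_{qj}\,v_{p,j} - \mu D_q .
\end{align*}
The sum is bounded and $\mu D_q\to\infty$, so $\mathrm{Re}(\lambda_p)\to-\infty$. To count these eigenvalues correctly, I will use a cleaner argument. I rescale $\varepsilon(\jac \mbf f(\mbf P)-\mu\mathbb D)$ with $\varepsilon = 1/\max_\ell D_\ell$, or simply argue that the matrix is $-\mu\,\diag(D_{i_1},\ldots)$ plus a perturbation that is bounded relative to it. By Gershgorin discs applied after a diagonal similarity, $k$ eigenvalues lie in discs centred at $a_{\ell\ell}-\mu D_\ell$ whose radii stay bounded, and these discs separate from the remaining $n-k$ discs. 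This gives exactly $k$ eigenvalues with real part tending to $-\infty$, for every fixed $\mu>0$.

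\emph{Second group.} For the remaining eigenvectors, the components indexed by $\{i_1,\ldots,i_k\}$ must tend to zero. In fact, from row $\ell$ of \eqref{eigensystem},
\begin{align*}
v_{p,\ell} = \frac{\sum_j a_{\ell j}v_{p,j}}{\lambda_p+\mu D_\ell} = O(1/D_\ell)
\end{align*}
whenever $\lambda_p$ stays bounded. Substituting this into the complementary rows shows that, in the limit, $\lambda_p$ is an eigenvalue of the complementary principal submatrix of $\jac \mbf f(\mbf P)$ minus $\mu$ times the corresponding diagonal block of $\mathbb D$. That block has finite entries, so letting $\mu\to0^+$ gives convergence to the eigenvalues of the complementary principal submatrix of $\jac \mbf f(\mbf P)$, as claimed. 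Equivalently, I could argue through Lemma \ref{detlemma}: divide $\det(\jac \mbf f(\mbf P)-\mu\mathbb D-\lambda I)$ by $\prod_\ell(-\mu D_\ell)$ and let $D_\ell\to\infty$. The resulting polynomial in $\lambda$ converges to the characteristic polynomial of the complementary block, and Hurwitz's theorem on continuity of roots then gives the convergence of the $n-k$ bounded roots.

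\emph{Main obstacle.} The main obstacle is ensuring that the counts are exactly $k$ and $n-k$, with no eigenvalue escaping in an uncontrolled way. I would handle this with the determinant or Hurwitz argument, which gives root convergence with multiplicities and avoids the generalised-eigenvector issues that Lemma \ref{positiveeig} faced. Combined with the Gershgorin separation, this accounts for all $n$ eigenvalues.
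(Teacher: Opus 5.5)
Your proposal is correct. Although you open by mirroring Lemma~\ref{positiveeig}, the steps that actually carry your argument differ from the paper's.

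\emph{The paper's route.} It stays entirely at the level of normalised eigenvectors, with two cases.
\begin{enumerate}
\item Some component outside $\{i_1,\ldots,i_k\}$ survives the limit. This forces the components in $\{i_1,\ldots,i_k\}$ to be $O(1/D_i)$. The terms $\mu D_i v_{p,i}$ with $i\notin\{i_1,\ldots,i_k\}$ are then dropped because $\mu\ll 1$, which produces an eigenvalue of the complementary submatrix.
\item The eigenvector concentrates on $\{i_1,\ldots,i_k\}$. Gershgorin applied to that block sends the real part to $-\infty$.
\end{enumerate}
The counts $n-k$ and $k$ are then inferred from the assertion that both cases must occur.

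\emph{Your route.} You obtain the exact count from Gershgorin disc separation on the whole matrix. You obtain the convergence from Hurwitz's theorem applied to $\det(\jac \mbf f(\mbf P)-\mu\mathbb D-\lambda I)/\prod_\ell(-\mu D_\ell)$. Its limit is the characteristic polynomial of $S_c-\mu\,\mathbb D_c$, with $S_c$ and $\mathbb D_c$ the complementary blocks.

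\emph{What each buys.} Your route is more rigorous exactly where the paper is weakest.
\begin{itemize}
\item Your counts hold with algebraic multiplicity, whatever the Jordan structure. The paper itself concedes it may not find enough eigenvectors.
\item You keep the two limits separate. Taking $D_\ell\to\infty$ at fixed $\mu$ gives the spectrum of $S_c-\mu\,\mathbb D_c$, and only then does $\mu\to0^+$ give that of $S_c$. The paper merges the two limits by discarding the finite diffusion terms.
\end{itemize}
The paper's route, in turn, gives a direct parallel with Lemma~\ref{positiveeig}. It also yields the extra information that the eigenvectors of the bounded eigenvalues localise on the complementary components.

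\emph{One simplification.} The diagonal similarity is unnecessary. The off-diagonal entries of $\jac \mbf f(\mbf P)-\mu\mathbb D$ are those of $\jac \mbf f(\mbf P)$, so the plain row discs already have radii independent of $\mathbb D$ and $\mu$. Their centres $a_{\ell\ell}-\mu D_\ell\to-\infty$, which separates them from the fixed complementary discs for each fixed $\mu>0$.
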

    
    \begin{proof}
    	Using the same notation as in the previous proof, we need to solve the equations \eqref{eigensystem} for each $i = 1, 2, \ldots, n$. Again, in this case, we have two possibilities (note that the proofs are analogous but have some key differences in the sets of indices):
    	\begin{enumerate}
    		\item There exist an eigenvector $\mbf v_p$, with $p\in \{1, 2, \ldots, n\}$ and \evs{$q \notin \{i_1, i_2, \ldots, i_k\}$} such that $\ds{\lim_{D_{i_1}, \ldots, D_{i_k}\to \infty} v_{p, q}\neq 0}$ for a fixed $0 <\mu\ll 1$ sufficiently small.	In this case, \eqref{eigensystem} implies that we can write 
    		\begin{align}
    			\lambda_p = \frac{1}{v_{p, q}}\left(\sum_{j = 1}^n a_{qj} \, v_{p, j} - \mu \, D_q \, v_{p, q}\right) \label{usefulcond2}
    		\end{align}
    		and, for $i\in \{i_1, i_2, \ldots, i_k\}$,
    		\begin{align}
    			\sum_{j = 1}^n \left(a_{ij} \, v_{p, j}\right) - \mu \, D_i \, v_{p, i} &= \frac{v_{p, i}}{v_{p, q}} \left(\sum_{j = 1}^n \left(a_{lj} \, v_{p, j}\right) - \mu \, D_q \, v_{p, q}\right). \label{eiglarge2}
    		\end{align}
    		If we assume, without loss of generality, that $\lVert \mbf v_p\rVert = 1$, then everything on the right-hand side of \eqref{eiglarge2} is finite whilst everything on the left-hand side is finite, except possibly $\mu \, D_i$. Now if we let $D_i \to \infty$, while $\mu$, sufficiently small, remains fixed. Then, we must have that $\ds{\lim_{D_{i_1}, \ldots, D_{i_k}\to \infty} v_{p, i} = 0}$, for all $i \in \{i_1, i_2, \ldots, i_k\}$, and $0 < \mu \ll 1$ sufficiently small. With this condition, for $0 < \mu \ll 1$, we know that for $m = p$, \eqref{eigensystem} can be simplified to
    		\begin{align*}
    			\sum_{\substack{j = 1 \\ j \not \in \{i_1, i_2, \ldots, i_k\}}}^n a_{ij} \, v_{p, j} = \lambda_p \, v_{p, i},
    		\end{align*}
    		for each $i\not \in \{i_1, i_2, \ldots, i_k\}$. As before, this means that $\lambda_p$ tends to an eigenvalue of the submatrix of $\jac \mbf f(\mbf P)$ that takes into account the rows and columns that are complementary to $i_1, i_2, \ldots, i_k$, for $0<\mu\ll 1$ sufficiently small. \evs{Note that} this gives us information only about $n - k$ eigenvalues. Again, we highlight that this does not mean that we will always be able to find $n - k$ eigenvectors that fulfill the condition required for this case. That depends on the number of generalized eigenvectors of the submatrix of $\jac \mbf f(\mbf P)$ that takes into account only its rows and columns that are complementary to $i_1, i_2, \ldots, i_k$.
    		
    		\item There exists an eigenvector $\mbf v_p$, with $p \in \left\{1, 2, \ldots, n\right\}$ such that $\ds{\lim_{D_{i_1}, \ldots, D_{i_k}\to \infty} v_{p, q} = 0}$ for all \evs{$q \notin \{i_1, i_2, \ldots, i_k\}$}, and $0 < \mu \ll 1$ sufficiently small. Therefore, the system of equations we need to solve for $i \in \{i_1, i_2, \ldots, i_k\}$ is given by:
    		\begin{align*}
    			\sum_{\substack{j = 1 \\ j \in \left\{i_1, i_2, \ldots, i_k\right\}}}^n \left(a_{ij} \, v_{p, j} \right) - \mu \, D_i \, v_{p, i} &= \lambda_p \, v_{p,i}.
    		\end{align*}
    		This means that $\lambda_p$ tends to an eigenvalue of the submatrix of $\jac \mbf f(\mbf P) - \mathbb D \, \mu$ formed out of rows and columns $i_1, i_2, \ldots, i_k$. Therefore, as the off-diagonal elements of that matrix are constant, while the diagonal entries tend to $-\infty$, then by the Gerschgorin circle theorem, the real part of $\lambda_p$ tends to $-\infty$ as $D_\ell\to \infty$, for each $\ell \in \left\{i_1, i_2, \ldots, i_k\right\}$, and $0 < \mu \ll 1$. Furthermore, note that as some diffusion rates are infinitely large, then increasing the value of $\mu$ above 1 will not impede the use of the Gerschgorin circle theorem to say that, in this case, the real part of the eigenvalues will tend to $-\infty$ for every $\mu > 0$.
    	\end{enumerate}
    	Finally, as the first (resp. second) possibility gives us information only about $n - k$ (resp. $k$) eigenvalues, then they both must occur in order to describe the $n$ eigenvalues of $\jac \mbf f(\mbf P) - \mu \, \mathbb D$.
    \end{proof}
    Now we introduce a useful definition and an important general result that enables us to find specific ratios of diffusion constants that give rise to an instability.
    \begin{definition}        
    	For each $\ell \in \mathbb Z^+$, let $s$ be the \textit{sorting function}
    	\begin{align*}
    		s : V\subset (\mathbb Z^+)^\ell &\to (\mathbb Z^+)^\ell
    		\\
    		\left(j_1, j_2, \ldots, j_\ell\right) &\to s\left(j_1, j_2, \ldots, j_\ell\right) = \left(j_{\sigma(1)}, j_{\sigma(2)}, \ldots, j_{\sigma(\ell)}\right),
    	\end{align*}
    	where $V\subset [1\ldots n]^\ell$ is a subset of vectors with no repeated entries, and $\left(j_{\sigma(1)}, j_{\sigma(2)}, \ldots, j_{\sigma(\ell)}\right)$ is an increasing word.
    \end{definition}
    
    \begin{lemma} \label{signpreservation}
    	Let $\left(j_1, j_2, \ldots, j_n\right)$ be a word comprising non-repeated integers between $1$ and $n$ (not necessarily increasing), such that $M_{s\left(j_1,j_2,\ldots ,j_k\right)}\neq 0$ for each integer $1\leq k\leq n$. Then we can choose positive diffusion constants $D_i$, $i=1,\ldots, n$ such that
    	\begin{align*}
    		\mbox{sign}\left(b_k\right) = \mbox{sign}\left((-1)^k M_{s\left(j_1, j_2, \ldots, j_k\right)}\right),
    	\end{align*}
    	for each integer $1\leq k\leq n$. Furthermore, this result is invariant under multiplication of all the diffusion constants by a positive real number.
    \end{lemma}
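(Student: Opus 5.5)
The plan is to use the explicit coefficient formula of Lemma \ref{detlemma} and choose the diffusion rates on widely separated scales, ordered by the word. Then, in each coefficient, the single term built from the prescribed principal minor dominates all the others.

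\textbf{Indexing.} In Lemma \ref{detlemma}, the coefficient $b_m$ of $\mu^m$ is a sum over principal minors of size $n-m$. Each minor is multiplied by the product of the $m$ diffusion constants with complementary indices. The pairing in the statement, between the coefficient and the minor $M_{s(j_1,\ldots,j_k)}$ of size $k$, should therefore be read as: the coefficient whose minors have size $k$, namely $b_{n-k}$, up to the fixed sign factor in the statement. I will prove the version this pairing gives directly:
\begin{align*}
\mbox{sign}\left(b_{n-k}\right)=\mbox{sign}\left((-1)^{n-k}M_{s(j_1,\ldots,j_k)}\right),\qquad 1\le k\le n,
\end{align*}
and reconcile the labelling with the statement afterwards. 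The extreme cases are immediate: $b_0=\det(\jac \mbf f(\mbf P))=M_{s(j_1,\ldots,j_n)}$, and $b_n=(-1)^n\prod_j D_j$.

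\textbf{Choice of diffusion rates.} Fix a small parameter $0<\varepsilon<1$ and set
\begin{align*}
D_{j_i}=\varepsilon^{\,n-i},\qquad i=1,\ldots,n,
\end{align*}
so that $D_{j_1}\ll D_{j_2}\ll\cdots\ll D_{j_n}=1$. Take a minor of size $k$ with index set $I$. Its weight in $b_{n-k}$ is $\prod_{j\notin I}D_j=\varepsilon^{E(I)}$, where $E(I)=\sum_{i:\,j_i\notin I}(n-i)$.

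\textbf{Key step: the dominant term.} The exponent $E(I)$ is minimised exactly when the complement of $I$ consists of the $n-k$ indices with the smallest exponents. These are $j_{k+1},\ldots,j_n$, so the minimiser is $I=\{j_1,\ldots,j_k\}$, and it is unique. For any other $I$ the exponent is strictly larger, in fact by at least $1$, since exponents are integers and the minimiser is unique. Hence
\begin{align*}
(-1)^{n-k}b_{n-k}=\varepsilon^{E_{\min}}\left(M_{s(j_1,\ldots,j_k)}+O(\varepsilon)\right).
\end{align*}
The $O(\varepsilon)$ term is bounded by $\varepsilon$ times a constant depending only on the entries of $\jac \mbf f(\mbf P)$ and on $n$. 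Because $M_{s(j_1,\ldots,j_k)}\neq0$ by hypothesis, there is $\varepsilon_k>0$ such that for $\varepsilon<\varepsilon_k$ the sign of $(-1)^{n-k}b_{n-k}$ equals the sign of $M_{s(j_1,\ldots,j_k)}$. Taking $\varepsilon<\min_k\varepsilon_k$ gives the required signs for all $k$ simultaneously.

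\textbf{Scaling invariance.} If every $D_i$ is replaced by $cD_i$ with $c>0$, each term of $b_m$ is multiplied by $c^m>0$. So the signs of all $b_m$ are unchanged.

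\textbf{Expected obstacle.} The only delicate points are the uniqueness of the minimising index set and the bookkeeping between the labelling of the $b$'s and the size of the minors. The dominance estimate itself is routine once these are set up.
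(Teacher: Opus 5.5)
Your proof is correct and follows the paper's own argument: the same choice $D_{j_i}=\varepsilon^{n-i}$, the same observation that the term carrying $M_{s(j_1,\ldots,j_k)}$ has the uniquely lowest power of $\varepsilon$ in its coefficient, and the same $c^m$ scaling argument. Your re-indexing to $\mbox{sign}(b_{n-k})=\mbox{sign}\bigl((-1)^{n-k}M_{s(j_1,\ldots,j_k)}\bigr)$ is the right reading and is what the paper's computation actually yields (it shows $b_m$ is dominated by $(-1)^m M_{s(j_1,\ldots,j_{n-m})}$, then restates the conclusion with mismatched labels), since the literal pairing can fail already at $k=n$, where $b_n=(-1)^n\prod_j D_j$ has sign $(-1)^n$ whatever the sign of $\det(\jac \mbf f(\mbf P))$; the reconciliation you promised but did not write only needs the remark that the two versions differ by a global factor $(-1)^n$ and a reversal of order, and neither changes the sign-change count used in Theorem~\ref{biggestresult}.
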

    
    \begin{proof}
    	First, let $\varepsilon > 0$ and, for each integer $1 \leq k \leq n$, take $D_{j_k} = \varepsilon^{n - k} > 0$. \evs{Then, for each integer} $1\leq \ell \leq n$, the terms $(-1)^\ell M_{s (i_1i_2\ldots i_\ell )}$ in \eqref{eq:mupolynomial} will have coefficients of the form $\varepsilon^{p\left(\hat i_1, \hat i_2, \ldots, \hat i_{n - \ell}\right)}$, where $p\left(\hat i_1, \hat i_2, \ldots, \hat i_{n-\ell}\right)$ is the sum of the powers determined by the diffusion rates $D_{\hat i_q}$, where $\hat i_q \not \in \left\{i_1, i_2, \ldots, i_\ell\right\}$ for each integer $1\leq q \leq n - \ell$. Next, notice that for $1\leq m\leq n$, we have that
    	\begin{multline*}
    		b_m = (-1)^m \left(\sum_{1\leq i_1 < \ldots < i_{n - m}\leq n} M_{i_1 \ldots i_{n - m}} \prod_{\substack{j = 1 \\ j\not \in \{i_1, \ldots, i_{n - m}\}}}^n D_j \right)
    		\\
    		= (-1)^m \left(\sum_{\substack{1\leq i_1 < \ldots < i_{n - m}\leq n \\ \left(i_1, \ldots, i_{n - m}\right)\neq s\left(j_1, j_2, \ldots, j_{n - m}\right)}} M_{i_1\ldots i_{n - m}} \varepsilon^{p\left(\hat i_1, \ldots, \hat i_m\right)}\right)
    		\\ 
    		+ (-1)^m \varepsilon^{p\left(j_{n - m + 1}, j_{n - m + 2}, \ldots, j_n\right)} M_{s\left(j_1, j_2, \ldots, j_{n - m}\right)},
    	\end{multline*}
    	where $\hat i_\ell \not \in \left(i_1, i_2, \ldots, i_{n - m}\right)$ for every integer $1 \leq \ell \leq m$.
    	
    	Here, note that the lowest power of $\varepsilon$ will be $p\left(j_{n - m + 1}, j_{n - m + 2}, \ldots, j_n\right)$. That coefficient will be the sum of all the lowest exponents of $\varepsilon$. All the other terms are multiplied by the same number of diffusion rates but have at least a difference in one index with the diffusion rates multiplying $(-1)^m M_{s\left(j_1, j_2, \ldots, j_{n - m}\right)}$. This means that the other terms will have a higher order in terms of $\varepsilon$. With this, taking $\varepsilon > 0$ sufficiently small, we will have that
    	\begin{align*}
    		\mbox{sign}\left(b_k\right) = \mbox{sign}\left((-1)^k M_{s\left(j_1, j_2, \ldots, j_k\right)}\right),
    	\end{align*}
    	for every integer $1\leq k\leq n$.
    	
    	Last but not least, note that if we pick any $\delta > 0$ and perform the change $D_i \to \delta D_i$ for each integer $1\leq i\leq n$, then 
    	\begin{align*}
    		b_m &= (-1)^m \, \delta^m \left(\sum_{1 \leq i_1 < \ldots < i_{n - m}\leq n} M_{i_1 \ldots i_{n - m}} \prod_{\substack{j = 1 \\ j \not \in \{i_1, \ldots, i_{n - m}\}}}^n D_j \right),
    	\end{align*}
    	which will not change the sign of these terms. 
    \end{proof}
    
    We close this section by stating an important result which is a trivial extension from \cite[Theorem 1]{Sat}
    \begin{definition}
        The matrix $\jac \mbf f(\mbf P)$ is said to be s-stable if all its principal submatrices have all their eigenvalues in the left half-plane. 
    \end{definition}
    \begin{theorem}[\cite{Sat}]
    \label{th:sat}
        If $\jac \mbf f(\mbf P)$ is s-stable, then no Turing or wave instabilities are possible from the homogeneous steady state $\mbf P$ for any choice of diffusion constants $D_i \geq 0$. 
    \end{theorem}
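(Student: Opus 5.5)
The plan is to treat the two kinds of instability separately, dealing with the Turing part first.

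\textbf{Turing part.} I would first convert s-stability into a sign condition on principal minors. For any increasing word $(i_1,\ldots,i_k)$, the submatrix $S_{i_1\ldots i_k}$ is a real matrix whose eigenvalues all have negative real part. Its complex eigenvalues pair off into factors $|\lambda|^2>0$, and each real eigenvalue is negative. Hence $(-1)^k M_{i_1\ldots i_k}>0$. This is also the contrapositive of Lemma~\ref{wonderfulemma}, since $S_{i_1\ldots i_k}$ has no positive real eigenvalue. In particular the hypothesis of Theorem~\ref{th:noturing} holds for every $k\le n-1$. For $k=n$ it holds as well, because $\mbf P$ is stable.

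By Lemma~\ref{detlemma}, every coefficient $b_m$ of $\det(\jac \mbf f(\mbf P)-\mu\,\mathbb D)$ is $(-1)^m$ times a sum of products $M\cdot\prod D_j$. Such a product has sign $(-1)^{n-m}$ or is zero. So all the terms $b_m\mu^m$ have sign $(-1)^n$ or vanish, and $b_0\neq 0$. Consequently $\det(\jac \mbf f(\mbf P)-\mu\,\mathbb D)\neq 0$ for every $\mu\ge 0$ and every $D_i\ge 0$. This covers degenerate zero diffusion rates as well. It follows that no real eigenvalue can cross zero, so there is no Turing bifurcation.

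\textbf{Wave part.} Here I would have to show that $\jac \mbf f(\mbf P)-\mu\,\mathbb D$ never has an eigenvalue $i\omega$ with $\omega\neq 0$. The approach I would try first is a continuation argument in the diffusion constants. Write $\mathbb D=\diag(D_i)$ and deform it along rays. By Lemmas~\ref{positiveeig} and~\ref{nonpositiveeig}, in every limiting regime (some $D_i\to 0$, others $\to\infty$, or $\mu\to 0,\infty$) the surviving eigenvalues tend either to eigenvalues of a principal submatrix or to $-\infty$. By s-stability, every such principal submatrix is stable. So the spectrum is stable on the whole "boundary" of parameter space. The aim would then be to rule out an interior crossing of the imaginary axis by an argument on $\det(\jac \mbf f(\mbf P)-\mu\mathbb D-i\omega I)$. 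For instance, one could try expanding it with formula~\eqref{determinantofsum}, taking $B=-\mu\mathbb D-i\omega I$, so that the expansion is a sum over principal minors times products of $(\mu D_j+i\omega)$, and then look for a sign or argument-principle obstruction.

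\textbf{Main obstacle.} This last step is where I expect real trouble. Boundary stability does not by itself exclude a pair of eigenvalues that leaves and re-enters the left half-plane at intermediate $\mu$. Moreover, the complex factors $\mu D_j+i\omega$ destroy the clean sign structure that made the Turing case work. Indeed, the known phenomenon that s-stability does not imply $D$-stability (or diffusive stability) for $n\ge 4$ suggests this step may fail as stated. If so, my plan would fall back to two things: proving the wave statement only for $n\le 3$, where the Routh--Hurwitz conditions can be written out explicitly in terms of principal minors; and otherwise relying on the cited result \cite[Theorem 1]{Sat} exactly as the paper does, rather than claiming an independent proof for general $n$.
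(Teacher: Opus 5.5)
Your Turing half is correct and is essentially the paper's route (Theorem~\ref{th:noturing}, the content of \cite[Theorem~1]{Sat}). S-stability gives $(-1)^kM_{i_1\ldots i_k}>0$ for every increasing word. By Lemma~\ref{detlemma}, every nonzero term of $\det(\jac \mbf f(\mbf P)-\mu\,\mathbb D)$ then has sign $(-1)^n$, and $b_0\neq 0$. Your $n\le 3$ fallback is also sound. The case $n=2$ is Lemma~\ref{n2Hopf}. For $n=3$, s-stability makes $b_1,b_2,b_3\ge 0$ in Lemma~\ref{th:bdef3}, while $b_0>0$ by Lemma~\ref{hopf3necessary}, so $a_1a_2-a_0>0$ for all $\mu\ge 0$; this is the argument of Lemma~\ref{hopf3sufficient}.

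The wave half is not merely hard for $n\ge 4$: it is false, so the gap you flagged cannot be closed. Your other fallback, citing \cite{Sat} ``as the paper does'', is not valid either. Sat's proof only shows that $\det(\jac \mbf f(\mbf P)-\mu\,\mathbb D)$ never vanishes. The paper's remark that it is ``clear from the proof'' that no eigenvalue has positive real part is exactly the unsupported step. For a counterexample, take $\jac \mbf f(\mbf P)=J$ with
\[
J=\begin{pmatrix}-0.1&1&0&0\\0&-0.01&1&0\\0&0&-0.01&1\\-0.1&0&-1&-0.01\end{pmatrix}.
\]
The only cycles formed by the nonzero off-diagonal entries are the 2-cycle on $\{3,4\}$ and the 4-cycle $J_{12}J_{23}J_{34}J_{41}$. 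Hence every proper principal submatrix is permutation-similar to a block-triangular matrix whose diagonal blocks are among $-0.1$, $-0.01$ and $\begin{pmatrix}-0.01&1\\-1&-0.01\end{pmatrix}$, so each is stable. Expanding the determinant gives
\[
\det(\lambda I-J+dE_{11})=(\lambda+0.1+d)(\lambda+0.01)\bigl((\lambda+0.01)^2+1\bigr)+0.1 .
\]

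For $d=0$ this is $\lambda^4+0.13\lambda^3+1.0033\lambda^2+0.110031\lambda+0.1010001$. Its coefficients are positive, and its Hurwitz determinants $\Delta_2\approx 0.0204$ and $\Delta_3\approx 5.4\times 10^{-4}$ are positive, so $J$ is stable and hence s-stable. For $d=0.9$ it is $\lambda^4+1.03\lambda^3+1.0303\lambda^2+1.010301\lambda+0.110001$, with $\Delta_3\approx -0.065<0$. Its real roots are approximately $-0.123$ and $-0.943$, so the unstable eigenvalues form a complex pair $\approx 0.018\pm 0.976i$.

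Thus, with $\mathbb D=\diag(1,0,0,0)$, $J-\mu\,\mathbb D$ is unstable for roughly $0.05<\mu<4.7$. The same holds for $\diag(1,\eta,\eta,\eta)$ with small $\eta>0$, by continuity. Your Turing argument excludes a zero eigenvalue, so the onset (near $\mu\approx 0.053$, with eigenvalues $\approx\pm 0.94i$) is a wave instability. Taking $J\oplus(-I_{n-4})$ gives a counterexample for every $n\ge 4$.

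This is precisely the scenario you feared. Lemmas~\ref{positiveeig} and~\ref{nonpositiveeig} force stability at $d=0$ and as $d\to\infty$, where the spectrum tends to that of the stable $S_{234}$ plus one eigenvalue going to $-\infty$. Yet a complex pair leaves the left half-plane in between. What can actually be proved is: no Turing instability for any $n$, and no wave instability for $n\le 3$. For $n\ge 4$ a stronger hypothesis is needed, for example diagonal (Volterra--Lyapunov) stability. Indeed, if $PJ+J^\intercal P\prec 0$ for some positive diagonal $P$, then $P(J-\mu\,\mathbb D)+(J-\mu\,\mathbb D)^\intercal P\prec 0$ for all $\mu\ge 0$.
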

    Note that the statement of this theorem in \cite{Sat} only mentions Turing instability, not wave instability. Nevertheless, it is clear from the proof that s-stability implies that there can be no eigenvalue of $\jac \mbf f(\mbf{P}) -\mu \, \mathbb{D}$ with a positive real part.

\section{Wave instability for $n\leq 3$} \label{sec:Hopf}
    In this section, we shall try to understand the minimal ingredients for wave instabilities to occur. Essentially, this result was originally stated in \cite[Th.~1.1]{Anma}, but we noted that the theorem contains some slightly stricter conditions on the diffusion constants $D_i$ and also includes a claim (result (iv) in the theorem) which is false in general. For example, in the notation of the present paper, the matrix
    $$
    	\jac \mbf f(\mbf P) = \begin{pmatrix}
            -4 & -4 & -4
            \\
            \phantom{-}3 & \phantom{-}2 & \phantom{-}2
            \\
            \phantom{-}1 & \phantom{-}1 & -1
        \end{pmatrix}
    $$
    can easily be shown to provide a counterexample when taking $D_1 = 3$ and $D_2, D_3$ small. Finally, the general way we state our result is easily generalisable to arbitrary $n \geq 3$, which forms the subject of the following Section.
    
    First, we state our general result.
    \begin{proposition}\label{th:hopf3}
    	Let $\mbf P$ be a stable homogeneous steady state of \eqref{generalsystem} in the absence of diffusion. Then the minimum number of components that the system needs to develop wave instabilities around $\mbf P$ is 3. Furthermore, if $n = 3$ and there exist $1 \leq i_1 < i_2 \leq 3$ such that $M_{i_1} + M_{i_2} > 0$ and $M_{i_1i_2} > 0$, then we can choose diffusion rates $D_1, D_2, D_3 \geq 0$ such that \eqref{generalsystem} develops wave instabilities at $\mbf P$.
    \end{proposition}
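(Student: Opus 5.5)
For $n=1$ and $n=2$ the plan is to show directly that no purely imaginary eigenvalue pair can ever appear. For $n=3$ the plan is to build the instability by switching off two diffusion rates and using Lemma~\ref{positiveeig}.

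\textbf{Non-existence for $n\le 2$.} For $n=1$, $\jac \mbf f(\mbf P)-\mu\,\mathbb D$ is a real scalar, so it has no non-real eigenvalues at all. For $n=2$, a wave instability needs a pair $\lambda=\pm i\omega$ with $\omega>0$ for some $\mu>0$. For a real $2\times2$ matrix this requires zero trace. However,
\[
\operatorname{tr}\left(\jac \mbf f(\mbf P)-\mu\,\mathbb D\right)=M_1+M_2-\mu(D_1+D_2).
\]
Lemma~\ref{necessarystability} with $k=1$ gives $M_1+M_2<0$, and $D_i\ge0$, so the trace is strictly negative for every $\mu\ge0$. Hence no wave instability is possible and $n=3$ is the minimum; the example for $n=3$ constructed next shows that this minimum is attained.

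\textbf{Construction for $n=3$.} Let $i_3$ be the remaining index and fix $D_{i_1}=D_{i_2}=0$, $D_{i_3}=1$. The hypotheses say that $S_{i_1i_2}$ has trace $M_{i_1}+M_{i_2}>0$ and determinant $M_{i_1i_2}>0$. Therefore both eigenvalues of $S_{i_1i_2}$ have positive real part, whether they are complex or both real and positive. By Lemma~\ref{positiveeig}, two eigenvalues of $\jac \mbf f(\mbf P)-\mu\,\mathbb D$ tend to these as $\mu\to\infty$, so they have positive real part for $\mu$ large. At $\mu=0$ all eigenvalues have negative real part. By continuity of eigenvalues in $\mu$, some eigenvalue crosses the imaginary axis at a finite $\mu^*>0$.

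The crossing cannot occur at $\lambda=0$. By Lemma~\ref{detlemma}, with two diffusion rates equal to zero only $b_0$ and $b_1$ survive:
\[
\det\left(\jac \mbf f(\mbf P)-\mu\,\mathbb D\right)=\det \jac \mbf f(\mbf P)-\mu\,M_{i_1i_2}.
\]
Stability gives $\det \jac \mbf f(\mbf P)<0$, and $M_{i_1i_2}>0$, so this determinant is strictly negative for all $\mu\ge0$ and never vanishes. Hence every crossing happens through a purely imaginary pair $\pm i\omega$ with $\omega\neq0$, which is a wave instability. If strictly positive rates are preferred, the same holds after perturbing $D_{i_1},D_{i_2}$ to small positive values. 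Here I would use Lemma~\ref{signpreservation}-type ordering to keep $b_1<0$, together with continuity of the spectrum on a compact $\mu$-range, while the large-$\mu$ behaviour is controlled by the limiting picture.

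\textbf{Main obstacle.} Exhibiting an actual \emph{bifurcation point}, i.e.\ a quadratic tangency, rather than just an unstable configuration, is delicate. A uniform rescaling of $\mathbb D$ only rescales $\mu$, so it is not a genuine bifurcation parameter. I would use the homotopy $\mathbb D(s)=(1-s)I+s\,\diag(D_1,D_2,D_3)$ for $s\in[0,1]$. At $s=0$, the eigenvalues of $\jac \mbf f(\mbf P)-\mu I$ are those of $\jac \mbf f(\mbf P)$ shifted left, so the state is stable for all $\mu$; at $s=1$ it is unstable. Let $s^*$ be the first parameter value at which $\max_\mu \operatorname{Re}\lambda$ reaches zero. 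I then need to show that the critical eigenvalue there is non-real, i.e.\ that $\det(\jac \mbf f(\mbf P)-\mu\,\mathbb D(s^*))\neq0$ at the critical $\mu$. This is not automatic for intermediate $s$, because $b_2$ involves $M_{i_3}$, whose sign is not controlled. If the homotopy fails, I would instead restrict the parameter to $s$ close to $1$, or vary only $D_{i_3}$ with $D_{i_1},D_{i_2}$ fixed and small, so that the determinant argument above stays valid along the whole path.
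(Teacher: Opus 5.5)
Your argument is correct, and for $n=3$ it takes a different route from the paper. Your $n\le 2$ step is the same trace argument as Lemma~\ref{n2Hopf}.

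\textbf{The paper's route.} It takes without loss of generality $M_2+M_3>0$, $M_{23}>0$, $D_2=D_3=0$, and works directly with the cubic characteristic polynomial of Lemma~\ref{th:bdef3}. A purely imaginary pair exists exactly when $a_1>0$ and $q(\mu)=a_1a_2-a_0=0$. Here $a_1$ is decreasing and linear in $\mu$, vanishing at $\mu^*=(M_{12}+M_{13}+M_{23})/(D_1(M_2+M_3))$. By Routh--Hurwitz (Lemma~\ref{hopf3necessary}), $q(0)=b_0>0$, and $q(\mu^*)=M_{123}-M_{23}(M_{12}+M_{13}+M_{23})/(M_2+M_3)<0$. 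The intermediate value theorem then gives a root in $(0,\mu^*)$.

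\textbf{Your route.} You combine the large-$\mu$ limit of Lemma~\ref{positiveeig} with continuity of the spectrum. The key extra observation is that with two zero rates $\det(\jac \mbf f(\mbf P)-\mu\,\mathbb D)=M_{123}-\mu\,D_{i_3}M_{i_1i_2}<0$ for all $\mu\ge 0$, so no real eigenvalue can ever pass through $0$.

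\textbf{What each buys.} The paper's argument is purely algebraic and self-contained. It does not lean on the singular-limit Lemma~\ref{positiveeig}, and it brackets the critical $\mu$ explicitly. Yours is shorter and says more:
\begin{enumerate}
\item the state genuinely becomes unstable for large $\mu$, rather than merely having an imaginary pair touch the axis;
\item every crossing of $\Re(\lambda)=0$, in particular the first, is through a non-real pair, even when $S_{i_1i_2}$ has two real positive eigenvalues (the situation of Remark~\ref{differentbehaviors});
\item it is exactly the $n=3$ instance of the mechanism in Theorem~\ref{biggestresult}: $q=2$, and there is no sign change from $M_{i_1i_2}>0$ to $-M_{123}>0$, so $p=0$.
\end{enumerate}

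\textbf{What you can drop.} Your ``main obstacle'' is not part of the statement. The proposition only asks for some $D_i\ge 0$ giving a wave instability, which you have already shown. The paper also stops at a purely imaginary pair and treats genuine bifurcation points separately, in the unlabeled lemma preceding Theorem~\ref{thm:bifurcations}. The perturbation to strictly positive rates is likewise optional.

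\textbf{One small tightening.} To say the minimum $3$ is attained, you need one stable $3\times 3$ matrix satisfying the hypotheses, since your construction is conditional on them. The matrix \eqref{wonderfulDU} is such an example.
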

    Before proceeding \evs{with} a proof, we establish a few useful standard lemmas.
    \begin{lemma}\label{n2Hopf}
    	Let $n = 2$. Suppose that $\mbf P$ is a stable equilibrium point of \eqref{generalsystem} in the absence of diffusion. Then the system will not admit a wave instability around $\mbf P$ for any diffusion matrix ${\mathbb D}$.
    \end{lemma}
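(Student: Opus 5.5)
For $n=2$, write $J = \jac \mbf f(\mbf P) = (a_{ij})$ and $\mathbb D = \diag(D_1,D_2)$ with $D_1,D_2 \geq 0$. The plan is to show that for every $\mu = k^2 \geq 0$ the trace of $J-\mu\,\mathbb D$ is strictly negative. A wave instability needs a purely imaginary pair $\lambda = \pm i\omega$ with $\omega \neq 0$ at some $\mu^*>0$. It also needs the real part of a complex pair to cross zero. Both conditions force the trace to vanish, which the negative trace rules out.

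First, I will use stability of $\mbf P$ without diffusion. By Lemma \ref{necessarystability} with $n=2$ (or directly from Routh--Hurwitz), $\operatorname{tr} J = M_1 + M_2 < 0$ and $\det J = M_{12} > 0$.

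Second, the characteristic polynomial of $J - \mu\,\mathbb D$ is
\begin{align*}
\lambda^2 - T(\mu)\,\lambda + \Delta(\mu), \qquad T(\mu) = M_1 + M_2 - \mu\,(D_1 + D_2), \quad \Delta(\mu) = \det(J - \mu\,\mathbb D).
\end{align*}
Here $\Delta(\mu)$ is given by Lemma \ref{detlemma}, though its explicit form is not needed.

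Third, I will treat the two cases for the eigenvalues. If they are a non-real conjugate pair, their common real part is $T(\mu)/2$. Since $D_i \geq 0$ and $\mu \geq 0$, we get $T(\mu) \leq M_1 + M_2 < 0$, so this real part is strictly negative for all $\mu \geq 0$. In particular, no purely imaginary pair $\pm i\omega$ with $\omega \neq 0$ can occur, because that would require $T(\mu^*)=0$. Hence the dispersion curve of a complex mode can never reach $\mathrm{Re}\,\lambda = 0$, let alone have a tangency there. If instead the eigenvalues are real, any instability arises through a real eigenvalue crossing zero, i.e.\ $\Delta(\mu^*) = 0$. That is a Turing bifurcation, not a wave bifurcation, in the sense of Fig.~\ref{fig:dispersionrelations}(b).

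The only delicate point is the definitional one in the last step: a wave bifurcation must have $\lambda(k^*)$ purely imaginary and nonzero, and a zero eigenvalue is classed as Turing. Once that convention is fixed, the argument is just the monotone decrease of the trace in $\mu$, so I expect no real obstacle.
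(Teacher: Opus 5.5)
Your proposal is correct and follows essentially the same route as the paper. Stability gives $\operatorname{tr} J<0$, so $\operatorname{tr}(J-\mu\,\mathbb D)=\operatorname{tr}J-(D_1+D_2)\mu<0$ for all $\mu\ge 0$, and hence any complex-conjugate pair has real part $T(\mu)/2<0$ and can never be purely imaginary; the extra facts you mention ($\det J>0$, and that a real zero crossing counts as Turing) are not needed for the argument.
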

    \begin{proof}
    	Let us label the elements of $\jac \mbf f(\mbf P)$ as follows
    	\begin{align*}
    		\jac \mbf f(\mbf P) = \begin{pmatrix}
    			\alpha & \beta
    			\\
    			\gamma & \delta
    		\end{pmatrix}.
    	\end{align*}
    	Given that all the eigenvalues of $\jac \mbf f(\mbf P)$ are in the left-half plane, then its trace must be negative. That is $\alpha + \delta < 0$. To have a wave instability around $\mbf P$, we need to find $\mu > 0$ such that the eigenvalues of $\jac \mbf f(\mbf P) - \mu \, \mathbb D$ are purely imaginary. We have that
    	\begin{align*}
    		\tr(\jac \mbf f(\mbf P) - \mu \, \mathbb D) &= \tr(\jac \mbf f(\mbf P)) - \left(D_1 + D_2\right) \mu < 0,
    		\\ 
    		\det(\jac \mbf f(\mbf P) - \mu \, \mathbb D) &= \det(\jac \mbf f(\mbf P)) - \left(D_2 \, \alpha + D_1 \, \delta\right) \mu + D_1\, D_2 \, \mu^2
    	\end{align*}
    	This means that the eigenvalues of $\jac \mbf f(\mbf P)-\mu \, \mathbb D$ are given by
    	\begin{align}
    		\lambda_{1, 2} = \frac{\tr(\jac \mbf f(\mbf P)) - \left(D_1 + D_2\right) \mu \pm \sqrt{\tr(\jac \mbf f(\mbf P) - \mu \, \mathbb D)^2-4\det(\jac \mbf f(\mbf P)-\mu \, \mathbb D)}}{2}, \label{eq:l12}
    	\end{align}
    	Finally, note that in case that $\lambda_{1, 2}$ are complex conjugate, then \eqref{eq:l12} shows that $\Re(\lambda_{1, 2}) < 0$ for all $\mu > 0$, which concludes the proof. 
    \end{proof}
    Next, let us consider the case $n = 3$, and use the notation for principal minors established in Definition \ref{def:minors}. The following two \evs{lemmas} establish a general condition for finding non-negative diffusion rates for a wave instability, and a sufficient condition under which it cannot occur. 
    \begin{lemma} \label{th:bdef3}
    	Let $n = 3$. Then the condition for the matrix $\jac \mbf f(\mbf P) - \mu \, \mathbb D$ to have a purely imaginary pair of eigenvalues for some $\mu > 0$ can be expressed as
    	\begin{equation}
    		b_3 \, \mu^3 + b_2 \, \mu^2 + b_1 \, \mu + b_0 = 0, \label{complexpoleq3D}
    	\end{equation}
    	where
    	\begin{align*}
    		b_3 &= \left(D_1 + D_2\right) \left(D_1 + D_3\right) \left(D_2 + D_3\right),
    		\\ 
    		b_2 &= - D_1^2 \left(M_2 + M_3\right) - D_2^2 \left(M_1 + M_3\right) - D_3^2 \left(M_1 + M_2\right) 
    		\\
    		& \quad - 2\left(D_1 \, D_2 + D_1 \, D_3 + D_2 \, D_3\right)\left(M_1 + M_2 + M_3\right),
    		\\ 
    		b_1 &= D_1 \left(\left(M_2 + M_3 \right) \left(M_1 + M_2 + M_3\right) + M_{12} + M_{13}\right)
    		\\
    		& \quad + D_2 \left(\left(M_1 + M_3 \right) \left(M_1 + M_2 + M_3 \right) + M_{12} + M_{23}\right)
    		\\ 
    		&\quad + D_3 \left(\left(M_1 + M_2\right) \left(M_1 + M_2 + M_3 \right) + M_{13} + M_{23}\right),
    		\\ 
    		b_0 &= - \left(M_1 + M_2 + M_3\right)\left(M_{12} + M_{13} + M_{23}\right) + \det(\jac \mbf f(\mbf P)).
    	\end{align*}
    \end{lemma}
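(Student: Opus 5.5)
The natural route is the classical criterion for a real cubic to have a pair of purely imaginary roots. Let $A(\mu) = \jac \mbf f(\mbf P) - \mu \, \mathbb D$ and write its characteristic polynomial as
\begin{align*}
\det(\lambda I - A(\mu)) = \lambda^3 + c_2(\mu)\,\lambda^2 + c_1(\mu)\,\lambda + c_0(\mu),
\end{align*}
where $c_2 = -\tr A$, $c_1$ is the sum of the $2\times 2$ principal minors of $A$, and $c_0 = -\det A$. Substituting $\lambda = i\omega$ with $\omega \in \mathbb R\setminus\{0\}$ and separating real and imaginary parts gives $c_0 = c_2\,\omega^2$ and $c_1 = \omega^2$. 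Hence a nonzero purely imaginary pair exists if and only if $c_1 > 0$ and $c_0 = c_1 c_2$; conversely, if $c_1>0$ and $c_0=c_1c_2$, the cubic factors as $(\lambda + c_2)(\lambda^2 + c_1)$, which has roots $\pm i\sqrt{c_1}$. So the condition in the lemma is the equation $c_1(\mu)\,c_2(\mu) - c_0(\mu) = 0$, understood together with the positivity requirement $c_1>0$ for the actual roots. I would state that proviso explicitly as an accompanying side condition, since the lemma only asserts that the condition ``can be expressed as'' \eqref{complexpoleq3D}.

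The work is then to compute $c_0, c_1, c_2$ in terms of the principal minors $M_{\dots}$ and the $D_i$, using Lemma \ref{detlemma} (and its $2\times 2$ analogue) on each principal submatrix of $A(\mu)$. The results are as follows:
\begin{align*}
c_2 &= -(M_1+M_2+M_3) + \mu\,(D_1+D_2+D_3),\\
c_1 &= M_{12}+M_{13}+M_{23} - \mu\bigl[D_1(M_2+M_3) + D_2(M_1+M_3) + D_3(M_1+M_2)\bigr]\\
&\quad + \mu^2\,(D_1D_2 + D_1D_3 + D_2D_3),\\
c_0 &= -\det(\jac \mbf f(\mbf P)) + \mu\,(D_1M_{23} + D_2M_{13} + D_3M_{12})\\
&\quad - \mu^2\,(D_1D_2M_3 + D_1D_3M_2 + D_2D_3M_1) + \mu^3 D_1D_2D_3.
\end{align*}
The formula for $c_0$ is just $-\det A(\mu)$ from Lemma \ref{detlemma}. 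The formula for $c_1$ follows by expanding each $2\times 2$ minor of $A(\mu)$ directly.

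Next I would expand $c_1c_2 - c_0$ and collect powers of $\mu$.
\begin{itemize}
\item The $\mu^3$ coefficient is $(D_1+D_2+D_3)(D_1D_2+D_1D_3+D_2D_3) - D_1D_2D_3$, which equals $(D_1+D_2)(D_1+D_3)(D_2+D_3)$; this matches $b_3$.
\item The constant term is $-(M_1+M_2+M_3)(M_{12}+M_{13}+M_{23}) + \det(\jac \mbf f(\mbf P))$; this matches $b_0$.
\item For the $\mu^2$ term, I would check that the mixed products such as $-D_1^2 M_2$ and $-D_1 D_2 M_3$ combine correctly. The product $(\sum D)\bigl(-\sum_i D_i(\textstyle\sum_{j\ne i} M_j)\bigr)$ contributes $-D_1^2(M_2+M_3)$ together with cross terms of the form $-D_1D_2(M_1+M_2+2M_3)$. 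Adding $-(\sum M)(\sum D_iD_j)$ and the $+D_1D_2M_3$ coming from $-c_0$ gives $-2D_1D_2(M_1+M_2+M_3)$, as required.
\item The $\mu^1$ term is similar. It equals $D_1\bigl[(M_2+M_3)(M_1+M_2+M_3) + M_{12}+M_{13}+M_{23}\bigr] + \dots - (D_1M_{23} + \dots)$, and the subtracted $D_1M_{23}$ cancels one minor in the first bracket, giving $D_1\bigl((M_2+M_3)(\sum M) + M_{12}+M_{13}\bigr)$; the other two terms follow by symmetry.
\end{itemize}

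The only real obstacle is this bookkeeping in the $\mu^2$ and $\mu$ coefficients. I would organise it by exploiting the symmetry under permuting the indices $1,2,3$: verify the $D_1^2$ and $D_1D_2$ coefficients (for $\mu^2$) and the $D_1$ coefficient (for $\mu$), and obtain all remaining coefficients by permuting indices.
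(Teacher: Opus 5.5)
Your proposal is correct and follows essentially the same route as the paper: write the monic characteristic cubic of $\jac \mbf f(\mbf P) - \mu\,\mathbb D$, substitute $\lambda = i\omega$ to get $c_1 > 0$ and $c_1 c_2 - c_0 = 0$, and expand in powers of $\mu$; your coefficient bookkeeping checks out. Your version is slightly sharper than the paper's, because the factorisation $(\lambda + c_2)(\lambda^2 + c_1)$ gives the converse directly and shows that the paper's extra proviso $a_2 \neq 0$ is unnecessary.
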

    \begin{proof}
    	The characteristic polynomial of $\jac \mbf f(\mbf P) - \mu \, \mathbb D$ can be written as
    	\begin{align}
    		p_D(\lambda, \mu) := \lambda^3 + a_2 \, \lambda^2 + a_1 \, \lambda + a_0, \label{eq:PD3}
    	\end{align}
    	where
    	\begin{align*}
    		a_2 &= -\left(M_1 + M_2 + M_3\right) + \left(D_1 + D_2 + D_3\right) \mu,
    		\\ 
    		a_1 &= M_{13} + M_{12} + M_{23} - \left(D_1\left(M_2 + M_3\right) + D_2\left(M_1 + M_3 \right) + D_3\left(M_1 + M_2\right)\right) \mu 
    		\\ 
    		&\quad + \left(D_1 \, D_2 + D_1 \, D_3 + D_2 \, D_3\right) \mu^2
    		\\ 
    		a_0 &= - \det(\jac \mbf f(\mbf P) - \mu \, \mathbb D).
    	\end{align*}
    	Therefore, when we evaluate this characteristic polynomial at $\lambda = \omega \, i$, we will have the following expression
    	\begin{align*}
    		p_D(\omega \, i, \mu) &:= - \omega^3 \, i - a_2 \, \omega^2 + a_1 \, \omega \, i + a_0 = \omega \left(- \omega^2 + a_1\right) i + \left(- a_2 \, \omega^2 + a_0\right).
    	\end{align*}
    	Setting this expression to zero we get two expressions for $\omega^2$ that are well defined if and only if $a_1>0$, $a_2\neq 0$, and $a_1a_2-a_0=0$. Upon substitution of the definitions of $a_0,a_1,a_2$ into this final equality, we arrive at the polynomial in $\mu$ given \evs{at} the statement of the result.
    \end{proof}

    \begin{lemma}\label{hopf3sufficient}
    	Let $n = 3$. If $M_{i_1} + M_{i_2}\leq 0$ for all $1 \leq i_1 < i_2 \leq 3$, $M_{12} + M_{13}\geq 0$, $M_{12} + M_{23}\geq 0$, $M_{13} + M_{23}\geq 0$, and $\det(\jac \mbf f(\mbf P)) \leq 0$, then the system will not be able to show a wave instability at $\mbf P$.
    \end{lemma}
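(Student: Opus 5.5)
We use Lemma \ref{th:bdef3}. A wave instability needs some $\mu>0$ at which $\jac \mbf f(\mbf P) - \mu\,\mathbb D$ has a purely imaginary pair $\pm \omega i$ with $\omega \neq 0$. By that lemma, $\mu$ must then be a positive root of \eqref{complexpoleq3D}. The plan is to show that, under the stated hypotheses, every coefficient $b_0,b_1,b_2,b_3$ is nonnegative for all $D_i\ge 0$, with $b_3>0$ or $b_0>0$ in the relevant cases. A polynomial with all coefficients of one sign and at least one nonzero has no positive root, so no $\mu>0$ can satisfy \eqref{complexpoleq3D}, and there is no wave instability for any admissible $\mathbb D$.

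\textbf{Sign of each coefficient.}
\begin{itemize}
\item $b_3=(D_1+D_2)(D_1+D_3)(D_2+D_3)\ge 0$ is immediate.
\item For $b_2$, the hypothesis $M_{i_1}+M_{i_2}\le 0$ for every pair gives $-D_1^2(M_2+M_3)\ge0$, and likewise for the other two squared terms. Summing the three pairwise inequalities gives $2(M_1+M_2+M_3)\le 0$, so the cross term is also $\ge0$. Alternatively, stability (Lemma \ref{necessarystability}) gives the trace negative directly.
\item For $b_1$, look at the coefficient of $D_1$. The factors $M_2+M_3\le0$ and $M_1+M_2+M_3<0$ make their product $\ge 0$. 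Adding $M_{12}+M_{13}\ge0$ by hypothesis, the whole coefficient of $D_1$ is $\ge0$. The coefficients of $D_2$ and $D_3$ are handled symmetrically.
\item For $b_0$, stability together with Lemma \ref{necessarystability} gives $M_1+M_2+M_3<0$ and $M_{12}+M_{13}+M_{23}>0$, so the first term of $b_0$ is positive. Adding $\det(\jac \mbf f(\mbf P))$ needs care. Stability forces $\det<0$, which the hypothesis $\det\le0$ only restates, so positivity of $b_0$ must come from elsewhere. It does: $b_0$ is exactly $-(a_2a_1-a_0)$ evaluated at $\mu=0$, and Routh--Hurwitz for the stable $3\times3$ matrix gives $a_2a_1>a_0$ there. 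Hence $b_0<0$, not $>0$.
\end{itemize}

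\textbf{Correcting the sign convention.} The last step shows that the sign bookkeeping must be redone. Lemma \ref{th:bdef3} presumably defines $b_i$ via $a_0-a_1a_2=0$ up to an overall sign, and the printed formulas may mix the two conventions. The first task is therefore to recompute $a_1a_2-a_0$ as a polynomial in $\mu$ directly from the $a_i$ in the proof of Lemma \ref{th:bdef3}. Its leading coefficient is $(D_1+D_2+D_3)(D_1D_2+D_1D_3+D_2D_3)-D_1D_2D_3=(D_1+D_2)(D_1+D_3)(D_2+D_3)$, and its constant term is $(-\mathrm{tr})(\sum M_{ij})+\det>0$ by Routh--Hurwitz. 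The goal is then to check, term by term using the hypotheses, that the $\mu$ and $\mu^2$ coefficients of $a_1a_2-a_0$ are also nonnegative. With the leading and constant terms positive, $a_1a_2-a_0>0$ for all $\mu>0$, so the Hurwitz condition for a purely imaginary pair never holds.

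\textbf{Main obstacle.} The hard part is this consistent sign bookkeeping: expanding $a_1a_2$ and $a_0=-\det(\jac \mbf f(\mbf P)-\mu\mathbb D)$ with Lemma \ref{detlemma}, then collecting the $\mu^1$ and $\mu^2$ terms so that each is visibly a sum of products of nonpositive-by-nonpositive or nonnegative quantities. It is in the $\mu^1$ coefficient that the hypotheses $M_{ij}+M_{ik}\ge0$ enter. The determinant contributes terms like $D_1M_{23}$, which combine with the $D_1(M_{12}+M_{13}+M_{23})$ part of the product to leave exactly $D_1(M_{12}+M_{13})$. This is where I expect the computation to need checking, and the condition $\det\le0$ should be used only to secure the sign of the constant term.
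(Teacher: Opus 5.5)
Your strategy is the paper's: show that every coefficient of \eqref{complexpoleq3D} is nonnegative with $b_0>0$, so that $a_1a_2-a_0$ cannot vanish for $\mu>0$. Your first bullet list already handles $b_3,b_2,b_1$ correctly. The $b_0$ step, however, contains a sign error, and that error sends the rest of the proposal in the wrong direction. At $\mu=0$ you have $a_2=-(M_1+M_2+M_3)$, $a_1=M_{12}+M_{13}+M_{23}$ and $a_0=-\det(\jac \mbf f(\mbf P))$, hence
\[
a_1a_2-a_0=-(M_1+M_2+M_3)(M_{12}+M_{13}+M_{23})+\det(\jac \mbf f(\mbf P))=b_0 .
\]
So $b_0=+(a_1a_2-a_0)$, not $-(a_1a_2-a_0)$. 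The Routh--Hurwitz condition $a_1a_2>a_0$ for the stable matrix therefore gives $b_0>0$ directly; this is exactly Lemma \ref{hopf3necessary}. Your conclusion ``$b_0<0$'' is false.

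Consequently the printed coefficients in Lemma \ref{th:bdef3} do not mix conventions, and the ``correcting the sign convention'' step is unnecessary. The constant and leading terms you recompute are literally the printed $b_0$ and $b_3$, and your $\mu^1$ computation reproduces the printed $b_1$. In the $\mu^2$ coefficient, let $k$ be the index complementary to $\{i,j\}$. The term $M_kD_iD_j$ coming from $-a_0=\det(\jac \mbf f(\mbf P)-\mu\,\mathbb D)$ cancels the surplus $-M_k$ in the $D_iD_j$ coefficient of $a_1a_2$, leaving $-2(M_1+M_2+M_3)$ as printed. So your first bullet list plus $b_0>0$ is already the complete proof, and it coincides with the paper's one-line argument.

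Finally, your closing remark that $\det(\jac \mbf f(\mbf P))\le 0$ ``secures the sign of the constant term'' is wrong, as you yourself noted earlier: a negative determinant only lowers $b_0$. The positivity of $b_0$ comes solely from the implicit stability of $\mbf P$ via Routh--Hurwitz, which the paper's terse proof also uses tacitly. Given stability, the hypothesis $\det(\jac \mbf f(\mbf P))\le 0$ is redundant.
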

    
    \begin{proof}
    	The proof is clear because, under the hypotheses, the coefficients in Lemma \ref{th:bdef3}, $b_0, b_1, b_2, b_3$, will be non-negative. This means that \eqref{complexpoleq3D} cannot have a positive real solution for $\mu$.
    \end{proof}
    
    \begin{remark}
      The previous \evs{lemma} says that if \evs{all $2 \times 2$ principal submatrices of the linearisation around a homogeneous steady state of a $3$-component system have eigenvalues with negative real part}, then the system will not be able to undergo a wave instability. \evs{Proposition \ref{th:hopf3} states} the converse, namely that if there is an unstable $2\times2$ principal submatrix \evs{having} two eigenvalues with a positive real part, then there will exist diffusion rates for which there is a wave instability.
    \end{remark}
    
    \begin{lemma}\label{hopf3necessary}
    	Let $n = 3$ and suppose $\mbf P$ be a stable equilibrium point of \eqref{generalsystem} in the absence of diffusion, then $M_1 + M_2 + M_3 < 0$, $M_{12} + M_{13} + M_{23} > 0$, $\det(\jac \mbf f(\mbf P)) < 0$, and $b_0 > 0$.
    \end{lemma}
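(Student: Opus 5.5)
The plan is to apply the Routh--Hurwitz criterion to the characteristic polynomial of $\jac \mbf f(\mbf P)$, written in the form used in the proof of Lemma \ref{th:bdef3} with $\mu = 0$. Setting $\mu = 0$ in \eqref{eq:PD3}, the characteristic polynomial is $\lambda^3 + a_2\lambda^2 + a_1\lambda + a_0$ with
\begin{align*}
a_2 = -(M_1+M_2+M_3), \qquad a_1 = M_{12}+M_{13}+M_{23}, \qquad a_0 = -\det(\jac \mbf f(\mbf P)).
\end{align*}
Since $\mbf P$ is stable without diffusion, all three roots lie in the open left half-plane.

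The first three inequalities follow from Lemma \ref{necessarystability} with $n=3$ and $k=1,2,3$. That lemma gives directly $-(M_1+M_2+M_3)>0$, $M_{12}+M_{13}+M_{23}>0$ and $-\det(\jac \mbf f(\mbf P))>0$. Equivalently, one can expand $(\lambda-\lambda_1)(\lambda-\lambda_2)(\lambda-\lambda_3)$ with $\Re\lambda_j<0$: the roots are either one negative real and a conjugate pair with negative real part, or three negative reals, and in both cases the elementary symmetric functions have the stated signs.

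For $b_0$, I note that $b_0 = -(M_1+M_2+M_3)(M_{12}+M_{13}+M_{23}) + \det(\jac \mbf f(\mbf P)) = a_2a_1 - a_0$. The remaining Routh--Hurwitz condition for a cubic with all roots in the open left half-plane is exactly $a_2a_1 - a_0 > 0$, which is the $2\times 2$ Hurwitz determinant. To make this self-contained rather than merely quoting Routh--Hurwitz, I would use the identity
\[
a_2a_1 - a_0 = -(\lambda_1+\lambda_2)(\lambda_1+\lambda_3)(\lambda_2+\lambda_3).
\]
If the roots are $\lambda_1 = r<0$ and $\lambda_{2,3} = \sigma \pm i\tau$ with $\sigma<0$, this equals $-2\sigma\bigl((r+\sigma)^2+\tau^2\bigr)>0$. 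If all three roots are negative reals, each pairwise sum is negative, so the product is negative and $b_0>0$.

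No step is a real obstacle. The only care needed is verifying the symmetric-function identity and matching the sign conventions for $b_0$ from Lemma \ref{th:bdef3}. That identity is the one computation I would check explicitly by expanding both sides in terms of $e_1 = \lambda_1+\lambda_2+\lambda_3$, $e_2$ and $e_3$: since $(e_1-\lambda_1)(e_1-\lambda_2)(e_1-\lambda_3) = e_1e_2 - e_3$, and $a_2 = -e_1$, $a_1 = e_2$, $a_0 = -e_3$, the result follows.
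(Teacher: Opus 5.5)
Your proof is correct and follows the paper's route. The first three inequalities come from Lemma~\ref{necessarystability} with $n=3$, and $b_0>0$ is the remaining Routh--Hurwitz condition $a_2a_1-a_0>0$ for the cubic; your identity $a_2a_1-a_0=-(\lambda_1+\lambda_2)(\lambda_1+\lambda_3)(\lambda_2+\lambda_3)$ is correct and makes that step self-contained rather than quoting the criterion as the paper does.
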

    
    \begin{proof}
    	Most of the statement is a trivial consequence of \evs{Lemma} \ref{necessarystability} for the case $n = 3$. The only thing that remains to be proved is that $b_0 > 0$. Upon writing the characteristic polynomial of $\jac \mbf f(\mbf P)$ as $p(\lambda) = \lambda^3 + c_2 \, \lambda^2 + c_1 \, \lambda + c_0$. Then the Routh-Hurwitz criterion ensures that $\mbf P$ is stable if and only if $c_0, c_1, c_2 > 0$ and $(c_1 \, c_2 - c_0)/c_2 > 0$. This final condition can be rearranged to read 
    	$$
    	   b_0 = -\left(M_1 + M_2 + M_3\right)\left(M_{13} + M_{12} + M_{23}\right) + \det(\jac \mbf f(\mbf P)) > 0.
    	$$
    \end{proof}
    
    Finally, we can prove Proposition \ref{th:hopf3}.
    
    \begin{proof}[Proof of Proposition \ref{th:hopf3}]
    	Lemma \ref{n2Hopf} shows that wave instabilities cannot occur for $n = 2$. Therefore, let $n = 3$. Lemma \ref{hopf3sufficient} shows that no wave instability can occur if $\jac \mbf f(\mathbb P)$ has only $2\times 2$ principal submatrices with a pair of negative real-part eigenvalues. The condition on principle submatrices stated in the theorem is essentially to assume that there is one principal 2x2 submatrix with a pair of eigenvalues with a positive real part. Without loss of generality, let us assume that $M_2 + M_3 > 0$ and $M_{23} > 0$. Then we can choose $D_2 = D_3 = 0$. With this, note that $b_3 = 0$ and the coefficient $a_1$ defined in \eqref{eq:PD3} is a decreasing function on $\mu$, which is positive when $\mu = 0$ and it is equal to zero when
    	\begin{align*}
    		\mu = \mu^* = \frac{M_{13} + M_{12} + M_{23}}{D_1\left(M_2 + M_3\right)} > 0.
    	\end{align*}
    	Therefore, using the notation of Lemma \ref{th:bdef3}, if we define the function
    	\begin{align*}
    		q(\mu) = b_2 \, \mu^2 + b_1 \, \mu + b_0,
    	\end{align*}
    	then we have that $q(0) = b_0$ \evs{which, by Lemma \ref{hopf3necessary}, is strictly positive, while
    	\begin{align*}
    		q\left(\mu^*\right) = M_{123} - \frac{M_{23} \left(M_{12} + M_{13} + M_{23}\right)}{M_2 + M_3}.
    	\end{align*}
    	is strictly negative}. Therefore, thanks to the Intermediate Value Theorem, there must exist at least one solution to the equation $q(\mu) = 0$ in the interval $0 < \mu < \mu^*$, which concludes the proof.
    \end{proof}

\section{General conditions for Turing and wave bifurcations}\label{sec:general}
    So far, we have established sufficient conditions for the absence of a Turing instability in a system around $\mbf P$ (Theorem \ref{th:noturing}) along with the minimum number of components, a sufficient condition for a wave instability to occur, and Proposition \ref{th:hopf3}. In this section, we shall establish further general conditions for system \eqref{generalsystem} to be able to show a Turing or wave instability around $\mbf P$ for any number of components and show the existence of Turing or wave bifurcations as parameters vary.
    
    First, we are going to state a general standard result that gives us some conditions in which we can have neither a Turing nor a wave instability around $\mbf P$.
    
    \begin{theorem}
    	Let $\mbf P$ be a stable equilibrium point of \eqref{generalsystem} in the absence of diffusion. Suppose that all diffusion rates are equal, that is there exists $\delta \geq 0$ such that $D_i = \delta$ for every integer $1 \leq i \leq n$. Then \eqref{generalsystem} can admit neither a Turing nor a wave instability around $\mbf P$.
    \end{theorem}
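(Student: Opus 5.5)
When all diffusion rates are equal, $\mathbb D = \delta I$ and the linearised matrix becomes $\jac \mbf f(\mbf P) - \mu \, \delta \, I$. The plan is to observe that subtracting a multiple of the identity merely shifts the spectrum, and then read off the sign of the real parts directly.

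First, let $\lambda_1, \ldots, \lambda_n$ be the eigenvalues of $\jac \mbf f(\mbf P)$. By the stability assumption, $\Re(\lambda_j) < 0$ for every $j$. For any $\mu = k^2 \geq 0$ and any vector $\mbf v$,
\begin{align*}
\left(\jac \mbf f(\mbf P) - \mu \, \delta \, I\right) \mbf v = \left(\jac \mbf f(\mbf P) \, \mbf v\right) - \mu \, \delta \, \mbf v .
\end{align*}
Hence the eigenvectors (and generalised eigenvectors) are unchanged, and the eigenvalues are exactly $\lambda_j - \mu \, \delta$. Equivalently, $\det(\jac \mbf f(\mbf P) - \mu \, \delta \, I - \lambda I) = p(\lambda + \mu \, \delta)$, where $p$ is the characteristic polynomial of $\jac \mbf f(\mbf P)$.

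Second, since $\delta \geq 0$ and $\mu \geq 0$, we get
\begin{align*}
\Re(\lambda_j - \mu \, \delta) = \Re(\lambda_j) - \mu \, \delta \leq \Re(\lambda_j) < 0
\end{align*}
for all $j$ and all wavenumbers $k$. So every dispersion curve satisfies $\Re(\lambda(k)) < 0$ uniformly in $k$, bounded above by $\max_j \Re(\lambda_j) < 0$. In particular, no $k>0$ gives $\lambda = 0$, which rules out a Turing bifurcation. Likewise no $k>0$ gives a purely imaginary $\lambda = i\omega$, which rules out a wave bifurcation. Nor is there any $k$ with $\Re(\lambda(k)) > 0$, so no instability of either kind exists.

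There is no real obstacle here. The only point worth stating carefully is that the shift by a scalar multiple of the identity commutes with everything, so no perturbation argument is needed. The case $\delta = 0$ reduces trivially to the assumed stability of $\mbf P$.
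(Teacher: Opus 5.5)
Your proof is correct and follows the paper's argument: the paper also notes that $\lambda$ is an eigenvalue of $\jac \mbf f(\mbf P) - \mu \, \delta \, I$ exactly when $\lambda + \mu \, \delta$ is an eigenvalue of $\jac \mbf f(\mbf P)$, which gives $\Re(\lambda) < -\mu \, \delta \leq 0$ for every $\mu \geq 0$. Your non-strict inequality $\Re(\lambda_j) - \mu \, \delta \leq \Re(\lambda_j)$ treats the case $\delta = 0$ more carefully than the paper, which writes $-\mu \, \delta < 0$.
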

    \begin{proof}
    	Under the hypothesis on equal diffusion constants, note that the eigenvalues, $\lambda$, of $\jac \mbf f(\mbf P) - \mu \, \mathbb D$ satisfy the equation
    	\begin{align*}
    		\det(\jac \mbf f(\mbf P) - \left(\mu \, \delta + \lambda\right) I)&=0,
    	\end{align*}
    	which implies that $\mu \, \delta + \lambda$ is an eigenvalue of $\jac \mbf f(\mbf P)$. Therefore, because $\mbf P$ is \evs{stable, $\Re(\mu \, \delta + \lambda) < 0$ implies $\Re (\lambda) < - \mu \, \delta < 0$}.
    	
    	Hence the eigenvalues of $\jac \mbf f(\mbf P)-\mu \, \mathbb D$ cannot cross the imaginary axis for any $\mu > 0$. 
    \end{proof}
    We are now ready to state our main theorem, which develops sufficient conditions for system \eqref{generalsystem} to admit a Turing or wave instability pattern around $\mbf P$; see also Fig.~\ref{fig:dispersionrelationsdependingonD} for the main idea.
    \begin{theorem} \label{biggestresult}
    	Let $\mbf P$ be a stable homogeneous steady state of \eqref{generalsystem} in the absence of diffusion. If there exists an integer $1\leq \ell \leq n - 1$, and a word $\left(j_1, j_2, \ldots, j_n\right)$ of non-repeated integers between $1$ and $n$ (not necessarily increasing), such that $S_{s\left(j_1, j_2, \ldots, j_\ell\right)}$ has $q > 0$ eigenvalues with a positive real part, then if $M_{s\left(j_1, j_2, \ldots, j_k\right)}\neq 0$ for each integer $\ell \leq k \leq n$ and $(-1)^k M_{s\left(j_1, j_2, \ldots, j_k\right)}$ changes its sign $p$ times as $k$ \evs{increases} from $\ell$ to $n$, then:
    	\begin{enumerate}
    		\item $q$ is even if and only if $p$ is even;
    		\item if $p$ is even (respectively, odd), then there exists a choice of non-negative diffusion rates $D_i$, $i = 1,\ldots n$ such that $\jac \mbf f(\mbf P) - \mu \, \mathbb D$ has an even (odd) number of zero eigenvalues, bounded above by $p$ as $\mu$ varies from $0$ to $\infty$;
    		\item furthermore, with those diffusion rates, if $q > p$, then $\jac \mbf f(\mbf P) - \mu \, \mathbb D$  has at least $\dfrac{q - p}{2}$ pairs of complex conjugate eigenvalues with zero real part as $\mu$ varies from $0$ to $\infty$;
    		\item in particular, with the same diffusion rates, the number of crossings to the axis $\Re(\lambda) = 0$ will be at least $q$ as $\mu$ varies from 0 to $\infty$.
    	\end{enumerate}
    \end{theorem}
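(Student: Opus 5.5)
Set $S=s(j_1,\dots,j_\ell)$ and $T=\{j_{\ell+1},\dots,j_n\}$. The plan is to use the diffusion rates as a homotopy between two limits that the earlier lemmas already describe. As $\mu\to 0^+$, the eigenvalues of $\jac \mbf f(\mbf P)-\mu\,\mathbb D$ approach those of $\jac \mbf f(\mbf P)$, which all have negative real part. As $\mu\to\infty$, Lemma~\ref{positiveeig} with $D_i=0$ for $i\in S$ shows that $\ell$ eigenvalues approach those of $S_{s(j_1,\dots,j_\ell)}$, while the remaining $n-\ell$ have real parts tending to $-\infty$. So for large $\mu$ exactly $q$ eigenvalues lie in the open right half-plane.

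\textbf{Choice of diffusion rates.} Take $D_{j_i}=0$ for $i\le \ell$. For $k=\ell+1,\dots,n$, take $D_{j_k}=\varepsilon^{n-k}$, which is the construction in the proof of Lemma~\ref{signpreservation} restricted to the indices in $T$. With vanishing rates on $S$, Lemma~\ref{detlemma} shows that $b_m=0$ for $m>n-\ell$. The remaining coefficients $b_m$ with $0\le m\le n-\ell$ involve only minors whose index set contains $S$. The $\varepsilon$-ordering argument of Lemma~\ref{signpreservation} then gives, for $\varepsilon$ small,
\[
\operatorname{sign}(b_m)=\operatorname{sign}\bigl((-1)^{m}M_{s(j_1,\dots,j_{n-m})}\bigr),
\]
up to a harmless global sign convention. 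Hence the coefficient sequence $b_0,\dots,b_{n-\ell}$ of the polynomial $\mu\mapsto\det(\jac \mbf f(\mbf P)-\mu\,\mathbb D)$ has exactly $p$ sign changes.

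\textbf{Parity (item 1).} The sign of $\det(\jac \mbf f(\mbf P)-\mu\,\mathbb D)$ is that of $b_0$ at $\mu=0$ and that of $b_{n-\ell}$ as $\mu\to\infty$. These signs differ exactly when $p$ is odd. On the other hand, this determinant is the product of the eigenvalues. By Lemma~\ref{positivitylemma} its sign is $(-1)^n$ times the parity of the number of positive real eigenvalues. Its sign at infinity is governed by the $q$ unstable eigenvalues of $S_{s(j_1,\dots,j_\ell)}$ together with $n-\ell$ eigenvalues that are real and large negative: the Gershgorin argument of Lemma~\ref{positiveeig} shows these approach $-\mu D_i$. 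Comparing with the sign at $\mu=0$, which is stable, gives that $(-1)^{n-\ell}M_{s(j_1,\dots,j_n)}/M_{s(j_1,\dots,j_\ell)}$ has sign $(-1)^p$. Applying Lemma~\ref{wonderfulemma}-type reasoning to $M_{s(j_1..j_\ell)}$ yields that $(-1)^\ell M_S$ has sign $(-1)^{q}$. Here one uses that the unstable eigenvalues of a real matrix are either real positive or come in conjugate pairs, whose product is positive; the parity of the number of positive real eigenvalues then tracks that of $q$ provided $S_{s(j_1,\dots,j_\ell)}$ has no zero eigenvalue, which holds since $M_S\neq0$. Together these give $q\equiv p \pmod 2$.

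\textbf{Items 2--4.} By Descartes' rule of signs, the number of positive roots $\mu$ of the determinant polynomial is at most $p$ and has the same parity as $p$. Each such root is a value of $\mu$ at which a zero eigenvalue occurs, which gives item 2. By continuity of eigenvalues in $\mu$, exactly $q$ eigenvalues must travel from the left half-plane at $\mu=0$ into the right half-plane for large $\mu$, so there are at least $q$ crossings of $\Re\lambda=0$, counted with multiplicity; this is item 4. A crossing through $\lambda=0$ accounts for one eigenvalue, and there are at most $p$ of these. The remaining $q-p$ eigenvalues must therefore cross at nonzero imaginary values, and because the matrix is real these come in conjugate pairs. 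This gives at least $(q-p)/2$ pairs, which is item 3.

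\textbf{Main obstacle.} The delicate step is making the large-$\mu$ counting rigorous uniformly in $\varepsilon$. The $\varepsilon$ must be fixed first, for the sign pattern to hold, and $\mu\to\infty$ taken afterwards. One must also check that the crossings counted are genuine sign changes of $\Re\lambda$, not tangencies; I would handle this by counting net transfers of eigenvalues between the half-planes rather than individual crossings.
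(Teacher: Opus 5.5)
Your proposal is correct and follows the paper's proof: zero diffusion on $j_1,\dots,j_\ell$, the $\varepsilon$-scaling of Lemma~\ref{signpreservation} on the remaining indices, Lemma~\ref{positiveeig} for $\mu\to\infty$, and Descartes' rule of signs, with the leftover crossings forced into conjugate pairs; your indexing $\operatorname{sign}(b_m)=\operatorname{sign}\bigl((-1)^mM_{s(j_1,\dots,j_{n-m})}\bigr)$ is the one consistent with $b_m=0$ for $m>n-\ell$. Your parity argument for item 1 compares the signs of $b_0=\det(\jac \mbf f(\mbf P))$ and $b_{n-\ell}=(-1)^{n-\ell}M_{s(j_1,\dots,j_\ell)}\prod_{k>\ell}D_{j_k}$, which is a cleaner form of the paper's crossing count, though, exactly as in the paper, items 3--4 tacitly use that a root of multiplicity $m$ of $\det(\jac \mbf f(\mbf P)-\mu\,\mathbb D)$ moves at most $m$ eigenvalues net across $\lambda=0$ (not literally one eigenvalue per crossing, since several can pass through $0$ at once).
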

    \begin{proof}
    	Let $p > 0$ and $q > 0$ be defined as in the statement of the theorem. Note that the definition of $p$ and $q$ is independent of the value of the diffusion constants. Therefore, we can start by setting $D_m = 0$ for all $m = j_1, j_2, \ldots, j_\ell$. Next, thanks to Lemma \ref{signpreservation}, we will be able to find values for the other $n - \ell$ diffusion constants such that the signs of $b_k$ are the same as $(-1)^k M_{s\left(j_1, j_2, \ldots, j_k\right)}$, for each $\ell + 1 \leq k \leq n$.
    	
    	This implies that, at least, $q$ eigenvalues will cross the imaginary axis as $\mu \to \infty$. On the other hand, as the number of complex conjugate eigenvalues that cross this axis is even, \evs{as the polynomial of $\mu$ in equation \eqref{eq:mupolynomial} changes sign $p$ times}, if we assume that $p$ is even (resp. odd), then we will have an even (resp. odd) number of real zeros crossing that axis, which implies that $q$ is even (resp. odd).
    	
    	Next, note that we will have only $p$ changes of sign in the polynomial equation for $\mu$, which implies, by Descartes' rule of signs, that we will have an even (resp. odd) number of values of $\mu>0$ such that $\jac \mbf f(\mbf P) - \mu \, \mathbb D$ has a zero eigenvalue if $p$ is even (resp. odd), while (if $q > p$) at least $\dfrac{q - p}{2}$ values of $\mu>0$ such that $\jac \mbf f(\mbf P) - \mu \, \mathbb D$ has a pair of purely imaginary eigenvalues. Therefore, by a matter of continuity, we will be able to increase the zero-diffusion rates by sufficiently small amounts in order to keep the same number of positive eigenvalues and eigenvalues with a positive real part, even though the number of crossings may increase. This concludes the proof.
    \end{proof}
    \begin{remark}
    	Theorem \ref{biggestresult} effectively characterizes zeros we can find in dispersion curves as $\mu = k^2$ tends to $\infty$. Figure \ref{fig:dispersionrelationsdependingonD} shows one of the simplest cases that can occur in the dispersion relation of a system that has five components. In (a) \ref{fig:zerodiffrates}, three diffusion rates are equal to zero while the others are positive. In that graph, we can recognize immediately that a submatrix $S_{i_1i_2i_3}$ is unstable and has two eigenvalues with a positive real part. Furthermore, when setting $D_{i_1} = D_{i_2} = D_{i_3} = 0$, there is an even number of changes of sign in the polynomial equation in $\mu$. Next, in Fig.~\ref{fig:positivediffrates}, all the diffusion rates are positive, so the eigenvalues that were converging to a fixed value as $k\to \infty$ are now decaying.  Finally, Fig.~\ref{fig:bifurcationdiffrates} shows what happens as we keep increasing the diffusion rates that were equal to zero up to a point in which one of the dispersion relations becomes tangent to the $k$-axis and we find a bifurcation point. It is worth highlighting that these are only depicting the real part of the eigenvalues and the figure could equally well apply to a higher-component system where some of the dispersion curves represent complex conjugate eigenvalues. See also Fig.~\ref{fig:wonderful_example} below.
    \end{remark}
    \begin{figure}
    	\centering
    	\begin{subfigure}[b]{0.32\textwidth}
    		\centering
    		\includegraphics[width=\textwidth]{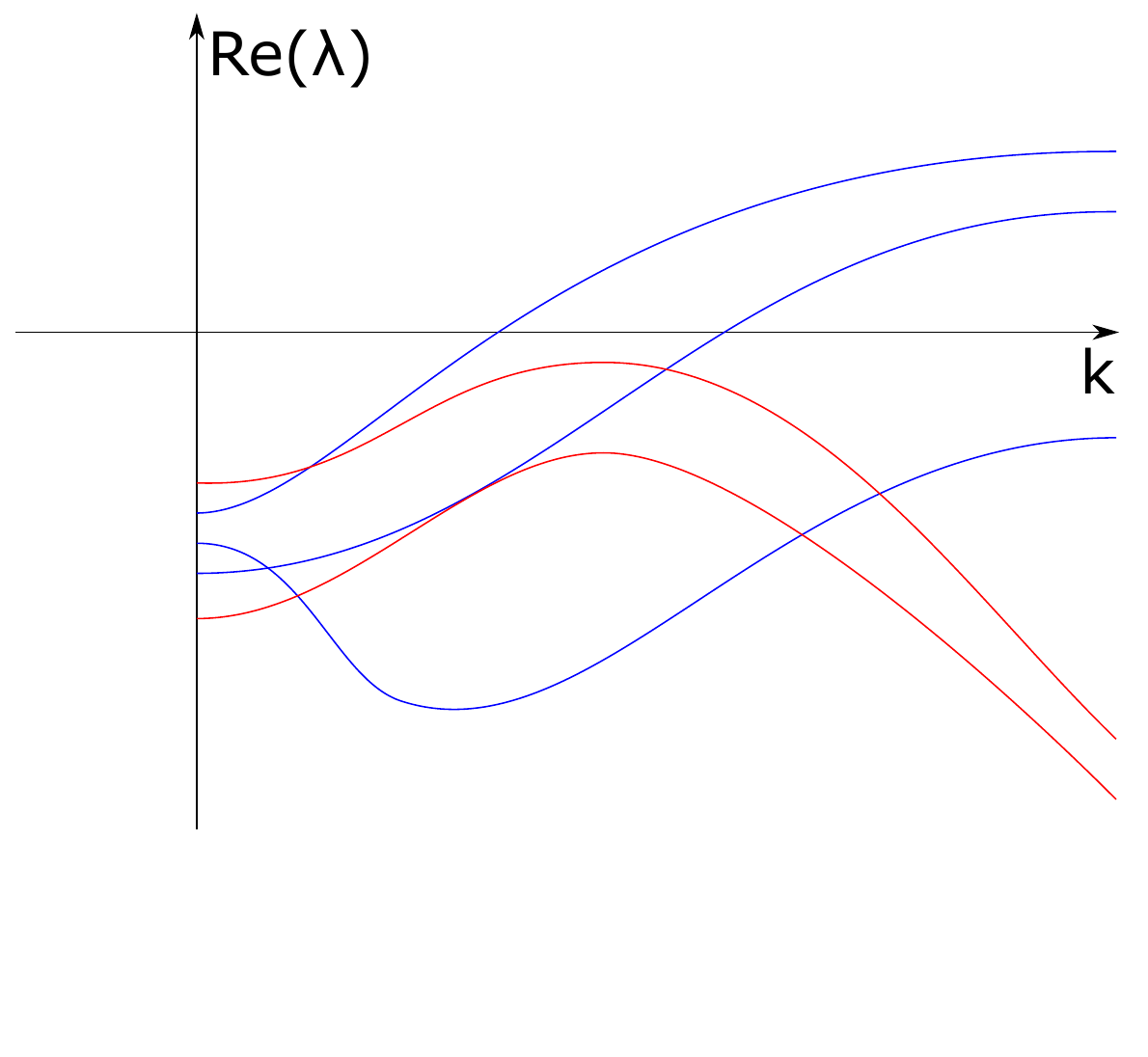}
    		\caption{ }
    		\label{fig:zerodiffrates}
    	\end{subfigure}
    	\hfill
    	\begin{subfigure}[b]{0.32\textwidth}
    		\centering
    		\includegraphics[width=\textwidth]{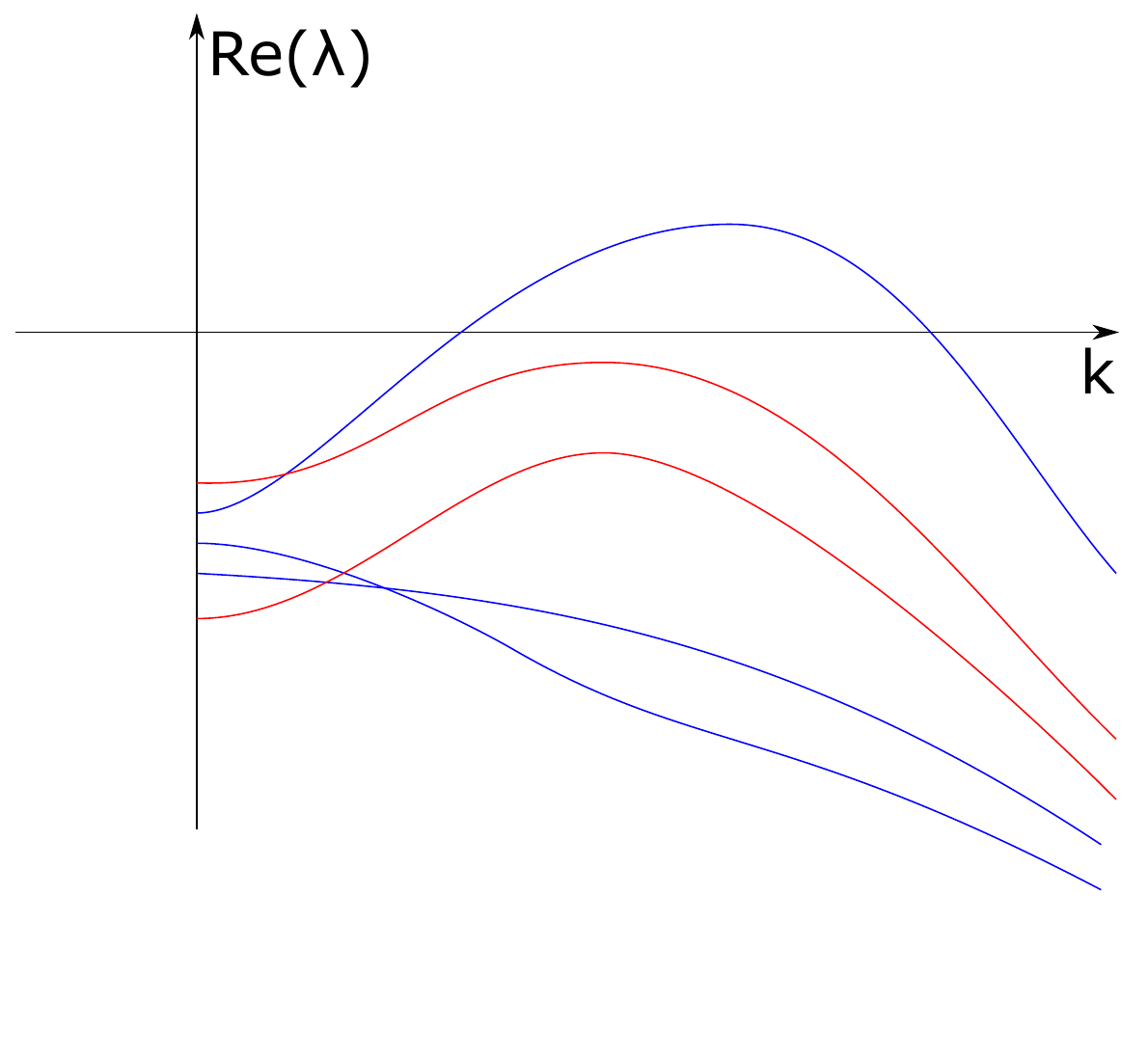}
    		\caption{ }
    		\label{fig:positivediffrates}
    	\end{subfigure}
    	\begin{subfigure}[b]{0.32\textwidth}
    		\centering
    		\includegraphics[width=\textwidth]{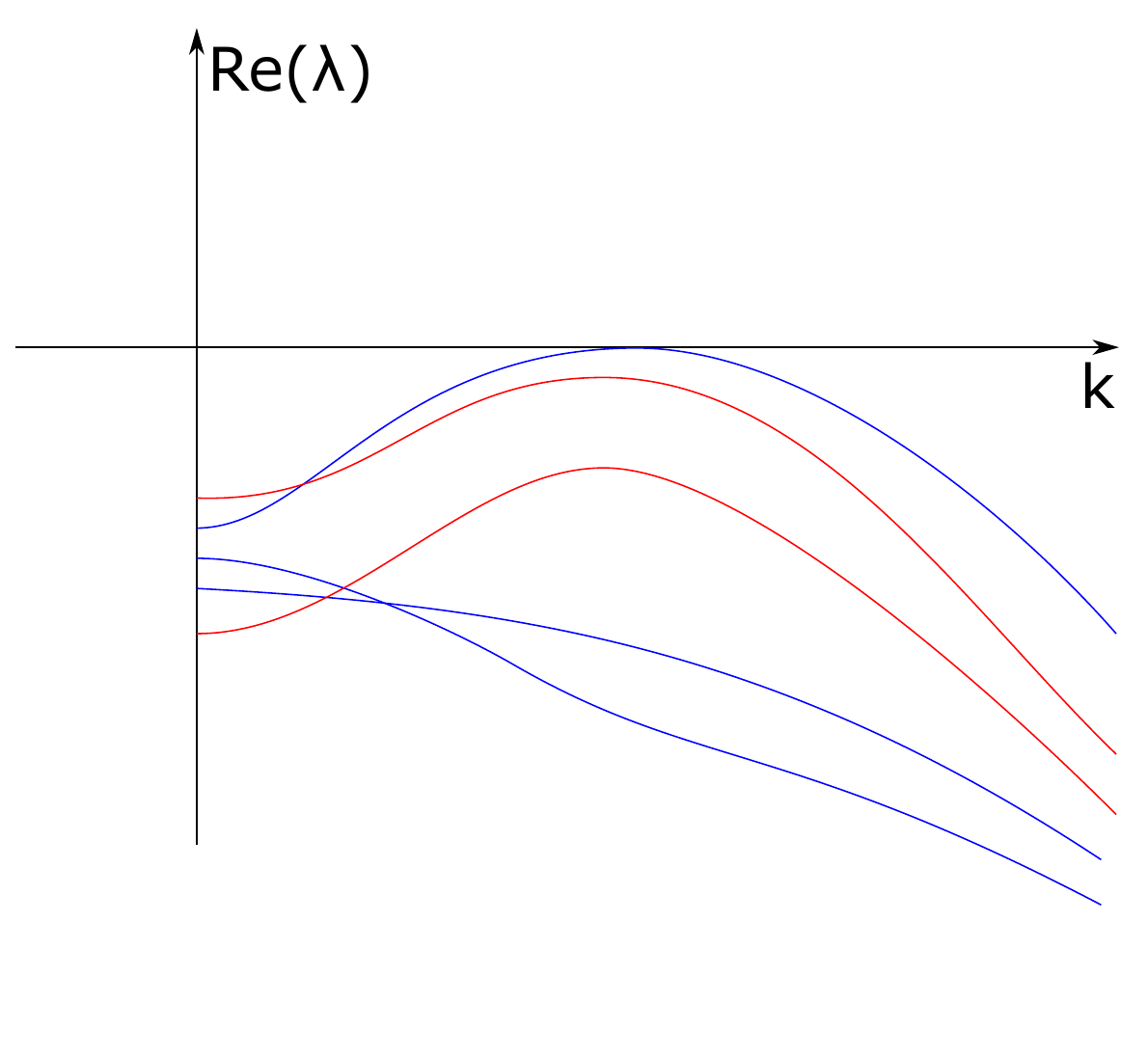}
    		\caption{ }
    		\label{fig:bifurcationdiffrates}
    	\end{subfigure}
    	\caption{Illustrating the main idea of Theorem \ref{biggestresult} via sketches of dispersion curves of a five-component system, for three different sets of values of the diffusion rates. In (a) three diffusion rates (corresponding to the blue curves) are zero, while the others are positive. In (b) all diffusion rates are positive and there is an interval of $k$-values corresponding to a Turing instability. \evs{Panel (c) shows the critical situation between cases (a) and (b), which would correspond to a Turing bifurcation.}}
    	\label{fig:dispersionrelationsdependingonD}
    \end{figure}
    
    A natural consequence of Theorem \ref{biggestresult} is its use in showing instabilities that might not have been apparent if only looking for Turing instabilities. The following is a straightforward consequence of the theorem.
    \begin{corollary}
    	Let $\mbf P$ be a stable homogeneous steady state of \eqref{generalsystem} in the absence of diffusion. If $\mbf P$ does not fulfill the conditions to show Turing instabilities, but one of the principal submatrices $S_{i_1i_2\ldots i_k}$ has an even number of eigenvalues with a positive real part, then we can find non-negative diffusion rates such that $\mbf P$ \evs{exhibits} wave instabilities.
    \end{corollary}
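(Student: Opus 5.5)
The plan is to combine the hypothesis that no Turing instability is possible with Theorem \ref{biggestresult}, applied to the principal submatrix $S_{i_1 i_2\ldots i_k}$ that has an even number $q>0$ of eigenvalues with positive real part. First note that $k\le n-1$: $\jac \mbf f(\mbf P)$ itself is stable, so only a proper principal submatrix can have $q>0$. Extend $(i_1,\ldots,i_k)$ to a full word $(j_1,\ldots,j_n)$ with $s(j_1,\ldots,j_k)=(i_1,\ldots,i_k)$, and take $\ell=k$. If some of the minors $M_{s(j_1,\ldots,j_m)}$, $k\le m\le n$, vanish, we first choose the order of the remaining indices $j_{k+1},\ldots,j_n$ to avoid this. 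If no ordering works, we perturb the construction; this is a genericity point we will treat as a side issue. The minor with $m=n$ is $\det \jac \mbf f(\mbf P)\neq 0$ by stability.

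By part 1 of Theorem \ref{biggestresult}, $q$ even forces the number $p$ of sign changes of $(-1)^m M_{s(j_1,\ldots,j_m)}$ to be even.

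\begin{itemize}
\item We claim $p=0$ is the relevant case. If $p\ge 2$, then with the diffusion rates from Lemma \ref{signpreservation} the polynomial \eqref{eq:mupolynomial} has sign changes in its coefficients. Part 2 of the theorem and Descartes' rule then allow real zero eigenvalues, i.e.\ a Turing instability, which would contradict the hypothesis.
\item To make this rigorous we interpret ``does not fulfill the conditions to show Turing instabilities'' as the contrapositive of this mechanism: no choice of diffusion gives a positive root $\mu$ of $\det(\jac \mbf f(\mbf P)-\mu\,\mathbb D)$. Descartes only bounds the number of positive roots above by $p$ with the same parity, so $p$ even does not by itself guarantee a positive root. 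For that reason we shall not need $p=0$; we only need $q>p$, or failing that, to count crossings directly.
\end{itemize}

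The core argument is a crossing count. Choose the diffusion rates as in the proof of Theorem \ref{biggestresult}: set $D_{i_1}=\cdots=D_{i_k}=0$ and pick the others by Lemma \ref{signpreservation}. By Lemma \ref{positiveeig}, as $\mu\to\infty$ exactly $q$ eigenvalues of $\jac \mbf f(\mbf P)-\mu\,\mathbb D$ converge to eigenvalues of $S_{i_1\ldots i_k}$ with positive real part, while at $\mu=0$ all eigenvalues have negative real part. By continuity of eigenvalues in $\mu$, at least $q$ crossings of $\Re\lambda=0$ occur (part 4). Each crossing happens either through $\lambda=0$ or through a purely imaginary pair. The Turing hypothesis forbids any real zero eigenvalue for $\mu>0$, so every crossing is a complex-conjugate pair crossing, giving a wave instability for this $\mathbb D$.

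Finally, as at the end of the proof of Theorem \ref{biggestresult}, we raise the zero diffusion rates slightly to make all $D_i$ positive if desired. Small perturbations preserve a purely imaginary crossing at some finite $\mu$ with $\Re\lambda>0$ on an interval. Taking the infimum over the one-parameter family from $\mathbb D$ then gives a tangency, i.e.\ a wave bifurcation.

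The main obstacle is the step excluding real crossings. It requires reading the hypothesis ``no Turing'' as ``$\det(\jac \mbf f(\mbf P)-\mu\,\mathbb D)\neq 0$ for all $\mu>0$ for the chosen $\mathbb D$''. If the hypothesis is instead a sign condition as in Theorem \ref{th:noturing}, then all $b_m$ share one sign, which again yields no positive root, and the argument goes through. The nondegeneracy of the minors is a secondary technical point.
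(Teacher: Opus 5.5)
Your proof is correct and follows the paper's route: the paper presents the corollary as an immediate consequence of Theorem~\ref{biggestresult}. With $D_{i_1}=\cdots=D_{i_k}=0$, Lemma~\ref{positiveeig} and continuity force crossings of $\Re(\lambda)=0$, the no-Turing hypothesis rules out a zero eigenvalue, so every crossing is a purely imaginary pair. That step only needs the remaining diffusion rates to be positive, so your detour through Lemma~\ref{signpreservation}, the nonvanishing of the minors, and the parity of $p$ is unnecessary.
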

    \begin{remark}
    	To illustrate why the corollary is important, consider the following Jacobian matrix:
    	\begin{align*}
    		\jac \mbf f(\mbf P) = \begin{pmatrix}
    			-32 & -1 & -3 & -2
    			\\ 
    			-3 & -1 & -2 & 1
    			\\ 
    			-1 & 0 & -1 & -2
    			\\ 
    			5 & -2 & 1 & -0.1
    		\end{pmatrix}
    	\end{align*}
    	The eigenvalues of this matrix are given by
    	\begin{align*}
    		\lambda_1 = -31.8681, \quad \lambda_2 = -2.13507, \quad \lambda_{3, 4} = -0.0484039 \pm 2.2506 i.
    	\end{align*}
    	Furthermore, this matrix satisfies
    	\begin{align*}
    		(-1)^1 \, M_1 = 32, \quad (-1)^1 \, M_2 = 1, \quad (-1)^1 \, M_3 = 1, \quad (-1)^1 \, M_4 = 0.1,
    		\\ 
    		(-1)^2 \, M_{12} = 29, \quad (-1)^2 \, M_{13} = 29, \quad (-1)^2 \, M_{14} = 13.2,
    		\\ 
    		(-1)^2 \, M_{23} = 1, \quad (-1)^2 \, M_{24} = 2.1, \quad (-1)^2 \, M_{34} = 2.1,
    		\\ 
    		(-1)^3 \, M_{123} =28, \quad (-1)^3 \, M_{124} = 93.9, \quad (-1)^3 \, M_{134} = 44.9, \quad (-1)^3 \, M_{234} = 12.1,
    		\\ 
    		(-1)^4 M_{1234} = 344.8.
    	\end{align*}
    	Thus, Theorem \ref{th:noturing} implies that the matrix $\jac \mbf f(\mbf P)$ will not be able to have a zero eigenvalue when adding the diffusion terms. Nevertheless, note that $S_{234}$ has the  following eigenvalues:
    	\begin{align*}
    		\lambda_{1, 2} = 0.058337 \pm 2.33564 i, \quad \lambda_3 = -2.21667.
    	\end{align*}
    	This implies, when choosing the diffusion rates $D_2 = D_3 = D_4 = 0$, that there will exist a value of $\mu > 0$ in which the Jacobian matrix $\jac \mbf f(\mbf P) - \mu \, \mathbb D$ goes unstable. Thus, $\jac \mbf f(\mbf P)$ presents a counterexample to the conjecture stated in \cite[p.144]{wang2001} on necessary and sufficient conditions for stability.
    \end{remark}
    \begin{remark}\label{differentbehaviors}
    	Theorem \ref{biggestresult} provides information about the kind of transitions we can find for the eigenvalues of $\jac \mbf f(\mbf 0) - \mu \, \mathbb D$ from having negative to positive real part, as $\mu$ ranges from 0 to $\infty$. If the transition occurs for a real (resp. complex) \evs{eigenvalue, we} will have a Turing (resp. wave) instability. Nevertheless, \evs{lemmas} \ref{positiveeig} and \ref{nonpositiveeig} tell us that, even though these transitions could involve real (resp. complex) eigenvalues, the values to which the eigenvalues of $\jac \mbf f(\mbf P) - \mu \, \mathbb D$ converge when setting some diffusion rates as zero could be complex (resp. real). Such a dispersion relation would give rise to both kinds of instabilities for the same values of parameters and diffusion rates, but for different wavenumbers. For instance, consider the following matrix
    	\begin{align}
    		\jac \mbf f(\mbf 0) &= \begin{pmatrix}
    			-21.3 & -78.2 & 87
    			\\
    			47.4 & -1 & 1
    			\\
    			-35 & -3.1 & 3
    		\end{pmatrix}, \label{wonderfulDU}
    	\end{align}
        which has eigenvalues $\lambda_\pm = -9.50278 \pm 81.3221 i$ and  $\lambda_3 = -0.29445$. Moreover, $S_{23}$ has eigenvalues $\lambda_1 = 1.94868$ and $\lambda_2 = 0.0513167$. This implies that, when choosing $D_2 = D_3 = 0$, two eigenvalues of $\jac \mbf f(\mbf 0) - \mu \, \mathbb D$ will tend to $\lambda_1$ and $\lambda_2$ as $\mu \to \infty$. However, according to Proposition \ref{th:hopf3}, the transition of the eigenvalues from having a negative to a positive real part occurs in a complex way. Figure \ref{fig:wonderful_example} shows what happens to the two largest-real-part eigenvalues of $\jac \mbf f(\mbf 0) - k^2 \, \mathbb D$ in this case. In particular, the crossing with the $k$ axis occurs at $k\approx 18.3146$, where the eigenvalues are $\lambda_\pm = \pm 0.827364 i$, and $\lambda_3 = -3373.55$. Nevertheless, there is a value of $k$ for which two real eigenvalues get separated from this single line. This implies that, when increasing the zero diffusion rates a bit, by continuity, we will have two kinds of diffusion-driven instabilities for different wavenumbers.
    \end{remark}
    \begin{figure}
    	\centering
    	\includegraphics[width=0.6\textwidth]{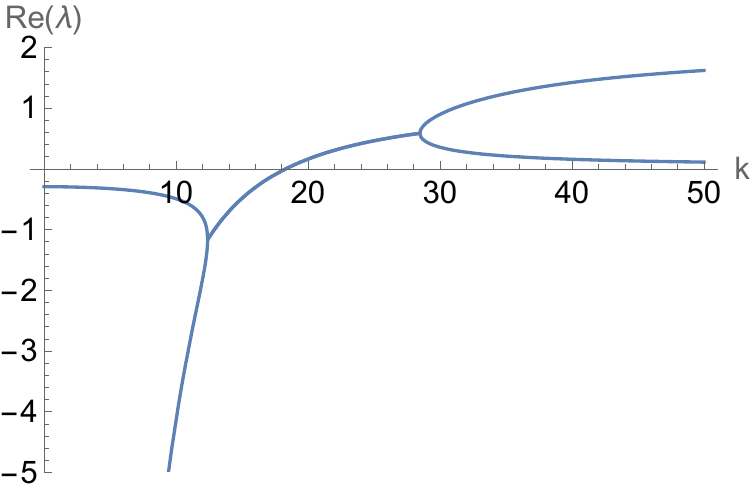}
    	\caption{Eigenvalues with the largest real parts of the matrix $\jac \mbf f(\mbf P) - k^2 \, \mathbb D$, where $\jac \mbf f(\mbf P)$ is given by \eqref{wonderfulDU}, and $\mathbb D = \diag\left(10, 0, 0\right)$.}
    	\label{fig:wonderful_example}
    \end{figure}
    
    Last but not least, until now we have only considered {\em instabilities}, corresponding to the existence of a zero-real-part eigenvalue, and not {\em bifurcation points} at which a dispersion curve first crosses the $k$-axis. Our final results give conditions under which bifurcations occur under the variation of parameters (including diffusion constants). We start with a lemma.
    \begin{lemma}
    	Let $\mbf P$ be a stable homogeneous steady state of \eqref{generalsystem} in the absence of diffusion. If for some diagonal matrix $\mathbb D$ with non-negative entries and $\mu > 0$, $\jac \mbf f(\mbf P) - \mu \, \mathbb D$ has at least one eigenvalue with a positive real part for some $\mu = \mu^*$, then there exist a diffusion matrix $\mathbb D^*$ with positive diffusion rates such that $\mbf P$ goes through a Turing or wave bifurcation under a change of parameters that takes $\mathbb D$ to $\mathbb D^*$.
    \end{lemma}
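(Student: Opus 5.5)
We prove the lemma by a continuation argument in the diffusion matrix. The plan is to join a stable configuration to an unstable one by a path of positive diagonal diffusion matrices, and to locate the first parameter value at which the maximal real part of the spectrum over all wavenumbers reaches zero.

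\textbf{Step 1: an unstable matrix with positive entries.} Eigenvalues depend continuously on the matrix entries. So if $\jac \mbf f(\mbf P) - \mu^* \, \mathbb D$ has an eigenvalue with positive real part, the same holds for $\jac \mbf f(\mbf P) - \mu^* \, \widetilde{\mathbb D}$, where $\widetilde{\mathbb D}$ is obtained from $\mathbb D$ by raising every zero entry by a sufficiently small $\eta>0$. We may therefore assume all entries of the target matrix are positive. This is the same perturbation device used at the end of the proof of Theorem \ref{biggestresult}.

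\textbf{Step 2: the path and the growth function.} Join the equal-diffusion matrix $\mathbb D_0 = \delta I$, $\delta>0$, to $\widetilde{\mathbb D}$ by the straight segment
$$\mathbb D(t) = (1-t)\,\mathbb D_0 + t\,\widetilde{\mathbb D}, \qquad t\in[0,1].$$
Every $\mathbb D(t)$ is diagonal with positive entries. Along this path define
$$\Lambda(t) = \sup_{\mu \geq 0} \max_i \Re \lambda_i\bigl(\jac \mbf f(\mbf P) - \mu \, \mathbb D(t)\bigr).$$
By the theorem on equal diffusion rates, $\Re\lambda < -\mu\delta$ at $t=0$ for every $\mu \ge 0$, so $\Lambda(0)<0$. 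By Step 1, $\Lambda(1)>0$.

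\textbf{Step 3: continuity of $\Lambda$ (main obstacle).} The difficulty is the unbounded range of $\mu$. Because the entries of $\mathbb D(t)$ are bounded below by some $d>0$ uniformly in $t$, Gershgorin's circle theorem gives
$$\Re\lambda \leq \max_i\Bigl(a_{ii} + \sum_{j\neq i}\lvert a_{ij}\rvert\Bigr) - \mu d .$$
Hence there is a $\mu_0$, independent of $t$, such that all eigenvalues have negative real part for $\mu>\mu_0$. Since $\Lambda(t)\geq \max_i \Re\lambda_i(\jac \mbf f(\mbf P)) > -\infty$ (the value at $\mu=0$), the supremum is attained on the compact set $[0,\mu_0]$. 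The function $(t,\mu)\mapsto \max_i \Re\lambda_i$ is jointly continuous there, so $\Lambda$ is continuous in $t$.

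\textbf{Step 4: the bifurcation point.} Let $t^*=\inf\{t : \Lambda(t)\ge 0\}$. By continuity, $\Lambda(t^*)=0$ and $\Lambda(t)<0$ for $t<t^*$. At $t^*$ the maximum is attained at some finite $\mu_c$. Note $\mu_c>0$, because at $\mu=0$ the real parts are those of the stable matrix $\jac \mbf f(\mbf P)$. So some eigenvalue of $\jac \mbf f(\mbf P) - \mu_c\,\mathbb D(t^*)$ lies on the imaginary axis while the dispersion curve is otherwise non-positive. Since $\mu_c$ is an interior maximum, $\Re\lambda$ as a function of $\mu$ is tangent to zero there.

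This is a Turing bifurcation if $\lambda(\mu_c)=0$ and a wave bifurcation if $\lambda(\mu_c)=\pm i\omega$ with $\omega\neq 0$. We take $\mathbb D^*=\mathbb D(t^*)$, and the path through $t^*$ is the required change of parameters. Quadratic (rather than higher-order) tangency is generic: if it fails, a small perturbation of the path restores it, and we would note this as a non-degeneracy assumption.
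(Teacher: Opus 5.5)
Your argument is correct. It shares the paper's overall strategy: continue in $\mathbb D$ between an unstable and a stable configuration and locate the critical matrix by continuity. Where it differs is the choice of stable endpoint and how you control the range of $\mu$.

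\textbf{The paper's route.} It starts at the unstable $\mathbb D$ and increases all diffusion rates towards infinity. It cites Lemma \ref{nonpositiveeig} to say that all real parts tend to $-\infty$ for every $\mu>0$, and then appeals to ``continuity''.

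\textbf{Your route.} You join $\widetilde{\mathbb D}$ to $\delta I$ and use the equal-diffusion theorem for the stable end. A Gershgorin bound that is uniform along the path reduces $\sup_{\mu\geq 0}$ to a maximum over a compact interval. This makes $\Lambda(t)$ continuous, and the first hitting time $t^*$ produces $\mathbb D^*$.

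\textbf{What your route buys.} It gives uniformity in $\mu$, which the paper's route does not supply. Lemma \ref{nonpositiveeig} gives decay only pointwise in $\mu$. If the rates are increased by a common factor, $\mathbb D\to s\mathbb D$, then $\jac \mbf f(\mbf P)-\mu\, s\,\mathbb D$ is just the reparametrisation $\mu\to s\mu$. The instability is therefore not removed but moved to $\mu^*/s$, in line with the paper's own remark that only ratios of diffusion rates matter. The paper's step is valid only under a reading such as adding a large common constant to every $D_i$, which sends their ratios to $1$. Your segment to $\delta I$ does exactly this, explicitly. The paper's version is shorter and keeps the picture of monotonically increasing diffusion. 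Yours supplies the compactness and intermediate-value steps hidden in its ``by continuity''.

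Two small points.
\begin{enumerate}
\item In Step 3, the supremum is guaranteed to be attained on $[0,\mu_0]$ for every $t$ only if $\mu_0$ makes the Gershgorin bound drop below $\max_i \Re\lambda_i(\jac \mbf f(\mbf P))$, not merely below $0$. A suitable choice is
$$\mu_0 = \bigl(\max_i(a_{ii}+\sum_{j\neq i}\lvert a_{ij}\rvert) - \max_i\Re\lambda_i(\jac \mbf f(\mbf P))\bigr)/d .$$
With that choice the step goes through as written.
\item As you note, three things are non-degeneracy conditions rather than consequences of the argument: differentiability of $\Re\lambda$ at $\mu_c$, quadratic tangency there, and an actual crossing as $t$ passes $t^*$. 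The paper's proof assumes them tacitly as well.
\end{enumerate}
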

    \begin{proof}
    	Thanks to Lemma \ref{nonpositiveeig}, we know that if all the diffusion rates tend to $\infty$, then the real parts of the eigenvalues of $\jac \mbf f(\mbf P) - \mu \, \mathbb D$ tend to $-\infty$ for all $\mu > 0$. Therefore, as there exists a matrix $D$ with non-negative entries and $\mu > 0$ such that $\jac \mbf f(\mbf P)$ has an eigenvalue with a positive real part, then by continuity, if we increase all the diffusion rates sufficiently enough, we will be able to find a matrix $D^*$ such that $\jac \mbf f(\mbf P) - \mu \, \mathbb D^*$ has one eigenvalue that is tangent to the axis $\Re(\lambda) = 0$ for some $\mu > 0$, which will give rise to a bifurcation.
    \end{proof}
    In particular, this lemma is helpful when proving one last Theorem:
    \begin{theorem} \label{thm:bifurcations}
    	Suppose that the function $\mbf f$ in \eqref{generalsystem} depends on a parameter $\varepsilon\in \mathbb R$. Also, suppose that $\mbf P$ is a stable homogeneous steady state for all $\varepsilon$ sufficiently small. Assume that there exists only one increasing word $\left(i_1, i_2, \ldots, i_k\right)$ such that $S_{i_1i_2\ldots i_k}$ has a single real eigenvalue (respectively, a	purely imaginary pair of eigenvalues) that crosses the imaginary axis as $\varepsilon$ increases through zero. Then, for $\varepsilon > 0$ sufficiently small, there exists a positive diffusion matrix $\mathbb D^*$ such that $\mbf P$ goes through a Turing (respectively, wave) bifurcation under small variation as diffusion constants are generically varied from their values to $\mathbb D^*$.
    \end{theorem}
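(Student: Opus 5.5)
We construct the diffusion matrix in three steps: kill diffusion on the critical indices, make every other mode strongly damped, then raise all diffusion rates until the first tangency, which is the bifurcation.

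\textbf{Step 1: the limiting spectrum.} Fix $\varepsilon>0$ small and let $(i_1,\ldots,i_k)$ be the unique increasing word from the hypothesis. For $\varepsilon>0$ the matrix $S_{i_1\ldots i_k}$ has either one real eigenvalue with small positive real part, of order $\varepsilon$, or a complex pair with small positive real part. Choose $D_\ell=0$ for $\ell\in\{i_1,\ldots,i_k\}$ and $D_\ell>0$ for all other indices. By Lemma \ref{positiveeig}, as $\mu\to\infty$ the spectrum of $\jac \mbf f(\mbf P)-\mu\,\mathbb D$ splits into two groups. Exactly $k$ eigenvalues converge to the spectrum of $S_{i_1\ldots i_k}$, including the unstable real eigenvalue (respectively, the unstable pair). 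The remaining $n-k$ eigenvalues have real parts tending to $-\infty$.

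\textbf{Step 2: the unstable mode has the right type.} The unstable eigenvalue(s) of $S_{i_1\ldots i_k}$ are simple. By continuity of simple eigenvalues, for large $\mu$ the matrix $\jac \mbf f(\mbf P)-\mu\,\mathbb D$ has, near the limit, one real eigenvalue with positive real part (respectively, a complex pair with positive real part). Since $\mbf P$ is stable at $\mu=0$, this eigenvalue must cross $\Re(\lambda)=0$ at some $\mu^*>0$. For $\mu$ large, its type is fixed by the type of the limit. By perturbing continuously, we also raise the zero diffusion rates slightly to small positive values; the instability persists on a bounded $\mu$-window.

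\textbf{Step 3: reaching the bifurcation.} Apply the preceding lemma: scale all diffusion rates up along a generic path until the dispersion curve first touches $\Re\lambda=0$ tangentially, giving $\mathbb D^*$. It remains to show that the touching eigenvalue is real in the Turing case and complex in the wave case.

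\textbf{Main obstacle.} The difficulty is identifying the type of the eigenvalue at the tangency. The uniqueness of the word should handle this. As $\varepsilon\to 0^+$, the dispersion curves only just exceed zero. The only possible source of positive real part is the spectrum of the principal submatrices, obtained through the limits in Lemmas \ref{positiveeig} and \ref{nonpositiveeig}, and every other principal submatrix remains stable uniformly in $\varepsilon$. Hence the maximal real part over $\mu$ is $O(\varepsilon)$ and is carried by the branch continued from $S_{i_1\ldots i_k}$. For $\varepsilon$ small, this branch stays real (respectively, complex), because its limit is separated from the other eigenvalues by an $O(1)$ gap.

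I would argue this gap persistence through compactness in the projectivized space of diffusion ratios. Genericity of the path then gives a quadratic tangency.
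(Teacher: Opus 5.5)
Your Steps 1--3 coincide with the paper's proof: zero diffusion on $w=(i_1,\ldots,i_k)$, Lemma \ref{positiveeig} for the $\mu\to\infty$ limit, then raising the zero rates via the preceding lemma. You are right that the crux is the type of the eigenvalue \emph{at the tangency}. The paper only notes that small changes of diffusion do not alter the type of the $\mu\to\infty$ limit, which does not locate the tangency. Your resolution of that crux, however, has a genuine gap.

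The claim that ``the only possible source of positive real part is the spectrum of the principal submatrices, obtained through the limits in Lemmas \ref{positiveeig} and \ref{nonpositiveeig}'' does not follow from those lemmas. They cover only the regimes $\mu\to\infty$ with $D_w=0$, and $D_w\to\infty$. They say nothing about bounded $\mu$ or intermediate ratios, which is exactly where Remark \ref{differentbehaviors} exhibits a crossing whose type differs from that of the limit.

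Write $J_\varepsilon=\jac \mbf f(\mbf P)$ and let $E=\mu\,\mathbb D$ range over the compactification $[0,\infty]^n$. Your compactness argument then needs two things: (i) $\max\Re\,\sigma(J_0-E)\le 0$ for every $E$; and (ii) equality only on the face $E_w=0$, $E_{w^c}=\infty$.
\begin{itemize}
\item Item (i) needs a hypothesis absent from the statement: that every other principal submatrix is stable for $\varepsilon\le 0$. The statement only excludes other \emph{crossings}, and a submatrix unstable throughout may produce its own tangency, of either type, before the one you want. Item (i) also needs a global result such as Theorem \ref{th:sat}, not the two limit lemmas.
\item Item (ii) is a nondegeneracy condition on $\mbf f$.
\end{itemize}
Only with both does upper semicontinuity of $\max\Re\,\sigma$ push every $E$ with $\max\Re\,\sigma(J_\varepsilon-E)\ge 0$ into a small neighbourhood of that face. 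There your $O(1)$ spectral gap genuinely preserves the type. As written, ``the maximal real part is $O(\varepsilon)$ and carried by the branch from $S_w$'' restates the conclusion rather than proving it.

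Separately, in Step 3 a uniform scaling of $\mathbb D$ achieves nothing. Replacing $\mathbb D$ by $\kappa\mathbb D$ merely replaces $\mu$ by $\kappa\mu$, so the instability window is rescaled but never closed; the ratios must change. (The wording of the preceding lemma has the same defect.)

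Taking $D_j=\delta$ for $j\in w$, with the other rates fixed, also gives a direct finish. Let $B$ (rows $w$, columns $w^c$) and $A$ (rows $w^c$, columns $w$) be the off-diagonal blocks of $J_\varepsilon$, and $\mathbb D_{w^c}$ the fixed diagonal block. Let $\ell,r$ be left and right eigenvectors for the critical eigenvalue $\lambda_w(\varepsilon)$ of $S_w$, normalised so that $\ell^\intercal r=1$. The Schur complement gives, for large $\mu$ and small $\delta$,
\begin{equation*}
\lambda(\mu)=\lambda_w(\varepsilon)-\delta\mu+\frac{c}{\mu}+O(\mu^{-2}),\qquad c=\ell^\intercal B\,\mathbb D_{w^c}^{-1}A\,r .
\end{equation*}
Condition (i) forces $\Re c\le 0$. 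Suppose $\Re c<0$, which is the local form of (ii). Then $\max_\mu\Re\lambda$ vanishes at $\delta^*\approx(\Re\lambda_w)^2/(4\lvert\Re c\rvert)=O(\varepsilon^2)$, attained as a nondegenerate maximum at $\mu^*\approx 2\lvert\Re c\rvert/\Re\lambda_w=O(\varepsilon^{-1})$. There the touching eigenvalue is $O(\varepsilon)$-close to $\lambda_w(0)$, hence real in the Turing case and non-real in the wave case. Finally, (i)--(ii), needed here only along the ray $\mu\,\diag(0_w,\mathbb D_{w^c})$, ensure that nothing at bounded $\mu$ reaches $\Re\lambda=0$ first.
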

    \begin{proof}
    	The proof is a straightforward consequence of the ideas we have developed so far. First, let us choose $D_{i_1} = D_{i_2} = \ldots = D_{i_k} = 0$, while leaving all the other diffusion rates positive. This implies, thanks to \evs{Lemma} \ref{wonderfulemma}, that the eigenvalues of $\jac \mbf f(\mbf P) - \mu \, \mathbb D$ will converge to numbers with a negative real part when $\varepsilon < 0$, while some of them will converge to numbers with a positive real part when $\varepsilon > 0$. Therefore, thanks to the previous lemma, we can increase the zero diffusion rates such that we find a bifurcation. The type of bifurcation is characterized by continuity. In fact, for a sufficiently small value of $\varepsilon$, one or two eigenvalues of $\jac \mbf f(\mbf P) - \mu \, \mathbb D$ will be converging to one or two eigenvalues with a positive real part, so increasing the diffusion rates just a little will not change the type of eigenvalue we are having as $\mu\to \infty$.
    \end{proof}
    \begin{remark}        
    	Note that, owing to linearity and the last part of Lemma \ref{signpreservation}, only the ratios between the diffusion rates matter. The wavenumber $k$ that corresponds to an instability can be made arbitrarily small or large, by scaling all diffusion constants by the same factor $\delta > 0$.
    \end{remark}
    
    To close the section, we can state the following result, which is a straightforward consequence of \evs{Theorems} \ref{th:sat} and \ref{biggestresult}.
    \begin{theorem}
        Let $\mbf P$ be a stable homogeneous steady state of \eqref{generalsystem}. Then linear diffusion-driven instabilities can occur around $\mbf P$ if and only if the system is stable about $\mbf P$ in the absence of diffusion but there is a principal unstable subsystem. In particular,
        \begin{itemize}
            \item if $\jac \mbf f(\mbf P)$ is s-stable, then the system cannot show a Turing or wave instability around $\mbf P$; or
            \item if a principal submatrix of $\jac \mbf f(\mbf P)$ is unstable, then there exist non-negative diffusion rates such that the full system shows a Turing or wave instability around $\mbf P$.
        \end{itemize}
        Furthermore, in the latter case, the diffusion rates can be chosen as stated in Theorem \ref{biggestresult}.
    \end{theorem}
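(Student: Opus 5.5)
The statement is an ``if and only if'', so I will prove the two directions separately, with the two bullet points serving as the two halves. Throughout, $\mbf P$ is assumed stable in the absence of diffusion, so every eigenvalue of $\jac \mbf f(\mbf P)$ lies in the open left half-plane. Under this assumption, ``there is a principal unstable subsystem'' means exactly that $\jac \mbf f(\mbf P)$ fails to be s-stable: some principal submatrix $S_{i_1 i_2\ldots i_k}$ has an eigenvalue with non-negative real part. Since $\jac \mbf f(\mbf P)$ itself is stable, any such submatrix is proper, so $k\leq n-1$.

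\emph{Necessity (first bullet).} Suppose $\jac \mbf f(\mbf P)$ is s-stable. Theorem \ref{th:sat}, together with the remark following it, gives directly that $\jac \mbf f(\mbf P)-\mu\,\mathbb D$ has no eigenvalue with positive real part for any $\mu>0$ and any $D_i\geq 0$. In particular, no Turing or wave instability can occur. Taking the contrapositive, any diffusion-driven instability forces the existence of a principal unstable subsystem.

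\emph{Sufficiency (second bullet).} Suppose $S_{i_1\ldots i_\ell}$ has an eigenvalue with positive real part, where $\ell\leq n-1$. Set $D_{i_1}=\cdots=D_{i_\ell}=0$ and take the remaining diffusion rates positive. By Lemma \ref{positiveeig}, $\ell$ eigenvalues of $\jac \mbf f(\mbf P)-\mu\,\mathbb D$ tend to the eigenvalues of $S_{i_1\ldots i_\ell}$ as $\mu\to\infty$. At least one of these therefore has positive real part for all $\mu$ sufficiently large. At $\mu=0$ every eigenvalue has negative real part, and eigenvalues depend continuously on $\mu$, so some eigenvalue must cross $\Re(\lambda)=0$ at some finite $\mu^*>0$. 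This crossing is a Turing instability if the eigenvalue is real there and a wave instability if it is a complex pair.

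To obtain strictly positive diffusion rates, I perturb the zero rates to small positive values. For fixed $\mu$ slightly beyond $\mu^*$, continuity of eigenvalues in $\mathbb D$ keeps an eigenvalue in the open right half-plane. The preceding lemma then produces $\mathbb D^*$ at which a genuine bifurcation, in the sense of a tangency, occurs. For the ``furthermore'' clause, I extend $(i_1,\ldots,i_\ell)$ to a full word $(j_1,\ldots,j_n)$ and apply Theorem \ref{biggestresult} with $q>0$. That theorem guarantees at least $q$ crossings of $\Re(\lambda)=0$, with the diffusion rates chosen through Lemma \ref{signpreservation}.

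\emph{Main obstacle.} The delicate point is the nondegeneracy hypothesis $M_{s(j_1,\ldots,j_k)}\neq 0$ required by Theorem \ref{biggestresult}, which may fail for every admissible ordering. I would avoid it entirely by resting the existence claim on the continuity argument above, which uses only Lemma \ref{positiveeig} and the stability of $\jac \mbf f(\mbf P)$. Theorem \ref{biggestresult} would then be invoked only for the explicit construction of the rates in the generic case. A secondary subtlety is the borderline situation where the unstable submatrix has eigenvalues with zero but no positive real part. I would handle this by reading ``unstable'' as having an eigenvalue with positive real part, since the necessity direction only needs to exclude s-stability.
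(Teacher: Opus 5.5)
Your first bullet is the paper's argument (Theorem~\ref{th:sat} and the remark after it). For the second bullet you take a different and more robust route. The paper presents the theorem as a direct consequence of Theorem~\ref{biggestresult}. You argue straight from Lemma~\ref{positiveeig} plus continuity of the spectral abscissa $\mu\mapsto\max_i\Re\lambda_i(\mu)$, which is negative at $\mu=0$ and positive for large $\mu$. This is the mechanism inside the proof of Theorem~\ref{biggestresult}, without the sign-change bookkeeping of Lemma~\ref{signpreservation}.

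What your route buys is that it needs no nondegeneracy hypothesis. That hypothesis can fail for every admissible word, so the paper's appeal to Theorem~\ref{biggestresult} does not by itself prove the second bullet. For example, $J=\begin{pmatrix}1&1&1\\-2&-2&0\\-2&0&-2\end{pmatrix}$ has characteristic polynomial $\lambda^3+3\lambda^2+4\lambda+4$, so it is stable. Its only principal submatrix with an eigenvalue of positive real part is $S_1=(1)$, and $M_{12}=M_{13}=0$. Yet with $\mathbb D=\diag(0,d,d)$ one gets $\det(J-\mu\mathbb D)=(2+\mu d)(\mu d-2)$, a Turing crossing at $\mu d=2$, exactly as your argument predicts. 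In return, the paper's route gives finer information: the parity of real crossings, the bound $(q-p)/2$ on complex ones, and hence the Turing/wave type. The statement does not need this. A small point: when perturbing the zero rates, fix a large $\mu$ where Lemma~\ref{positiveeig} guarantees an eigenvalue with positive real part. At $\mu$ ``slightly beyond $\mu^*$'' positivity is not guaranteed if the crossing is only a touch.

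The step that does not work as written is the boundary case. Your opening identification of a ``principal unstable subsystem'' with failure of s-stability cannot be the right reading. The matrix $J=\begin{pmatrix}0&1\\-1&-1\end{pmatrix}$ is stable and not s-stable, since $S_1=(0)$. But $J-\mu\mathbb D$ has negative trace and determinant $\mu D_1(1+\mu D_2)+1>0$ for all $\mu,D_i\geq 0$, so no instability ever occurs.

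With the reading you finally adopt (an eigenvalue with positive real part), necessity says: instability $\Rightarrow$ some principal submatrix has an eigenvalue with positive real part. Its contrapositive assumes only that every principal submatrix has spectrum in the \emph{closed} left half-plane. That is weaker than s-stability, so the contrapositive of Theorem~\ref{th:sat} does not cover it, contrary to your remark that necessity ``only needs to exclude s-stability''. The paper's two bullets leave the same case open.

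A shift closes it. Take instability to mean, as in the introduction, an eigenvalue of $J-\mu\mathbb D$ in the open right half-plane. If all principal submatrices of $J$ are semistable, then $J-\varepsilon I$ is stable and s-stable for every $\varepsilon>0$. The eigenvalues of $(J-\varepsilon I)-\mu\mathbb D$ are those of $J-\mu\mathbb D$ shifted by $-\varepsilon$. Suppose $J-\mu\mathbb D$ had an eigenvalue with real part $r>0$. Then for $0<\varepsilon<r$, $(J-\varepsilon I)-\mu\mathbb D$ would have an eigenvalue with positive real part, contradicting Theorem~\ref{th:sat}. With this line added, your argument proves the equivalence under the positive-real-part reading.
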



\section{\evs{Applications}} \label{sec:examples}
    We consider \evs{several} different {examples to illustrate the theory, in each case a model for a biophysical process that has arisen in the literature.}
    
    \evs{The final three examples each illustrate how Theorem \ref{biggestresult} can be used to classify the crossings of dispersion curves with the axis $\Re(\lambda) = 0$ as being either real or complex.} Then, Lemmas \ref{positiveeig} and \ref{nonpositiveeig} \evs{give} an idea about the behaviour of the eigenvalues of $\jac \mbf f(\mbf P) - \mu \, \mathbb D$ as $\mu \to \infty$, when some diffusion rates are large or equal to zero. To see the importance of those results, see Remark \ref{differentbehaviors}. In particular, we highlight that there are two cases that generate wave instabilities automatically after choosing the diffusion rates in a sensible way. These are: \evs{First}, one of the principal submatrices of $\jac \mbf f(\mbf P)$ has a complex-conjugate pair of eigenvalues with positive real part, which generates wave instabilities for large wavenumbers, or \evs{second}, the number of eigenvalues with a positive real part grows faster than the number of changes of sign in the terms $(-1)^k \, M_{s(j_1, j_2, \ldots, j_k)}$ stated in Theorem \ref{biggestresult}.

    \subsection{Lack of instability in a model for host-parasyte interaction}
        \evs{First though, we provide an example where our theory can show the non-existence of instability. We consider a realistic model for the spread of malaria}, based on the classic Ross-MacDonald model (see \cite{alonso}, and references therein). This model treats the dynamics of the infection between humans and mosquitoes. In particular, let $H$ (resp. $I$) be the population density of healthy (resp. infected) people. Moreover, let $P$ be the density of the population of infected mosquitoes. We set the following system to model the dispersion of malaria:
        \begin{align*}
        	\partial_t H &= \left(b_H - d_H\right) H - c \, P \, H + r \, I + D_1 \, \partial_{xx} H,
        	\\
        	\partial_t I &= - d_H \, I + c \, P \, H - r \, I + D_2 \, \partial_{xx} I,
        	\\
        	\partial_t P &= - d_M \, P + b \, (Q - P) \, I + D_3 \, \partial_{xx} P.
        \end{align*}
        Note that healthy mosquitoes do not appear in this model. Nevertheless, they are considered under the assumption that $Q$ is the total mosquito population so that $Q - P$ is the population of healthy mosquitoes.
        
        To preserve the significance of this model, we explicitly state that all the variables and parameters are non-negative, $b_H > d_H > 0$ and $P \leq Q$. This system has only two equilibria, $\mbf 0 = (0, 0, 0)$ and $\mbf P = (H^*, I^*, P^*)$, where
        \begin{align*}
        	H^* &= \frac{d_H \, d_M \left(d_H + r\right)}{b \left(d_H \left(c \, Q + d_H + r\right) - b_H \left(d_H + r\right)\right)},
        	\\
        	I^* &= \frac{d_M \left(b_H - d_H\right) \left(d_H + r\right)}{b \left(d_H \left(c \, Q + d_H + r\right) - b_H \left(d_H + r\right)\right)},
        	\\
        	P^* &= \frac{\left(b_H - d_H\right) \left(d_H + r\right)}{c \, d_H}.
        \end{align*}
        Here, \evs{we assume $d_H \left(c \, Q + d_H + r\right) - b_H \left(d_H + r\right) > 0$}. Note that a Turing instability pattern around the origin will not be biologically relevant since the existence of it would imply that some of the variables become negative. On the other \evs{hand, the Jacobian} matrix of the system at $\mbf P$ in the absence of diffusion is given by
        \begin{align*}
        	\jac \mbf f(\mbf P) = \begin{pmatrix}
        		r - \frac{r \, b_H}{d_H} & r & a_{13}
        		\\[1ex]
        		\frac{\left(b_H - d_H\right) \left(d_H + r\right)}{d_H} & - (d_H + r) & a_{23}
        		\\[1ex]
        		0 & b \left(Q - \frac{\left(b_H - d_H\right) \left(d_H + r\right)}{c \, d_H}\right) & M_3
        	\end{pmatrix},
        \end{align*}
        where
        \begin{align*}
        	a_{13} &= - \frac{c \, d_H \, d_M \left(d_H + r\right)}{b \left(d_H \left(c \, Q + d_H + r\right) - b_H \left(d_H + r\right)\right)},
        	\\
        	a_{23} &= \frac{c \, d_H \, d_M \left(d_H + r\right)}{b \left(d_H \left(c \, Q + d_H + r\right) - b_H \left(d_H + r\right)\right)},
        	\\
        	M_3 &= - \frac{c \, Q \, d_H \, d_M}{d_H \left(c \, Q + d_H + r\right) - b_H \left(d_H + r\right)}.
        \end{align*}
        Now, note that $M_1, M_2, M_3\leq 0$. Moreover, observe that $M_{12}=0$, 
        \begin{align*}
        	M_{13} &= \frac{c \, Q \, r \, d_M \left(b_H - d_H\right)}{d_H \left(c \, Q + d_H + r\right) - b_H \left(d_H + r\right)}>0,
        	\\
        	M_{23} &= \frac{d_M \left(b_H - d_H\right) \left(d_H + r\right)^2}{d_H \left(c \, Q + d_H + r\right) - b_H \left(d_H + r\right)} > 0,
        	\\
        	M_{123} &= - \left(d_M \left(b_H - d_H\right) \left(d_H + r\right)\right) < 0.
        \end{align*}
        This implies, according to Lemma \ref{hopf3sufficient} and Theorem \ref{th:noturing}, that this system will not have Turing or wave instabilities for any parameter values.
      
    \subsection{A model for wavetrains in excitable media}
        As a next example, we consider the model proposed by Yochelis {\em et al} \cite{Yochelis2008} based on a generalisation of the FitzHugh-Nagumo system. The system can be written as
        \begin{align}
        	\frac{\partial u}{\partial t} &= u - u^3 - v + D_1 \, \nabla^2 u\evs{,} \nonumber
        	\\
        	\frac{\partial v}{\partial t} &= \epsilon_v \left(u - a_v \, v - a_w \, w - a_0\right) + D_2 \, \nabla^2 v, \label{example}
        	\\
        	\frac{\partial w}{\partial t} &= \epsilon_w \left(u - w\right) + D_3 \, \nabla^2 w, \nonumber
        \end{align}
        where $a_v, a_w, \epsilon_v, \epsilon_w > 0$, and $D_1, D_2, D_3\geq 0$ are parameters. See \cite{Yochelis2008} for a detailed interpretation of each of the variables and parameters. In that paper, the authors found \evs{numerical} evidence for both Turing and wave bifurcations for particular values of the parameters. 
        
        System \eqref{example} has, at most, three homogeneous steady states. We shall focus on the one given by $\mbf P := \left(u^*, v^*, w^*\right)$, where
        \begin{align*}
        	u^* &= - \frac{6 \, a_v^2 + 6 \, a_v (a_w - 1) + \sqrt[3]{2} \, Q^2}{3 \cdot 2^{2/3} \, a_v \, Q}, \quad w^* = - \frac{6 \, a_v^2 + 6 \, a_v (a_w - 1) + \sqrt[3]{2} \, Q^2}{3 \cdot 2^{2/3} \, a_v \, Q},
        	\\ 
        	v^* &= \frac{9 \, \left(a_w - 1\right) \left(6 \sqrt[3]{2} \, a_v^3 + 6 \sqrt[3]{2} \, a_v^2 (a_w - 1) + 2^{2/3} \, a_v \, Q^2 \right) - 2 \, \sqrt{L} \, Q + 2 \, Q^4}{54 \, a_v^3 \, Q},
        	\\
        	& \mbox{with} \qquad Q = \left(\sqrt{L} - 27 \, a_0 \, a_v^2\right)^\frac{1}{3}, \quad  L = 729 \, a_0^2 \, a_v^4 - 108 \, a_v^3 (a_v + a_w - 1)^3,
        \end{align*}
        which exists provided $L>0$. The Jacobian matrix of \eqref{example}, evaluated at $\mbf P$, can be readily computed to be 
        \begin{align*}
        	\jac \mbf f(\mbf P) &= \begin{pmatrix}
        		1 - R^2 & -1 & 0
        		\\ 
        		\epsilon_v & - a_v \, \epsilon_v & - a_w \, \epsilon_v
        		\\ 
        		\epsilon_w & 0 & - \epsilon_w
        	\end{pmatrix}\evs{,} \qquad \mbox{where}
            \\
            R^2 &= \frac{\left(6 \, a_v^2 + 6 \, a_v \left(a_w - 1\right) + \sqrt[3]{2} \, Q^2\right)^2}{6 \sqrt[3]{2} \, a_v^2 \, Q^2},
        \end{align*}
        from which the principal minors can be easily computed
        \begin{align}
        	M_1 &= 1 - R^2, \quad  M_2 = - a_v \, \epsilon_v, \quad M_3 = - \epsilon_w, \quad M_{12} = \epsilon_v \left(1 + R^2 - a_v\right), \nonumber 
        	\\ 
        	M_{13} &= \epsilon_w \left(R^2 - 1\right), \quad  M_{23} = a_v \, \epsilon_v \, \epsilon_w, \quad M_{123} = \epsilon_v \, \epsilon_w (a_v \left(1 - R^2\right) + q_w - 1). \label{eq:M_Arik}
        \end{align}
        Note that $M_2,M_3<0$, $M_{23}>0$, whereas $M_1$, $M_{12}$ $M_{13}$ and $M_{123}$ can take either sign, depending on the value of $R^2$.
        
        From Lemma \ref{necessarystability}, \evs{the solution $\mbf P$ is stable if the following conditions are satisfied}
        \begin{align}
        	M_{123} < 0, \qquad M_1 + M_2 + M_3 < 0, \notag
            \\
            (M_1 + M_2 + M_3)(M_{12} + M_{13} + M_{23})< M_{123}\evs{,} \label{eq:P_Arik}
        \end{align}
        which lead to an open set that can readily be written in terms of the problem parameters using \eqref{eq:M_Arik}. 
        
        From here on, we will assume that conditions \eqref{eq:P_Arik} hold so that $\mbf P$ is a stable equilibrium in the absence of diffusion. Now, using the results proven in the previous sections, we can state the following.
        \begin{proposition}
        	$M_1 > 0$ if and only if there exist non-negative diffusion rates $D_1, D_2, D_3$ such that \eqref{example} shows a Turing instability around $\mbf P$.
        \end{proposition}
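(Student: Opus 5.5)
The plan is to work directly with the polynomial $\det(\jac \mbf f(\mbf P) - \mu\,\mathbb D)$ of Lemma \ref{detlemma}. A Turing instability means exactly that this polynomial vanishes for some $\mu>0$. For $n=3$ its coefficients are
\begin{align*}
b_0 &= M_{123}, \qquad b_1 = -\left(D_1 M_{23} + D_2 M_{13} + D_3 M_{12}\right),\\
b_2 &= D_2 D_3 M_1 + D_1 D_3 M_2 + D_1 D_2 M_3, \qquad b_3 = -D_1 D_2 D_3 .
\end{align*}
Before starting I would recompute two minors from the Jacobian. First, $M_{13} = -\epsilon_w M_1$, so $M_{13}$ has the sign opposite to $M_1$. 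Second, I get $M_{12} = \epsilon_v\left(1 - a_v(1-R^2)\right) = \epsilon_v(1 - a_v M_1)$. This should replace the form printed in \eqref{eq:M_Arik}, which appears to contain a typo, as does $q_w$, which I read as $a_w$. Finally, stability \eqref{eq:P_Arik} gives $M_{123}<0$, and we always have $M_2, M_3 < 0$ and $M_{23} > 0$.

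\emph{Sufficiency ($M_1>0$).} I would set $D_1 = 0$ and take $D_2, D_3 > 0$ arbitrary. Then $b_3 = 0$, so the polynomial reduces to the quadratic $q(\mu) = b_2\mu^2 + b_1\mu + b_0$. Its leading coefficient is $b_2 = M_1 D_2 D_3 > 0$ and $q(0) = M_{123} < 0$. By the Intermediate Value Theorem, $q$ has a root $\mu^*>0$. This is consistent with Lemma \ref{positiveeig}: as $\mu\to\infty$, one eigenvalue tends to $M_1>0$.

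To obtain strictly positive diffusion rates, I would then fix some $\hat\mu>\mu^*$ with $q(\hat\mu)>0$ and raise $D_1$ to a small positive value. The value of the cubic at $\hat\mu$ stays positive by continuity in $D_1$, while its value at $0$ remains $M_{123}<0$. Hence a positive root persists, so a real eigenvalue crosses zero at some $k^2\in(0,\hat\mu)$. This gives the Turing instability.

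An equivalent route is to invoke Theorem \ref{biggestresult} with $\ell = 1$ and the word $(1,3,2)$. Here $q=1$, the sign sequence $-M_1<0$, $M_{13}<0$, $-M_{123}>0$ gives $p=1$, and the theorem yields an odd number of zero eigenvalues.

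\emph{Necessity ($M_1\le 0$ rules out Turing instability).} When $M_1 \le 0$ we get $M_{13} = -\epsilon_w M_1 \ge 0$ and $M_{12} = \epsilon_v(1 - a_v M_1) \ge \epsilon_v > 0$. Together with $M_2, M_3 < 0$ and $M_{23} > 0$, every condition $(-1)^k M_{i_1\ldots i_k}\ge 0$ for $k=1,2$ holds. Theorem \ref{th:noturing} then shows that no Turing instability is possible for any non-negative $D_i$. Directly, all coefficients $b_0,\dots,b_3$ are $\le 0$ with $b_0<0$, so the polynomial has no positive root.

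The main obstacle is this necessity direction, specifically the sign of $M_{12}$. With $M_{12}$ as printed in \eqref{eq:M_Arik}, $M_{12}$ could be negative when $M_1\le 0$, and Theorem \ref{th:noturing} would not apply. The argument therefore relies on the corrected expression $M_{12} = \epsilon_v(1-a_v M_1)$, and I would verify that determinant carefully.
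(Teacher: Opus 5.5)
Your proof is correct and follows the paper's route. For sufficiency, the paper invokes Theorem~\ref{biggestresult} with the word $(1,3,2)$; your Intermediate Value Theorem argument on the quadratic obtained with $D_1=0$ is just that theorem unpacked for $n=3$. For necessity, you use Theorem~\ref{th:noturing}, exactly as the paper does.

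Your corrected minor $M_{12}=\epsilon_v(1-a_vM_1)$ is right: it reproduces the paper's own numerical value $M_{12}=0.18682$, whereas the printed formula gives about $0.283$. It is also exactly what justifies the paper's step that $M_1\le 0$ forces $M_{12}\ge 0$.
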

        \begin{proof}
        	Note from \eqref{eq:M_Arik} that $M_{13} = -\epsilon_w \, M_1$, so that $M_1 > 0$ implies $M_{13} < 0$. Note also that the sequence $(-1) \, M_1$, $(-1)^2 \, M_{13}$, $(-1)^3 \, M_{132}$ changes its sign only once. Furthermore, as $M_{13} < 0$, then we know that $S_{13}$ has only one positive eigenvalue. Thus, according to Theorem \ref{biggestresult}, we can find non-negative diffusion rates such that \eqref{example} admits a Turing instability around $\mbf P$.
        	
        	On the other hand, if $M_1\leq 0$, then $M_{12}\geq 0$ and $M_{13}\geq 0$. This implies that $(-1)^3 \, M_{123}$, $(-1)^2 \, M_{ij}$ and $(-1) \, M_i$ are all strictly non-negative for each $1 < i < j < 3$. Then, Lemma \ref{th:noturing} implies that \eqref{example} will not be able to develop a Turing instability around $\mbf P$ for any positive diffusion constants.
        \end{proof}
        \begin{proposition}
        	If $M_1 + M_2 > 0$ and $M_{12} > 0$, then we can choose non-negative diffusion rates $D_1, D_2, D_3$ such that \eqref{example} admits a wave instability around $\mbf P$.
        \end{proposition}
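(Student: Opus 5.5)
This is a direct application of Proposition \ref{th:hopf3} with $(i_1,i_2)=(1,2)$. Since conditions \eqref{eq:P_Arik} are assumed, $\mbf P$ is a stable homogeneous steady state in the absence of diffusion, and $n=3$. The hypotheses $M_1+M_2>0$ and $M_{12}>0$ say that the principal submatrix $S_{12}$ has positive trace and positive determinant. Hence it has two eigenvalues with positive real part, which is exactly the situation the proposition treats. I would still write the argument out concretely for this model, following the proof of Proposition \ref{th:hopf3} with the roles of indices relabelled so that the unstable pair is $\{1,2\}$.

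\textbf{Choice of diffusion rates.} Take $D_1=D_2=0$ and $D_3>0$. Lemma \ref{th:bdef3} then gives $b_3=0$. The coefficient $a_1$ of the characteristic polynomial \eqref{eq:PD3} becomes
\begin{align*}
a_1(\mu)=M_{12}+M_{13}+M_{23}-D_3\left(M_1+M_2\right)\mu .
\end{align*}
This is positive at $\mu=0$ by Lemma \ref{hopf3necessary}. It is strictly decreasing because $M_1+M_2>0$, and it vanishes at
\begin{align*}
\mu^*=\frac{M_{12}+M_{13}+M_{23}}{D_3\left(M_1+M_2\right)}>0 .
\end{align*}

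\textbf{Sign change.} Set $q(\mu)=b_2\mu^2+b_1\mu+b_0$. Lemma \ref{hopf3necessary} gives $q(0)=b_0>0$. At $\mu^*$ we have $a_1=0$, so $q(\mu^*)=a_1a_2-a_0$ reduces to $-a_0(\mu^*)=\det(\jac\mbf f(\mbf P)-\mu^*\mathbb D)$. Since $D_1=D_2=0$, Lemma \ref{detlemma} shows this determinant is linear in $\mu$:
\begin{align*}
\det(\jac\mbf f(\mbf P)-\mu^*\mathbb D)=M_{123}-D_3M_{12}\mu^*=M_{123}-\frac{M_{12}\left(M_{12}+M_{13}+M_{23}\right)}{M_1+M_2}.
\end{align*}
Both terms are negative: $M_{123}<0$ by \eqref{eq:P_Arik}, and $M_{12}>0$, $M_{12}+M_{13}+M_{23}>0$, $M_1+M_2>0$. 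Hence $q(\mu^*)<0$. By the intermediate value theorem there is $\mu_0\in(0,\mu^*)$ with $q(\mu_0)=0$.

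\textbf{Purely imaginary pair.} At $\mu_0$ we have $a_1(\mu_0)>0$ and $a_1a_2=a_0$. We also have $a_2(\mu_0)=-(M_1+M_2+M_3)+D_3\mu_0>0$, and $a_0(\mu_0)>0$ since the determinant above is negative throughout $[0,\mu^*]$. The condition $a_1a_2=a_0$ lets the cubic factor as
\begin{align*}
(\lambda^2+a_1)(\lambda+a_2),
\end{align*}
so $\lambda=\pm i\sqrt{a_1(\mu_0)}$ are purely imaginary eigenvalues. This is a wave instability at $k=\sqrt{\mu_0}$.

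\textbf{Main obstacle.} The delicate step is the evaluation of $q(\mu^*)$. One has to confirm that the identity $q=a_1a_2-a_0$ from the proof of Lemma \ref{th:bdef3} specialises correctly with $D_1=D_2=0$, and that the sign of $M_{123}$ is fixed by the stability assumption rather than by the free parameter $R^2$. Once that is checked, the rest follows the proof of Proposition \ref{th:hopf3}. If strictly positive rates are wanted, continuity lets us perturb $D_1,D_2$ slightly away from zero while keeping the crossing.
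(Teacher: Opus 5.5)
Your proof is correct, but it takes a different route from the paper. The paper does not use Proposition \ref{th:hopf3} here. It invokes the general Theorem \ref{biggestresult} with $\ell=2$ and word $(1,2,3)$. There $S_{12}$ has $q=2$ eigenvalues with positive real part, and the sequence $M_{12}>0$, $-M_{123}>0$ has $p=0$ sign changes. Item 3 of that theorem then gives at least $(q-p)/2=1$ purely imaginary pair for some $\mu>0$.

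Your route notes that the hypotheses are exactly those of Proposition \ref{th:hopf3} with $(i_1,i_2)=(1,2)$, so your first sentence is already a complete proof. You then redo its intermediate-value argument with $\mathbb D=\diag(0,0,D_3)$. This version is more elementary and more explicit:
\begin{itemize}
\item it works for every $D_3>0$ and brackets the critical value in $(0,\mu^*)$;
\item the factorisation $(\lambda^2+a_1)(\lambda+a_2)$ shows the third eigenvalue is $-a_2(\mu_0)<0$;
\item since $\det(\jac \mbf f(\mbf P)-\mu\,\mathbb D)<0$ on $[0,\mu^*]$, no real eigenvalue passes through zero, so the crossing genuinely involves a complex pair.
\end{itemize}
It also avoids the large-$\mu$ asymptotics of Lemma \ref{positiveeig}, Lemma \ref{signpreservation}, and the Descartes/continuity counting in the proof of Theorem \ref{biggestresult}, which is much less explicit. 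What the paper's route buys is uniformity. The same sign-change bookkeeping is what it applies to the four-component Brusselator examples, where the explicit cubic Routh--Hurwitz computation behind Proposition \ref{th:hopf3} has no direct counterpart.

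The two points you flag as the main obstacle are not actual problems. First, $q=a_1a_2-a_0$ is a polynomial identity in $D_1,D_2,D_3$; the coefficients in Lemma \ref{th:bdef3} are exactly those of $a_1a_2-a_0$, so it specialises to $D_1=D_2=0$. Second, $M_{123}<0$ is one of the stability conditions \eqref{eq:P_Arik}.
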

        \begin{proof}
        	If $M_1 + M_2 > 0$ and $M_{12} > 0$, then $S_{12}$ will have two eigenvalues with a positive real part. Furthermore, we will have no change of sign from $- M_{123}$ to $M_{12}$. Then, according to Theorem \ref{biggestresult}, we will be able to find non-negative diffusion rates such that \eqref{example} admits a wave instability around $\mbf P$.
        \end{proof}
        For a numerical illustration of these results, we shall consider the following parameter values extracted from \cite{Yochelis2008}:
        \begin{align}
        	a_v = 0.435, \quad a_w = 0.5, \quad  \epsilon_v = 0.2 \quad \epsilon_w = 1, \quad \mbox{and} \quad a_0 = -0.1.  \label{pars_Arik}
        \end{align}
        which can easily be seen to lie inside the parameter region in which $\mbf P$ is stable. Furthermore, at these parameter values we have $M_1 = 0.151497 > 0$, $M_2 = -0.087 < 0$, $M_3 = -1 < 0$, $M_{12} = 0.18682 > 0$, $M_{13} = -0.151497 < 0$ $M_{23} = 0.087 > 0$ and  $M_{123} = -0.0868198 < 0$, from which we see that $M_1 + M_2 > 0$. Thus, the above two propositions show that there must exist a set of non-negative diffusion rates such that $\mbf P$ undergoes a Turing instability and another set under which it undergoes a wave instability. The eigenvalues of $S_{12}$ are given by $0.0322485 \pm 0.431022 i$, which have a positive real part, giving conditions for a wave instability. Moreover, $S_{13}$ has eigenvalues given by $-1$ and $0.151497$ which will give rise to a Turing instability.
        
        Consider first the wave bifurcation. Let us define a parameter $\delta > 0$ and consider $D_1 = D_2 = \delta$ and $D_3 = 0.1$. The corresponding dispersion curves are plotted in Fig.~\ref{fig:dispersionrelationTH}(a) for $\delta$ ranging from $0$ to $0.01$ in uniform steps. Note that when $\delta = 0$ one of the dispersion relation curves does indeed tend to a finite positive value ($\approx 0.032249$), as $\mu\to \infty$, exactly as shown by Lemma \ref{positiveeig}. \evs{Therefore, for $\delta > 0$ sufficiently small,} there must \evs{exist} an interval of values of $k$ for which the dispersion curve is positive. As $\delta$ is increased through a critical value $\delta^* \approx 2.27312 \times 10^{-3}$, there is a point of tangency between this dispersion curve and the $k$-axis, at a critical wavenumber $k^* \approx 2.43287$. To check that this is indeed a wave zero, we have computed the critical eigenvalues of $\jac \mbf f(\mbf P) - \mu \, \mathbb D$ to be $-1.55429$ and $\pm 0.355382 i$. Furthermore, Fig.~\ref{fig:dispersionrelationTH}(b) shows the results of a numerical computation in one spatial dimension, which was performed at a $\delta$-value just beyond that at which the bifurcation occurs, which indeed confirms the onset of a spatiotemporal instability.
        
        \begin{figure}
        	\centering
        	\includegraphics[width=0.45\textwidth]{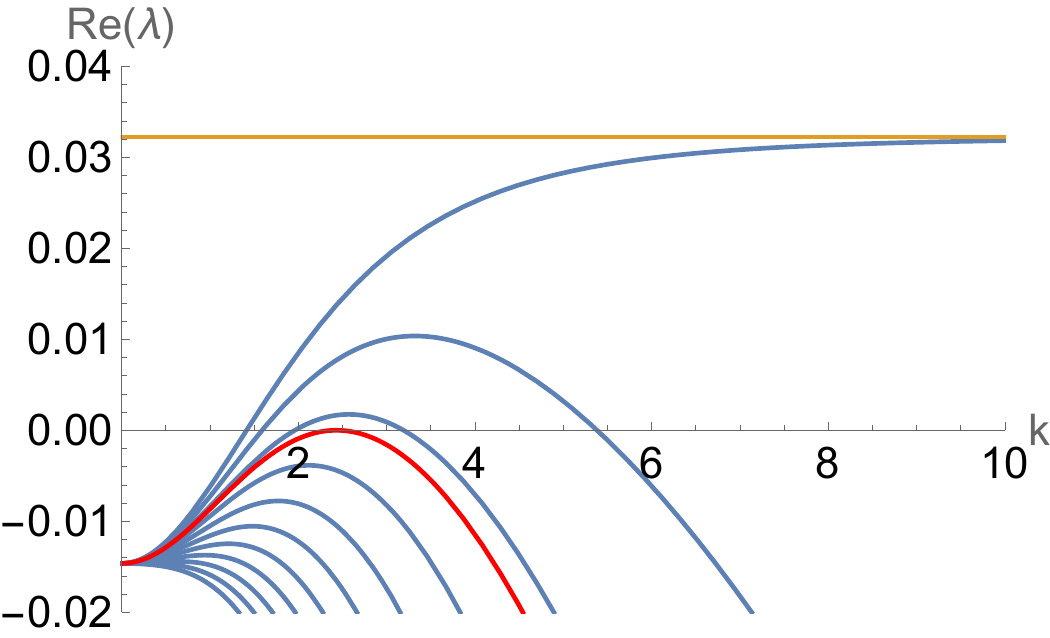} \ 
        	\includegraphics[width=0.45\textwidth]{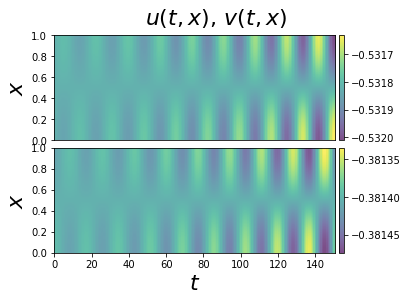}
        	\caption{(a) Dispersion relation of model \eqref{example} at parameter values \eqref{pars_Arik} $\left(a_v, a_w, \epsilon_v, \epsilon_w, a_0\right) = (0.435, 0.5, 0.2, 1, -0.1)$ and $\left(D_1, D_2, D_3\right) = (\delta, \delta, 0.1)$ as $\delta$ varies from $0$ to $0.01$ in uniform steps. The red line is the dispersion curve for $\delta = 2.27312\times 10^{-3}$, where the wave bifurcation occurs, and the orange line is the straight line $\Re(\lambda) = 0.0322485$. (b) Simulation of \eqref{example} on a 1D domain of length $L = 1$ with Neumann boundary conditions, for $\delta = 0.001$ and an initial condition that is a perturbation of $\mbf P$ proportional to $\cos(\pi x)$ with amplitude $10^{-5}$ in all three components.}
        	\label{fig:dispersionrelationTH}
        \end{figure}
        To find a Turing instability, let us introduce diffusion rates $D_2 = 5$, while $D_1 = D_3 = \delta$, where again, $\delta$ is a parameter. Figure \ref{fig:dispersionrelationturing}(a) shows dispersion curves for this case as $\delta$ ranges from $0$ to $0.5$ in uniform steps. In addition, the curve for $\delta\approx 0.28684$ shows the bifurcation parameter value, for which the critical wavenumber is $k \approx 0.496617$, and the eigenvalues of $\jac \mbf f(\mbf P) - \mu \, \mathbb D$ are $0$ and $-1.15507 \pm 0.293739  i$. Thus, we have a Turing bifurcation for these parameter values.
        \begin{figure}[tbp]
        	\begin{subfigure}[b]{0.45\textwidth}
        		\centering
        		\includegraphics[width=\textwidth]{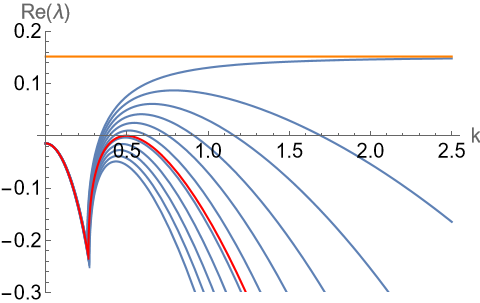}
        		\caption{ }
        	\end{subfigure}
        	\hfill
        	\begin{subfigure}[b]{0.45\textwidth}
        		\centering
        		\includegraphics[width=\textwidth]{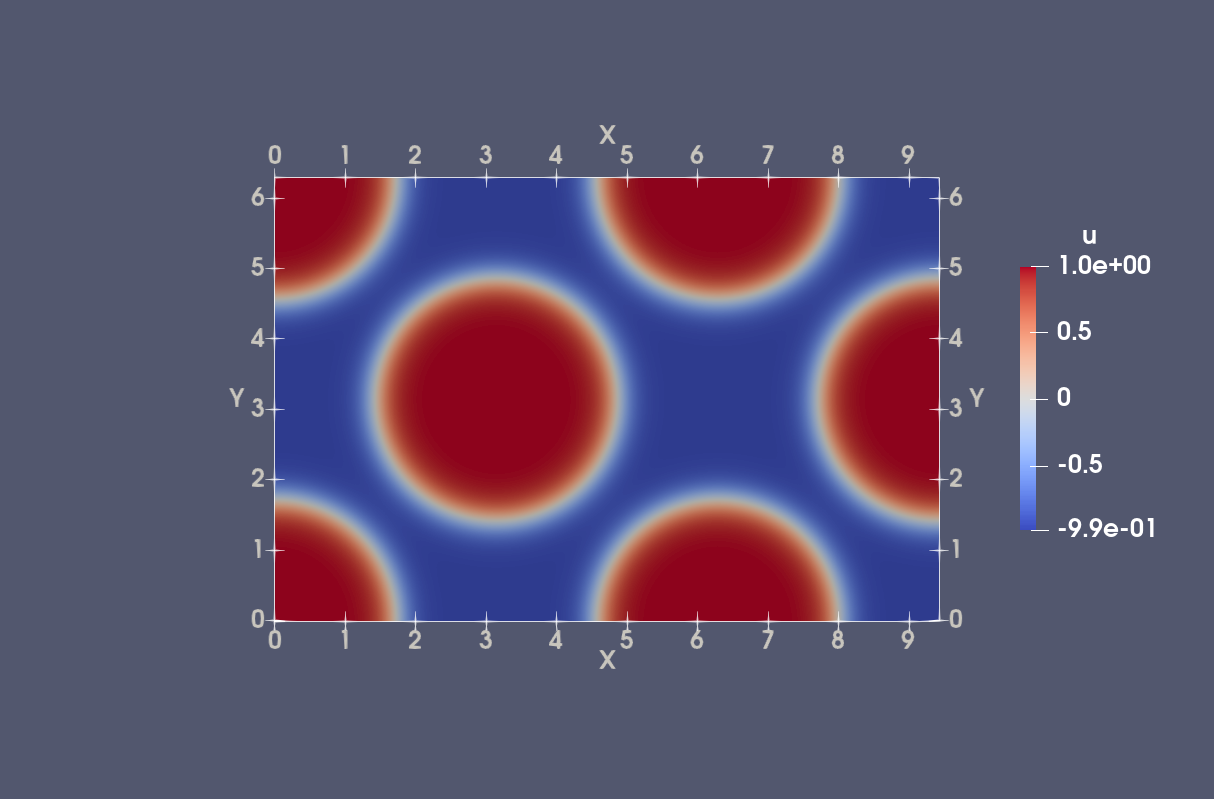}
        		\caption{ }
        	\end{subfigure}
        	\\
        	\begin{subfigure}[b]{0.45\textwidth}
        		\centering
        		\includegraphics[width=\textwidth]{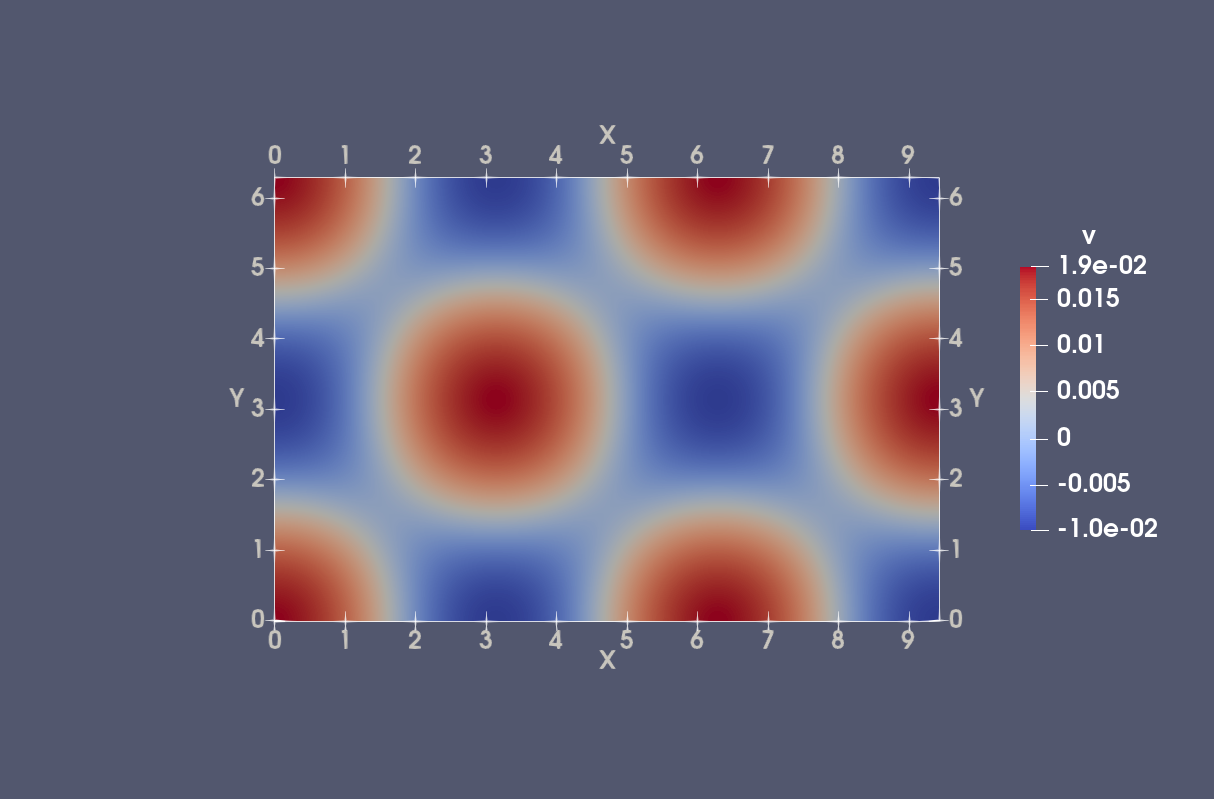}
        		\caption{ }
        	\end{subfigure}
        	\hfill
        	\begin{subfigure}[b]{0.45\textwidth}
        		\centering
        		\includegraphics[width=\textwidth]{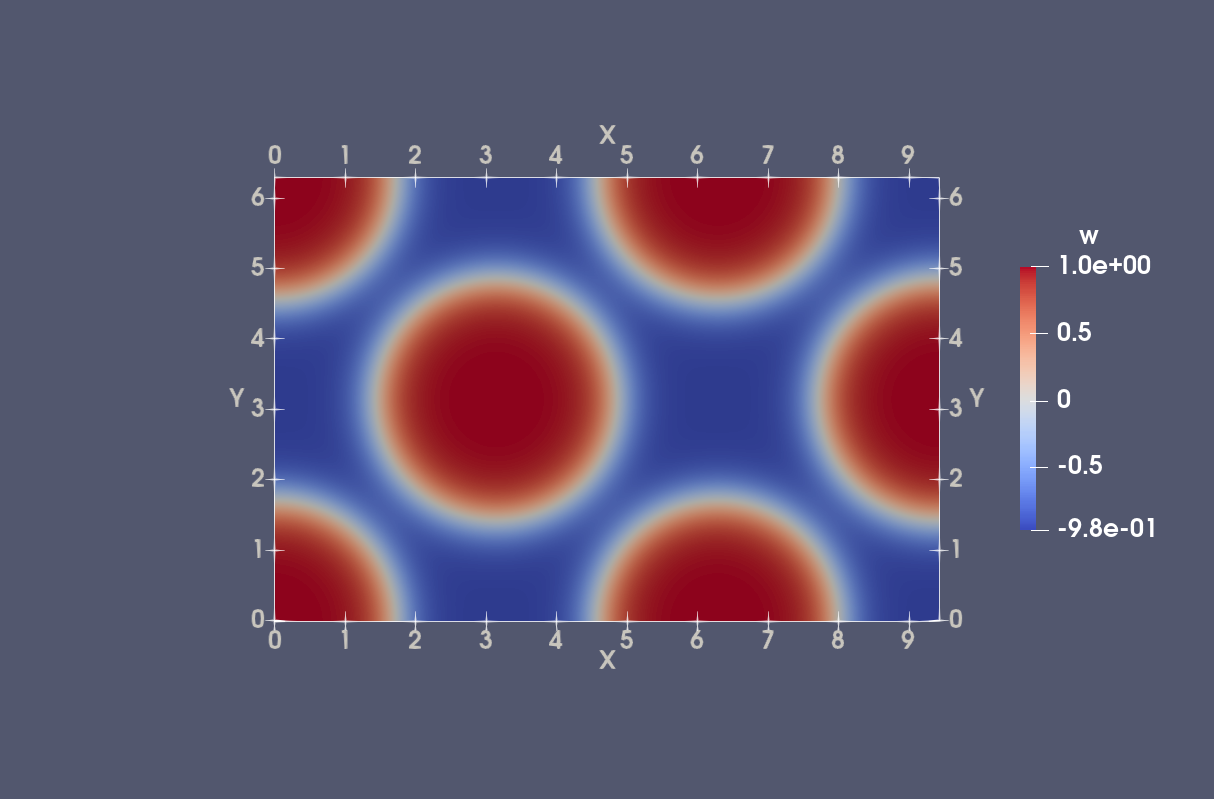}
        		\caption{ }
        	\end{subfigure}
        	\caption{(a) Similar to Fig.~\ref{fig:dispersionrelationTH}(a) but for $\left(D_1, D_2, D_3\right) = (\delta, 5, \delta)$ and $\delta$ varying from 0 to 0.5 in uniform steps; the distinguished (red) curve is for $\delta = 0.28684$, where a Turing bifurcation occurs, and the (orange) straight line is at $\Re(\lambda) = 0.151497$. (b), (c), (d) variables $u$, $v$, $w$, respectively, at the Turing instability pattern obtained after integrating the system in two spatial dimensions with $\delta = 0.03$. Here, we make a perturbation in all components equally corresponding to wave numbers $(k_x, k_y) = (3, 2)$ with an amplitude $\varepsilon = 10^{-4}$.}
        	\label{fig:dispersionrelationturing}
        \end{figure}
        This result is illustrated in Fig.~\ref{fig:dispersionrelationturing} where we show both the details of the dispersion relation in this case and also a computation in two spatial dimensions.
    
    \subsection{Symmetrically coupled Brusselator systems}
        In this section, we consider a model of two symmetrically coupled Brusselator systems studied by Konishi and Hara \evs{on a finite domain} in \cite{Konishi2018}, given by 
        \begin{align}
        	\frac{\partial u}{\partial t} &= \eta \left(a - (b + 1) \, u + u^2 \, v\right) + c \, (w - u) + D_1 \, u_{xx}, \nonumber
        	\\[1.5ex]
        	\frac{\partial v}{\partial t} &= \eta\left(- u^2 \, v + b \, u\right) + c \, (z - v) + D_2 \, v_{xx}, \label{Bru4}
        	\\[1.5ex]
        	\frac{\partial w}{\partial t} &= \eta\left(a - (b + 1) \, w + w^2 \, z\right) + c \, (u - w) + D_3 \, w_{xx}, \nonumber
        	\\ 
        	\frac{\partial z}{\partial t} &= \eta\left(- w^2 \, z + b \, w\right) + c \, (v - z) + D_4 \, z_{xx}, \nonumber 
        \end{align}
        Here, $u(x, t)$, $v(x, t)$, $w(x, t)$, and $z(x, t)$ are scalar fields, $a, b, c, \eta > 0$ are reaction parameters while $D_1, D_2, D_3, D_4\geq 0$ are diffusion rates. In \cite{Konishi2018} it is shown numerically that the symmetrically coupled system may have two distinct Turing bifurcations. We show here how this fits into our theory. 
        
        The model has at most five homogeneous steady states. We are interested in the simplest one 
        \begin{align}
        	\mbf P = \left(a, \frac{b}{a}, a, \frac{b}{a}\right), \label{eq:P_bruss}
        \end{align}
        the Jacobian matrix around which, in the absence of diffusion, is given by 
        \begin{align}
        	\jac \mbf f(\mbf P) = \begin{pmatrix}
        		(b - 1) \, \eta - c & a^2 \, \eta  & c & 0
        		\\
        		- b \, \eta  & - a^2 \, \eta - c & 0 & c
        		\\
        		c & 0 & (b - 1) \, \eta - c & a^2 \, \eta 
        		\\
        		0 & c & - b \, \eta  & - a^2 \, \eta - c
        	\end{pmatrix}. \label{eq:Brus_Jac}
        \end{align}
        whose eigenvalues are given by
        \begin{align*}
        	\lambda_\pm^{[1]} &= \frac{\eta}{2} \left(- a^2 + b - 4 \, \frac{c}{\eta} - 1 \pm \sqrt{\left(- a^2 + b - 1\right)^2 - 4 \, a^2}\right),
        	\\ 
        	\lambda_\pm^{[2]} &= \frac{\eta}{2} \left(- a^2 + b - 1 \pm \sqrt{\left(- a^2 + b - 1\right)^2 - 4 \, a^2}\right).
        \end{align*}
        This implies that, if we have $b < a^2 + 1$, then $\mbf P$ is stable in the absence of diffusion. On the other hand, note that $M_1 = M_3 = (b - 1) \, \eta - c > 0$ if and only if $b > \frac{c}{\eta} + 1$. These inequalities immediately imply that, if $\frac{c}{\eta} + 1 < a^2 + 1$, then we can find a value of $b > 0$ such that $\frac{c}{\eta} + 1 < b < a^2 + 1$ and, according to Theorem \ref{biggestresult}, there will exist non-negative diffusion rates such that system \eqref{Bru4} \evs{exhibits} a Turing instability.
        
        Next, note that
        \begin{align*}
        	S_{13} = \begin{pmatrix}
        		(b - 1) \, \eta - c & c
        		\\ 
        		c & (b - 1) \, \eta - c
        	\end{pmatrix}.
        \end{align*}
        Therefore, $\tr\left(S_{13}\right) > 0$ if and only if $b > \frac{c}{\eta} + 1$, and  
        $\det\left(S_{13}\right) > 0$ if and only if $b > 2 \, \frac{c}{\eta} + 1$. Thus, if $2 \, \frac{c}{\eta} + 1 < a^2 + 1$, there will exist $b > 0$ such that $2 \, \frac{c}{\eta} + 1 < b < a^2 + 1$, making $S_{13}$ have two eigenvalues with a positive real part, while $\mbf P$ remains stable in the absence of diffusion.
        
        Furthermore, note that $M_{123} =  M_{134}$ and
        \begin{align*}
        	m(a) &:= (-1)^3 \, M_{123}
            \\
            &= - \eta^2 ((b - 2) \, c + (b - 1) \, \eta ) \, a^2 - (b - 1) \, c \, \eta \,  (- b \, \eta + 2 \, c + \eta)
        \end{align*}
        is a parabola in $a$ that has a negative leading coefficient, since if $b > 2 \, \frac{c}{\eta} + 1$, then $(b - 2) \, c + (b - 1) \, \eta > 2 \, \frac{c^2}{\eta} > 0$. Therefore, $m(a)$ attains its maximum value at $a = 0$, which is $- (b - 1) \, c \, \eta \, \left(- b \, \eta + 2 \, c + \eta\right) < 0$. Thus, $m(a)$ is negative definite.  With this, we have $(-1)^4 \, M_{1234} > 0$, $(-1)^3 \, M_{123} = (-1)^3 \, M_{134} < 0$, and $(-1)^2 \, M_{13} > 0$. This implies that we have two changes of sign in the characteristic polynomial equation when choosing the diffusion constants appropriately and, by Lemma \ref{positiveeig}, we must have two eigenvalues of $\jac \mbf f(\mbf P) - \mu \, \mathbb D$ with positive real part in the limit as $\mu \to \infty$.
        
        
        Figure \ref{fig:dispersionrelationsecondexampleturing} illustrates dispersion curves for each of the two Turing bifurcations. First, in panel (a), we choose diffusion rates $(D_1, D_2, D_3, D_4) = (\delta, 1.8, 0.2, 1.0)$, and let $\delta$ range from 0 to 0.6 in discrete steps of size 0.06. Furthermore, we find that the Turing bifurcation occurs for $\delta = 0.442546$.
        
        \begin{figure}
        	\centering
        	\includegraphics[width = 0.45\textwidth]{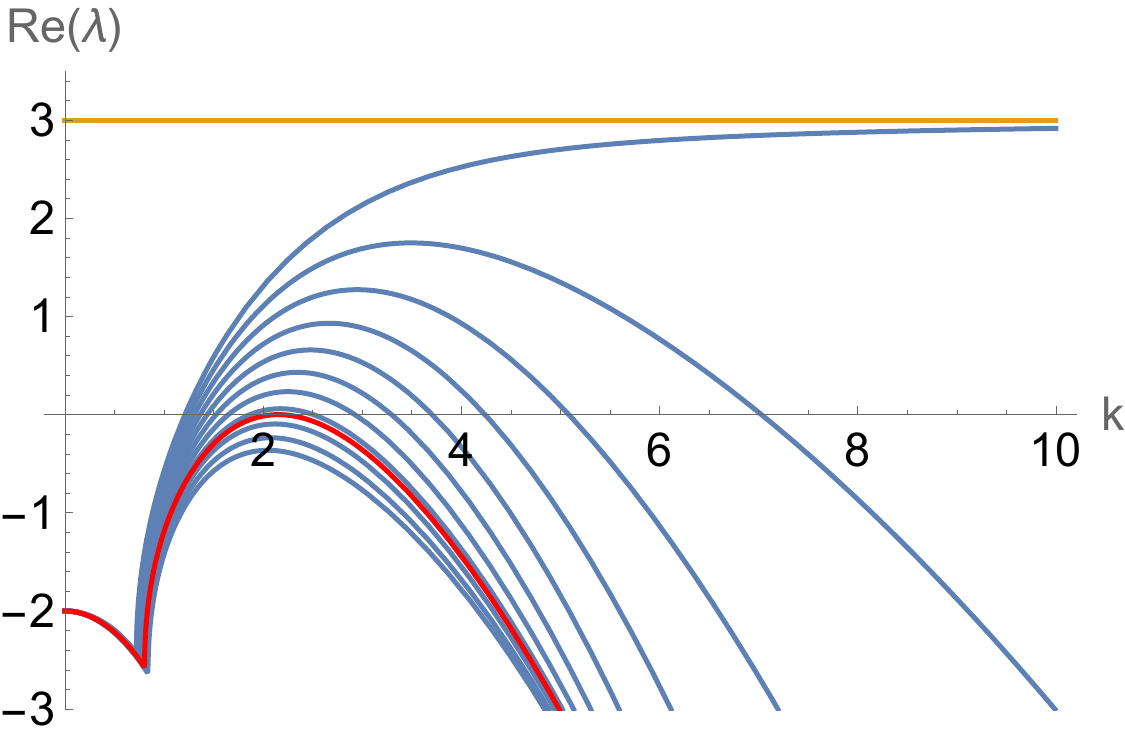}
        	\includegraphics[width = 0.45\textwidth]{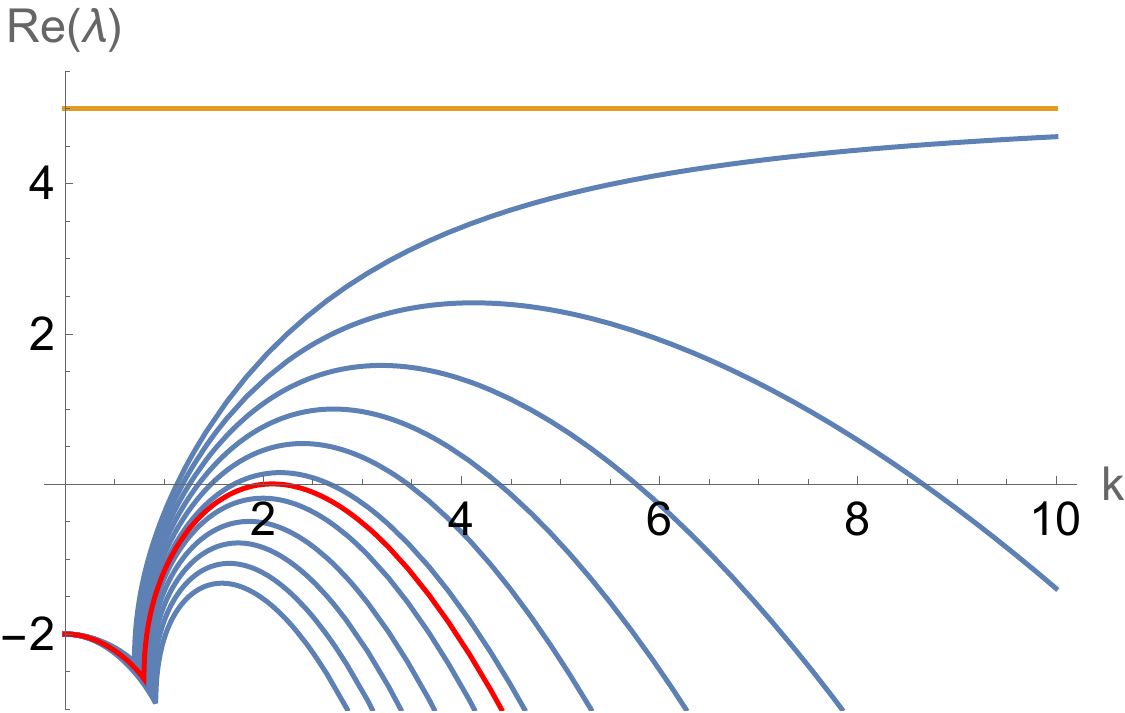}
        	\caption{(a) Dispersion relation of system \eqref{Bru4} when $(a, b, c, \eta) = (3, 6, 2, 1)$, and $\left(D_1, D_2, D_3, D_4\right) = (\delta, 1.8, 0.2, 1.0)$, as $\delta$ ranges from 0 to 0.6 in discrete steps of size 0.06. The blue lines are these dispersion relation curves, while the red line is the dispersion relation when $\delta = 0.4425461$, where the Turing bifurcation occurs, and the green line is the straight line $\Re(\lambda) = 3$. (b) Dispersion relation of system \eqref{Bru4} when $\left(a, b, c, \eta\right) = (3, 6, 2, 1)$, and $\left(D_1, D_2, D_3, D_4\right) = (\delta, 1.8, \delta, 1)$, as $\delta$ ranges from 0 to 0.6 in discrete steps of size 0.06. The blue lines are these dispersion relation curves, while the red line is the dispersion relation when $\delta = 0.3256691$, where the Turing bifurcation occurs, and the green line is the straight line $\Re\left(\lambda\right) = 5$.}
        	\label{fig:dispersionrelationsecondexampleturing}
        \end{figure}
        On the other hand, when choosing $\left(D_1, D_2, D_3, D_4\right) = (\delta, 1.8, \delta, 0.1)$, as $\delta$ ranges from 0 to 0.6, we find the dispersion curves in panel (b), for which the Turing bifurcation occurs when $\delta\approx 0.325669$.
        
        To show that these \evs{really} are independent Turing bifurcations, we can find the critical eigenvector of the operators $\jac \mbf f(\mbf 0) + \mathbb D \, \partial_{xx}$ in each case; see Fig.~\ref{fig:eigenvectors}. 
        \begin{figure}[tbp]
        	\centering
        	\begin{subfigure}[b]{0.45\textwidth}
        		\centering
        		\includegraphics[width=\textwidth]{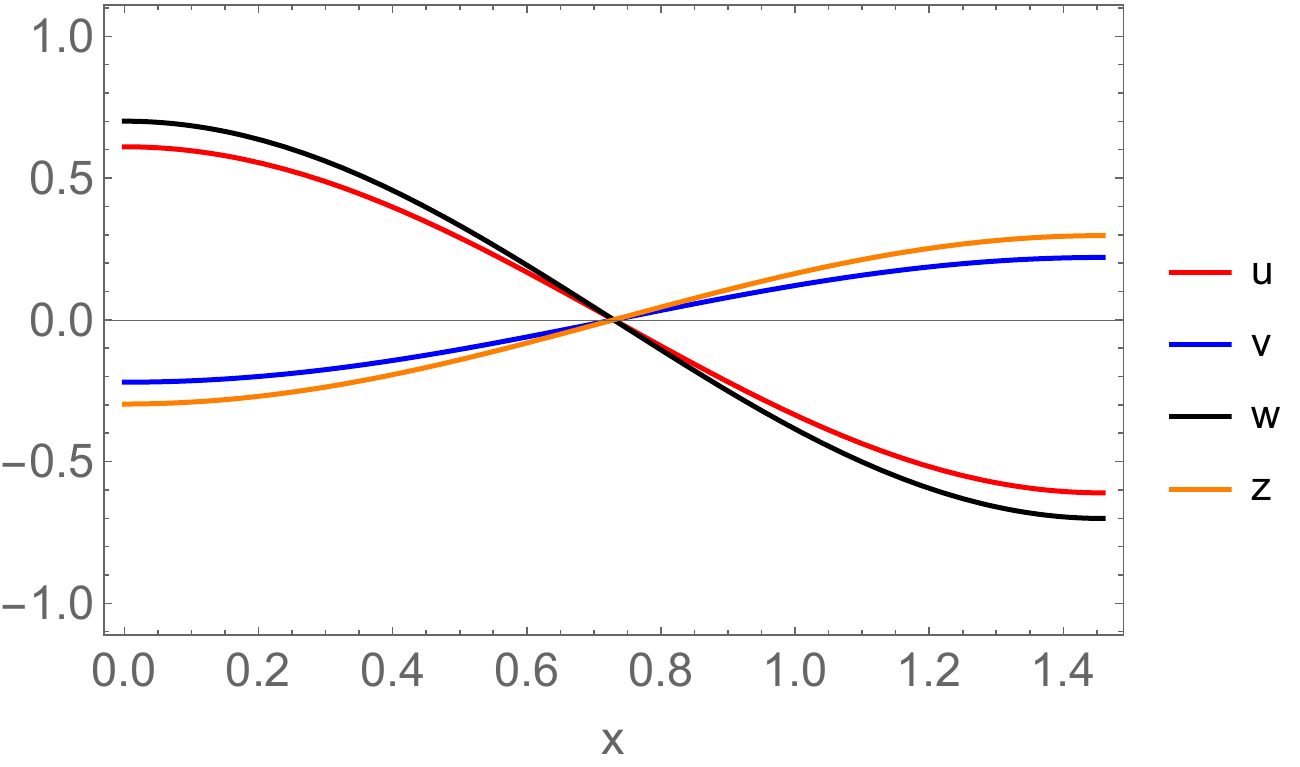}
        		\caption{}
        	\end{subfigure}
        	\hfill
        	\begin{subfigure}[b]{0.45\textwidth}
        		\centering
        		\includegraphics[width=\textwidth]{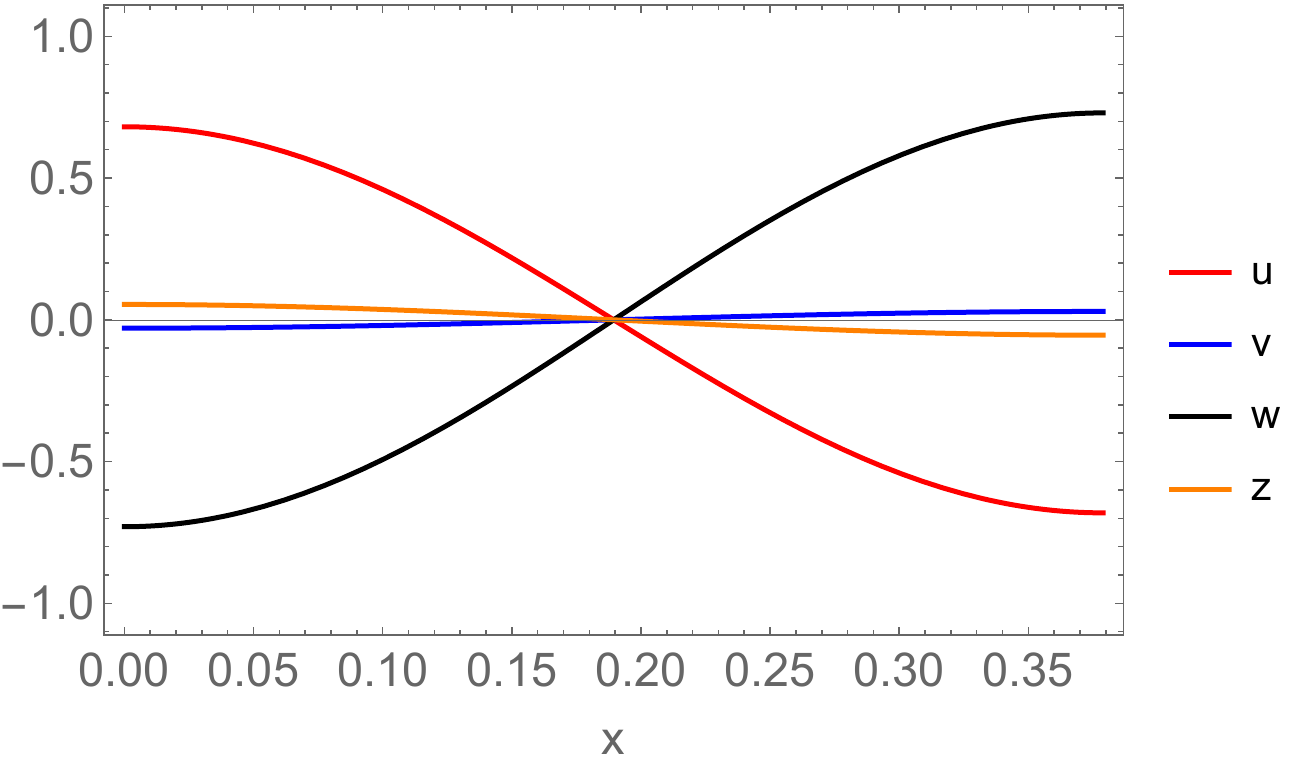}
        		\caption{}
        	\end{subfigure}
        	\caption{Graphs of a normalized eigenvector of $\jac \mbf f(\mbf P) + \mathbb D \, \partial_{xx}$. Here $\left(D_1, D_2, D_3, D_4\right)$ is equal to; (a) $(\delta, 1.8, 0.2, 1.0)$ (with $k = 2.15296$), and (b) $(\delta, 1.8, \delta, 1.0)$ (with and $k = 2.0924$)}
        	\label{fig:eigenvectors}
        \end{figure}
    
    \subsection{Asymmetrically coupled Brusselator systems}
        As a final example, we shall show how a small change to the previous example, breaking the symmetry of the coupling, can lead to the existence of a wave bifurcation. In particular, let us consider model \eqref{Bru4} with the addition of a term $- 2 \, c \left(v - \frac{b}{a}\right)$ to the right-hand side of the $z$-equation, so that it becomes: 
        \begin{align}
        	\frac{\partial z}{\partial t} = \eta\left(- w^2 \, z + b \, w\right) + c \, (v - z) - 2 \, c \left(v - \frac{b}{a}\right) + D_4 \, z_{xx}. \label{secBru4}
        \end{align}
        We assume, again, that $a, b, c > 0$ and $D_1, D_2, D_3, D_4\geq 0$, but now $\eta$ may take either sign.
        
        This model has at most nine homogeneous steady states but, again, we focus only on the simplest one \eqref{eq:P_bruss}, the Jacobian matrix about which is the same as \eqref{eq:Brus_Jac}, except for a switch in the sign of the $(4,2)$ entry.
        
        Note now that the matrix 
        \begin{align*}
        	S_{24} = \begin{pmatrix}
        		- a^2 \, \eta - c & c
        		\\
        		-c & - a^2 \, \eta - c
        	\end{pmatrix}
        \end{align*}
        has the property that when $\tr\left(S_{24}\right) = 0$, $M_{24} = c^2 > 0$. Thus, this submatrix has a pair of complex conjugate eigenvalues that cross the imaginary axis when $\eta = -\frac{c}{a^2}$. Therefore, if the steady state is stable then, according to Theorem \ref{thm:bifurcations}, we can find positive diffusion rates such that the modified system goes through a wave instability when $\eta < -\frac{c}{a^2}$.
        
        In particular, choosing the parameter values
        \begin{align}
        	(a, b, c, \eta) = \left(0.9, 1.3, 4.8, -6.6\right), \label{eq:brus2_par}
        \end{align}
        we find the eigenvalues of $\jac \mbf f(\mbf P)$ to be approximately $-5.42567 \pm 1.1069 i$, and $-0.808327 \pm 8.62189 i$, which means that $\mbf P$ is a stable homogeneous steady state. Furthermore, we find that $S_{24}$ has eigenvalues that are approximately $0.546\, \pm 4.8 i$, which means that this matrix is unstable, while $(-1)^3 \, M_{124} = 151.552$ and $(-1)^3 \, M_{234} = 151.552$ are both positive. This implies that $(-1)^4 \, M_{1234} > 0$, $(-1)^3 \, M_{124} > 0$, $(-1)^3 \, M_{234} > 0$ and $(-1)^2 \, M_{24} > 0$. Therefore, we have an unstable submatrix of the Jacobian matrix of the system and, upon setting $D_2 = D_4 = 0$, there is a change of sign in the polynomial \eqref{eq:mupolynomial} as $\mu$ varies from 0 to $\infty$. Thus, Theorem \ref{biggestresult} shows the existence of wave instabilities for non-zero values of $D_2$ and $D_4$ in this case.
        
        Explicit parameter values for the wave bifurcation can be found if we choose $\left(D_1, D_2, D_3, D_4\right) = (10, \delta, 10, \delta)$. Figure \ref{fig:thirdexampleTH}(a) shows dispersion curves for different values of $\delta$ ranging from 0 to 0.025 in equal steps. The figure also shows that a wave bifurcation occurs at $\delta = 0.016939$.
        \begin{figure}
        	\centering
        	\includegraphics[width=0.45\textwidth]{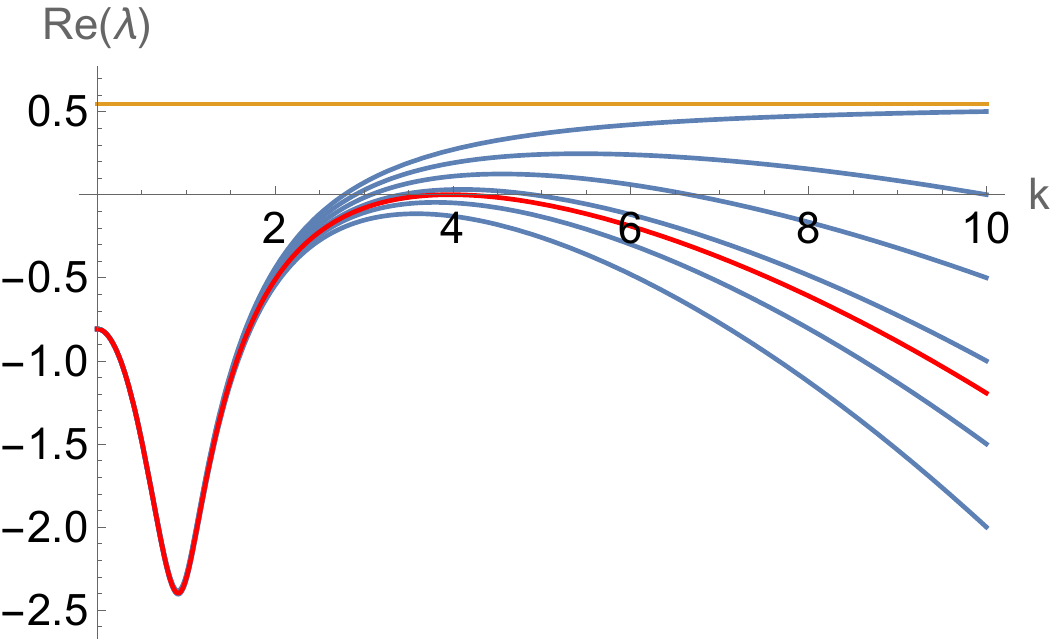}
        	\includegraphics[width=0.45\textwidth]{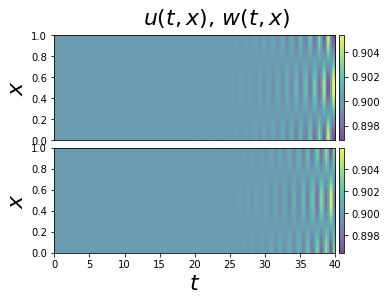}
        	\caption{(a) Similar to Fig.~\ref{fig:dispersionrelationTH}(a) but for system \eqref{Bru4} with the final equation replaced by \eqref{secBru4}, at parameter values \eqref{eq:brus2_par} and $\left(D_1, D_2, D_3, D_4\right) = (10, \delta, 10, \delta)$ as $\delta$ ranges from 0 to 0.025 in uniform steps. The distinguished (red) curve is for $\delta = 0.016939$, and the (orange) horizontal line is at $\Re(\lambda) = 0.546$. (b) Numerical integration of the system for $\delta = 0.005$, with initial conditions that represent a perturbation to $\mbf P$ of $10^{-5} \, \cos(2 \pi x)$ in each component.}
        	\label{fig:thirdexampleTH}
        \end{figure}
        
        In particular, when $\delta = 0.016939$, we find that the critical wavenumber is $k = 3.97209$, and the eigenvalues of the Jacobian matrix are, approximately, $-169.084$, $-159.468$, and  $\pm 4.80815 i$, confirming that the system indeed undergoes a wave bifurcation. Finally, Fig.~\ref{fig:thirdexampleTH}(b) shows the result of numerical simulation for a value of $\delta$ within the wave unstable parameter region, which indeed indicates the presence of a spatiotemporal instability.

\section{Discussion}\label{sec:discuss}
    This paper has established general conditions for the existence of both Turing and wave instabilities for reaction-diffusion systems with an arbitrary number of components, generalising and improving earlier partial results in the literature. Even though the fact that a Turing instability is a special case of a wave, there were no previous studies mixing both of these phenomena in systems with $n$ components and they need to be considered together in order to be able to find and characterize diffusion-driven instabilities you might find in this kind of system. Note that all our results were stated for reaction-diffusion systems with diagonal diffusion matrices. In principle, though, the results can readily be extended to the case that $\mathbb D$ is not diagonal, that is, where there are {\em cross-diffusion} terms.
    
    One simple way to extend the theory to the case that the off-diagonal diffusion coefficients are small is to first set them to zero. Then one can use the results developed in this paper to establish conditions for Turing and wave bifurcations. Next, one can use a homotopy parameter to increase the off-diagonal diffusion rates, and the theory will still hold provided no eigenvalues with positive real part cross the imaginary axis during the homotopy. 
    
    An alternative way to treat cases with cross-diffusion, assuming that $\mathbb D$ is positive semi-definite and has $n$ distinct real eigenvectors, could be to undertake a linear change of coordinates of the dependent variables to diagonalize $\mathbb D$. Further changes of parametrisation may be necessary to avoid diffusion constants appearing in the expressions for the local reaction terms $\mbf f$.
    
    At a more general level, everything we developed throughout the paper was done assuming that we are working with a reaction-diffusion system in $\mathbb R^m$. However, all of the theory can \evs{in principle} be generalized, \evs{up to a few technical assumptions} to the case of a system 
    \begin{align*}
    	\partial_t \mbf u &= \mbf f(\mbf u) + \mathbb D \, \mbf L(\mbf u),
    \end{align*}
    where $\mbf L$ is a linear self-adjoint operator in $\mbf u$ with negative eigenvalues, and boundary conditions accepting homogeneous steady states. In that case, the instabilities around those states we have described would spatially be described by eigenfunctions of that operator, rather than simple sinusoidal functions. An obvious example would be to consider $L(\mbf u)$ to be a constant matrix times the Laplacian, defined on a disk or sphere, see e.g.\cite{Anotida}. We shall leave the details of all such generalizations to future work.
    
    As a final comment, we note that the existence of a Turing bifurcation does not necessarily imply the onset of a stable periodic pattern. It seems in many reaction-diffusion systems the bifurcation may be subcritical, which means the periodic patterned states are unstable. Instead, in such systems, particularly on long domains, stable spatially localised patterns are typically observed; see e.g.~\cite{FahadWoods,MeronIssue} and references therein. The determination of whether a Turing (or wave) bifurcation is super- or sub-critical requires the computation of certain cubic coefficients that arise as part of the normal form. This can also help to understand the type of wave that persists after a wave bifurcation since, \evs{depending on boundary conditions, there may be} two types of waves that arise when a homogeneous steady state goes through that bifurcation, namely the \evs{{\em traveling wave}} and the \evs{{\em standing wave}} (see \cite{travellingorstanding,transition,wavelength}, and references therein). Such nonlinear analysis is beyond the scope of this paper, which has focused exclusively on linear calculations, but it will form the subject of future work by the present authors.

\bmhead{Acknowledgments}

The authors \evs{would} like to thank Andrew Krause of the University of Durham for helpful remarks.

E.~V-S. has received PhD funding from ANID, Beca Chile Doctorado en el extranjero, \evs{number} 72210071.

The authors have no conflicts of interest to declare.
%
%
%
%
%
%
%
%
%

\bibliography{sn-bibliography}

\end{document}